\newtheorem{theorem}{Theorem}[section]
\newtheorem{lemma}[theorem]{Lemma}
\newtheorem{corollary}[theorem]{Corollary}
\newtheorem{claim}{Claim}
\newtheorem{proposition}[theorem]{Proposition}
\newtheorem{fact}[theorem]{Fact}
\newcommand{\cA}{{\cal{A}}}
\newcommand{\cB}{{\cal{B}}}
\newcommand{\cG}{{\cal{G}}}
\newcommand{\cU}{{\cal{U}}}
\newcommand{\cV}{{\cal{V}}}
\newcommand{\select}{\ensuremath{\mathit{S}\xspace}}
\newcommand{\pe}{\ensuremath{\mathit{PE}\xspace}}
\newcommand{\ppe}{\ensuremath{\mathit{PPE}\xspace}}
\newcommand{\cppe}{\ensuremath{\mathit{CPPE}\xspace}}
\newcommand\doubleplus{\ensuremath{\mathbin{+\mkern-10mu+}}}
\begin{document}
\def\thefootnote{\fnsymbol{footnote}}

\title{Four Shades of Deterministic Leader Election in Anonymous Networks} 

	\author{
	Barun Gorain\footnotemark[1]
	\and Avery Miller\footnotemark[2]
	\and Andrzej Pelc\footnotemark[3]
}

\footnotetext[1]{Department of Electrical Engineering and Computer Science, Indian Institute of Technology Bhilai, India. {\tt barun@iitbhilai.ac.in}}
\footnotetext[2]{Department of Computer Science, University of Manitoba, Winnipeg, Manitoba, R3T 2N2, Canada. {\tt avery.miller@umanitoba.ca}. Supported by NSERC Discovery Grant RGPIN--2017--05936.}
\footnotetext[3]{D\'epartement d'informatique, Universit\'e du Qu\'ebec en Outaouais, Gatineau,
	Qu\'ebec J8X 3X7, Canada. {\tt pelc@uqo.ca}. Partially supported by NSERC Discovery Grant RGPIN--2018--03899
	and by the Research Chair in Distributed Computing at the
	Universit\'e du Qu\'ebec en Outaouais.}

\maketitle

\thispagestyle{empty}

\begin{abstract}
Leader election is one of the fundamental problems in distributed computing: a single node, called the leader, must be specified. This task can be formulated either in a weak way, where one node outputs {\em leader} and all other nodes output {\em non-leader}, or in a strong way, where all nodes must also learn which node is the leader. If the nodes of the network have distinct identifiers, then such an agreement means that all nodes have to output the identifier of the elected leader. For anonymous networks, the strong version of leader election requires that all nodes must be able to find a path to the leader, as this is the only way to identify it. In this paper, we study variants of deterministic leader election in arbitrary anonymous networks. 

Leader election is impossible in some anonymous networks, regardless of the allocated amount of time, even if nodes know the entire map of the network. This is due to possible symmetries in the network. However, even in networks in which it is possible to elect a leader knowing the map, the task may be still impossible without any initial knowledge, regardless of the allocated time. On the other hand, for any network in which leader election (weak or strong) is possible knowing the map, there is a minimum time, called the {\em election index}, in which this can be done. We consider four formulations of leader election discussed in the literature in the context of anonymous networks : one is the weak formulation, and the three others specify three different ways of finding the path to the leader in the strong formulation. Our aim is to compare the amount of initial information needed to accomplish each of these ``four shades'' of leader election in minimum time. Following the framework of {\em algorithms with advice}, this information (a single binary string) is provided to all nodes at the start by an oracle knowing the entire network. The length of this string is called the {\em size of advice}.  

We show that the size of advice required to accomplish leader election in the weak formulation in minimum time is exponentially smaller than that needed for any of the strong formulations. Thus, if the required amount of advice is used as a measure of the difficulty of the task,  the weakest version of leader election in minimum time is drastically easier than any version of the strong formulation in minimum time.
\end{abstract}

\newpage
\section{Introduction}

\subparagraph*{Background.} Leader election is one of the fundamental problems in distributed computing: a single node, called the leader, must be specified. This task
was first formulated in \cite{LL} in the study of local area token ring networks, where, at all times, exactly one node (the owner of a circulating token) is allowed to initiate
communication. When the token is accidentally lost, a leader must be elected as the initial owner of the token.

The task of leader election can be formulated either in a weak way, where one node outputs {\em leader} and all other nodes output {\em non-leader}, or in a strong way, where all nodes must also learn which node is the leader.
If the nodes of the network have distinct identifiers, then such an agreement means that all nodes have to output the identifier of the elected leader. In the labeled case, the weak and the strong version do not differ much: once a node knows that it is a leader, it can simply broadcast its identifier to all other nodes.
By contrast, for anonymous networks,
in the strong version of leader election all nodes must be able to find a path to the leader, as this is the only way to identify it. This turns out to be much more difficult.

In this paper, we study variants of deterministic leader election in arbitrary anonymous networks. 
In many
applications, even if nodes have distinct identities, they may decide to refrain from revealing them, e.g., for privacy or security reasons. Hence it is important to design leader election algorithms that do not rely on knowing distinct labels of nodes, and that can work in anonymous networks as well. This was done, e.g.,  in \cite{BSVCGS,DiPe,GMP,YK3}.

\subparagraph*{Model and Problem Description.} The network is modeled as a simple undirected  connected $n$-node graph with maximum degree $\Delta$.
Nodes do not have any identifiers.
On the other hand, we assume that, at each node $v$,
each edge incident to $v$ has a distinct {\em port number} from 
$\{0,\dots,d-1\}$, where $d$ is the degree of $v$. Hence, each edge has two corresponding port numbers, one at each of its endpoints. 
Port numbering is {\em local} to each node, i.e., there is no relation between
port numbers at  the two endpoints of an edge. Initially, each node has no knowledge of the network, apart from its own degree.

We use the extensively studied $\cal{LOCAL}$ communication model \cite{Pe}. In this model, communication proceeds in synchronous
rounds and all nodes start simultaneously. In each round, each node
can exchange arbitrary messages with all of its neighbors and perform arbitrary local computations.
It is well known that the synchronous process of the $\cal{LOCAL}$  model can be simulated in an asynchronous network using time-stamps. 

We now formulate precisely four versions of the leader election task, in increasing order of their strength.
The weakest version of all  is called {\bf Selection} and will be abbreviated by $\select$: one node of the network must output {\em leader} and all other nodes must output {\em non-leader}.
This is the basic version defined, e.g., in \cite{Ly}.
All other versions of leader election in anonymous networks require one node to output {\em leader} and require enabling all other nodes to find a path to the leader.
Arguably, the weakest way to do it is the following: each node outputs the first port number on a simple path from it to the leader. 
This will be called {\bf Port Election} and will be abbreviated by $\pe$.
Then every node can, for example, send a message to the leader that will be conveyed by using the port numbers on the resulting simple path.  This simple and natural way was never analyzed in detail in the context of leader election in anonymous networks, but was mentioned in
\cite{DiPe,GMP} as an alternative possibility. A stronger way is for each non-leader to output the {\em entire path} to the leader. This, in turn, can be done in two ways. In \cite{GMP},  a simple path from a node $v=v_0$ to the leader was coded as a sequence $(p_1,\dots, p_k)$ of port numbers, such that node $v_{i+1}$ is reached from $v_i$ by taking port $p_{i+1}$ and the path $(v_0,v_1,\dots ,v_k)$ is a simple path, where $v_k$ is the leader. This version will be called {\bf Port Path Election} and will be abbreviated by $\ppe$. Finally, in \cite{DiPe}, a simple path from a node to the leader was coded by listing all port numbers on the path, in their order of appearance. More precisely, every node $v$ must output a sequence $P(v)=(p_1,q_1,\dots,p_k,q_k)$ of nonnegative integers.
For each node $v$, let $P^*(v)$ be the simple path starting at $v$, such that port numbers $p_i$ and $q_i$ correspond to the $i$-th edge of $P^*(v)$, in the order from $v$ to the other end of this path.
All paths $P^*(v)$ must end at a common node, called the leader. This version will be called {\bf Complete Port Path Election} and will be abbreviated by $\cppe$. Notice that in the absence of port numbers, there would be no way to identify the elected leader by non-leaders, as all
ports, and hence all neighbors, would be indistinguishable to a node. Thus, the tasks $\pe$, $\ppe$ and $\cppe$ would be impossible to formulate.

In \cite{GMP}, there is a discussion comparing the above versions of leader election. The authors mention that Selection is sufficient for some tasks, e.g.,  if the leader has 
to broadcast a message to all other nodes, but insufficient for others, e.g., if all nodes have to send a message to the leader. For the latter task, Port Election is enough, as packets could be routed to the leader from node to node using only the local port that each node outputs. However, the authors argue that this holds only
if nodes want to cooperate with others by revealing the local port towards the leader when retransmitting packets. They observe that, in some applications, such cooperation may be uncertain, and even when it occurs, it may slow down transmission as the local port has to be retrieved from the memory of the relaying node. Putting the entire path to the leader as a header of the packet by the original sender
(i.e., using version $\ppe$ or $\cppe$) may in some cases speed up transmissions, because relaying may then be done at the router level.

The central notion in the study of anonymous networks is that of the view of a node \cite{YK3}. 
Let $G$ be any graph and $v$ a node in this graph. The {\em view} from $v$ in $G$, denoted ${\cal V}(v)$, is the infinite tree of all finite paths in $G$, starting from node $v$ and coded as sequences
$(p_1,q_1,\dots,p_k,q_k)$  of port numbers, where $p_i, q_i$ are the port numbers corresponding to the $i$-th edge of the path, in the order starting at the root $v$, with the rooted tree structure defined by the prefix relation of sequences.  The {\em truncated view} $\cV^l(v)$ is the truncation of ${\cal V}(v)$  to level $l$, for each $l$.

%

The information that $v$ gets about the graph in $r$ rounds
is precisely the truncated view $\cV^r(v)$, together with degrees of leaves of this tree. Denote by $\cB^r(v)$ the truncated view $\cV^r(v)$ whose leaves are labeled by their degrees in the graph, and call it the {\em augmented truncated view} at depth $r$.
If no additional knowledge is provided {\em a priori} to the nodes, the decisions of a node $v$ in round $r$ in any deterministic algorithm are a function of $\cB^r(v)$.
The {\em time} of (any version of) leader election in a given graph is the minimum number of rounds sufficient to complete it by all nodes of this graph.

Unlike in labeled networks, if the network is anonymous then leader election is sometimes impossible, regardless of the allocated time, even if the topology of the network is known to nodes and even in the weakest version, i.e., Selection.
This is due to symmetries, and the simplest example is the two-node graph. It follows from \cite{YK3} that if nodes know the map of the graph
(i.e., its isomorphic copy with all port numbers indicated)
then leader election is possible if and only if views of all nodes are distinct. This holds for all four versions of the leader election task discussed above. We will call such networks {\em feasible} and restrict attention to them. However, even in the class of 
feasible networks, even Selection is impossible without any {\em a priori} knowledge about the network.  On the other hand, for any fixed feasible network $G$, whose map is given to the nodes,
and for any version $Z \in \{\select,\pe,\ppe,\cppe\}$ of leader election, there is a minimum time, called
the $Z$-{\em index} of $G$ and denoted by $\psi_Z(G)$, in which version $Z$ of leader election can be completed on $G$. For example, $\psi_S(G)=0$ if and only if $G$ contains a node whose degree is unique. On the other hand, if $G$ is the 3-node line with ports $0,0,1,0$ from left to right, then $\psi_{\cppe}(G)=1$.

We observe that the election tasks defined above form a hierarchy with respect to their election indexes. In particular, note that if $\cppe$ can be solved in $G$ in $k$ rounds, then the non-leaders can simply output the outgoing ports of their output sequence in order to solve $\ppe$ at the end of $k$ rounds. Further, if $\ppe$ can be solved in $k$ rounds, then the non-leaders can output the first outgoing port of their output sequence in order to solve $\pe$ at the end of $k$ rounds. Finally, if $\pe$ can be solved in $k$ rounds, then the non-leaders can simply output `non-leader' in order to solve $\select$ at the end of $k$ rounds. Hence we have the following fact.

\begin{fact}\label{indexHierarchy}
	$\psi_{\cppe}(G) \geq \psi_{\ppe}(G) \geq \psi_{\pe}(G) \geq \psi_{\select}(G)$ for any graph $G$.
\end{fact}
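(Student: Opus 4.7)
The plan is to prove each of the three inequalities by a straightforward output-transformation argument: starting from an algorithm that solves the stronger task in $k$ rounds, I derive an algorithm that solves the weaker task in the same $k$ rounds, without altering any communication. Since only the final outputs are post-processed, the number of rounds is preserved, so the minimum time (and thus the election index) of the weaker task can only be smaller or equal.

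For $\psi_{\cppe}(G) \ge \psi_{\ppe}(G)$, I would take an optimal $\cppe$ algorithm in which each non-leader $v$ outputs a sequence $(p_1,q_1,\ldots,p_k,q_k)$ encoding the simple path $P^*(v)$ to the leader, with $p_i$ being the port at the $v$-side endpoint of the $i$-th edge. Letting $v$ instead output $(p_1,\ldots,p_k)$ yields precisely a valid $\ppe$ encoding of the same simple path by the definitions. For $\psi_{\ppe}(G) \ge \psi_{\pe}(G)$, I would further truncate each non-leader's output to just the first coordinate $p_1$, which by definition is the first port on a simple path from $v$ to the leader, and hence a valid $\pe$ output. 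For $\psi_{\pe}(G) \ge \psi_{\select}(G)$, I would discard port outputs entirely and let every non-leader output \emph{non-leader}, while the unique node that outputs \emph{leader} under the $\pe$ algorithm continues to do so; the resulting assignment satisfies the $\select$ specification.

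There is no substantive obstacle here; the argument is essentially a formalization of the remark preceding the fact. The only care required is verifying at each step that the transformed output genuinely meets the weaker task's specification, which follows immediately from the definitions of $\cppe$, $\ppe$, $\pe$, and $\select$ given above.
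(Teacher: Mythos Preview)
Your proposal is correct and follows essentially the same approach as the paper: the paragraph immediately preceding the fact gives precisely these three output-transformation reductions (drop the $q_i$'s, then keep only the first port, then replace port outputs by \emph{non-leader}). There is nothing to add.
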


Our aim is to compare the amount of information needed to accomplish each of these ``four shades'' of leader election in minimum time. In order to avoid ``comparing apples to oranges'', we should make comparisons between any versions $A$ and $B$ of leader election for graphs in which the minimum time to accomplish version
$A$ is equal to the minimum time to accomplish version $B$. In other words, in order to prove, e.g.,  that the amount of information needed to accomplish version $\pe$ in minimum time is much larger than that required to accomplish version $\select$ in minimum time, we have to show that for all graphs $G$ version $S$ in time 
$\psi_{\select}(G)$ can be accomplished using a small amount of information, but there is a class $\cal G$ of graphs $G$ for which $\psi_{\select}(G)=\psi_{\pe}(G)$, and for which accomplishing version $\pe$ in time $\psi_{\pe}(G)$ requires a much larger amount of initial information. 

Following the framework of {\em algorithms
	with advice}, see, e.g.,  \cite{DP,EFKR,FGIP,FKL,FP,IKP,SN}, information (a unique binary string) is provided to all nodes at the start by an oracle knowing the entire network. The length of this string is called the {\em size of advice}.  It should be noted that, since the advice given to all nodes is the same, this information does not increase the asymmetries 
of the network (unlike in the case when different pieces of information could be given to different nodes) but only helps to take advantage of the existing asymmetries and use them to elect the leader.

The paradigm of algorithms with advice has been proven very important in the domain of network algorithms. Establishing a strong lower bound on the minimum size of advice sufficient to accomplish a given task implies that entire classes of algorithms can be ruled out. For example, one of our results shows that, for some class of graphs, Selection in minimum time requires advice of size polynomial in the maximum degree of the graph. This permits to eliminate potential Selection algorithms relying  only on knowing the maximum degree of the network, as this information is a piece of advice of size logarithmic in this maximum degree. 
Lower bounds on the size of advice
give us impossibility results based strictly on the \emph{amount} of initial knowledge available to nodes. Hence this is a quantitative approach.
This is much more general than the traditional approach that could be called ``qualitative'', based on
specific {categories} of information given to nodes, such as the size, diameter, or maximum node degree.

\subparagraph*{Our results.} 
We show that the size of advice needed to accomplish leader election in the weakest formulation, i.e., Selection, in minimum time, is exponentially smaller than that needed for any of the strong formulations, i.e., Port Election, Port Path Election, or Complete Port Path Election. More precisely, we show that this minimum size of advice for Selection is polynomial in the maximum degree $\Delta$ for all graphs, but, for each $Z \in\{\pe,\ppe,\cppe\}$ and for sufficiently large $\Delta$ and $k$, there exists a class ${\cal C} (Z)$ of graphs of maximum degree $\Delta$ such that  $\psi_S(G)=\psi_{Z}(G)=k$ for all graphs $G$ in ${\cal C} (Z)$, and the size of advice required to accomplish the task $Z$ in minimum time, for some graph of class ${\cal C} (Z)$, is exponential in $\Delta$.

It should be stressed that, while accomplishing $\cppe$ obviously implies accomplishing $\ppe$ which in turn implies accomplishing $\pe$, the above three separations between $\select$ and any $Z$ in $\{\pe,\ppe,\cppe\}$ must be proved separately. For example, the fact that there exists a  class ${\cal C} (\pe)$ of graphs of maximum degree $\Delta$ such that  $\psi_S(G)=\psi_{\pe}(G)=k$ for all graphs $G$ in ${\cal C} (\pe)$, and the size of advice required to accomplish $\pe$ in minimum time, for some graph of class ${\cal C} (\pe)$, is exponential in $\Delta$, does not necessarily imply the same statement when $\pe$ is replaced by $\ppe$, because for graphs $G$ in the class ${\cal C} (\pe)$, we could have $\psi_{\ppe}(G)$ much larger than $\psi_{\select}(G)$. Indeed, the class of graphs that we construct to prove a lower bound on advice size for the $\pe$ task is different than the class of graphs that we use for the $\ppe$ and $\cppe$ tasks. Actually, the construction for the tasks PPE and CPPE is more difficult than for the task PE because it seems more difficult to reconcile small election index with the need of large advice in the case of PPE and CPPE than in the case of PE.


From the technical standpoint, our main contributions are constructions showing lower bounds on the size of advice needed to accomplish various versions of leader election in minimum time.
These lower bounds are a crucial tool to show separations of difficulty between the  weakest version of leader election (i.e., Selection) and the three strong versions.

\subparagraph*{Related work.}
Early papers on leader election focused on the scenario 
with distinct labels. Initially, it was investigated for rings in the message passing model.
A synchronous algorithm based on label comparisons was given in \cite{HS},  using
$O(n \log n)$ messages.  In \cite{FL}, the authors proved that
this complexity is optimal for comparison-based algorithms, while they showed
a leader election algorithm using only a linear number of messages but running in very large time.
An asynchronous algorithm using $O(n \log n)$ messages was given, e.g., in \cite{P}, and
the optimality of this message complexity was shown in \cite{B}. Leader election was also investigated for radio networks,
both in the deterministic \cite{JKZ,KP,NO} and in the randomized \cite{Wil} scenarios.
In \cite{HKMMJ}, leader election for labeled networks was
studied using mobile agents.

Many authors \cite{An,AtSn,ASW,BSVCGS,BV,FP1,YK2,YK3} studied leader election
in anonymous networks. In particular, \cite{BSVCGS,YK3} characterize message-passing networks in which
leader election is feasible. In \cite{YK2}, the authors study
leader election in general networks, under the assumption that node labels exist but are
not unique. 
In  \cite{DoPe,FKKLS},  the authors
study message complexity of leader election in rings with possibly
nonunique labels. 
Memory needed for leader election in unlabeled networks was studied in \cite{FP}. 
In \cite{DP1}, the authors investigated the feasibility of leader election among anonymous agents that
navigate in a network in an asynchronous way.

Providing nodes or agents with arbitrary types of knowledge that can be used to increase efficiency of solutions to network problems 
was previously 
proposed in \cite{AKM01,DP,EFKR,FGIP,FIP1,FIP2,FKL,FP,FPR,GPPR02,IKP,KKKP02,KKP05,MP,SN,TZ05}. This approach was referred to as
{\em algorithms with advice}.  
The advice is given either to the nodes of the network or to mobile agents performing some task in a network.
In the first case, instead of advice, the term {\em informative labeling schemes} is sometimes used if (unlike in our scenario) different nodes can get different information.

Several authors studied the minimum size of advice required to solve
network problems in an efficient way. 
In \cite{FIP1}, the authors compared the minimum size of advice required to
solve two information dissemination problems using a linear number of messages. 
In \cite{FKL}, it was shown that advice of constant size given to the nodes enables the distributed construction of a minimum
spanning tree in logarithmic time. 
In \cite{EFKR}, the advice paradigm was used for online problems.
In the case of \cite{SN}, the issue was not efficiency but feasibility: it
was shown that $\Theta(n\log n)$ is the minimum size of advice
required to perform monotone connected graph clearing.
In \cite{FPR}, the authors studied the problem of topology recognition with advice given to the nodes.

Among papers studying the impact of information on the time of leader election, the papers \cite{DiPe,GMP,MP} are closest to the present work.
In \cite{MP}, the authors investigated the minimum size of advice sufficient to find the largest-labelled node in a graph, all of whose nodes have distinct labels.
They compared the task of selection with that of election requiring all nodes to know the identity of the leader.
The main difference between  \cite{MP} and the present paper is that we consider networks without node labels. This is a fundamental difference:
breaking symmetry in anonymous networks relies heavily on the structure of the graph, rather than on labels, and, as far as results
are concerned, much more advice is needed for a given allocated time.

The authors of  \cite{GMP} studied leader election under the advice paradigm for anonymous networks, but they restricted attention to trees. They studied the version that we call $\ppe$ and established upper and lower bounds on the size of advice for various allocated time values. On the other hand, authors of \cite{DiPe} investigated leader election in arbitrary anonymous networks. They used the version that we call $\cppe$ and studied the minimum size of advice to accomplish it both in minimum possible time and for much larger time values allocated to leader election. The different versions of leader election studied in
\cite{DiPe,GMP,MP} inspired us to investigate comparisons of their difficulty measured by the required size of advice.

\section{Solving Selection in minimum time}\label{sec1}
In this section, we prove tight upper and lower bounds on the size of the advice needed to solve Selection on any graph $G$ in time $\psi_{\select}(G)$.

\subsection{Upper Bound}
 First, we show that if $\select$ is solvable using $k$ rounds in a graph $G$, then there must be a node in $G$ whose augmented truncated view $\cB^k(v)$ is unique.

\begin{proposition}\label{uniquekview}
	For any graph $G$ and any positive integer $k$, suppose that there exists an algorithm $\mathcal{A}$ that solves $\select$ using $k$ rounds. At the end of every execution, the node $u$ that outputs 1 must satisfy $\cB^{k}(u) \ne \cB^{k}(w)$, for all $w \in V(G) \setminus\{u\}$.
\end{proposition}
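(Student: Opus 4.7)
My plan is to argue by contradiction. Fix an execution of $\mathcal{A}$ on $G$ in which $u$ outputs leader, and suppose, for contradiction, that some distinct node $w$ satisfies $\cB^k(u) = \cB^k(w)$. I will deduce that $w$ must also output leader, violating the $\select$ specification that exactly one node does so.

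The first step is to upgrade the hypothesis to $\cB^r(u) = \cB^r(w)$ for every $0 \leq r \leq k$. The point is that $\cB^r(v)$ is computable from $\cB^k(v)$: truncate $\cB^k(v)$ to depth $r$, and label each resulting leaf with the number of children it had in $\cB^k(v)$. By the definition of the view, the number of children of a vertex in the view equals the degree of the corresponding graph vertex (one child per incident edge, via the port leading to it), so this relabelling really does yield the augmented truncated view at depth $r$. Applying this to both $u$ and $w$ gives the desired equality for every $r \leq k$.

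The second step invokes the standard indistinguishability principle for the $\cal{LOCAL}$ model that is recalled in the preliminaries: the behaviour of a deterministic algorithm at a node $v$ through round $r$ is a function of $\cB^r(v)$ (and of any advice string, which is common to all nodes and therefore identical at $u$ and $w$). Combining with the first step, the internal states, the messages sent and received, and the outputs of $u$ and $w$ must coincide throughout rounds $1, \ldots, k$. In particular, since $u$ outputs leader by round $k$, so does $w$, contradicting $\select$ and completing the proof.

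I do not anticipate any real obstacle. The only step worth checking carefully is the recovery of $\cB^r(v)$ from $\cB^k(v)$, since the augmented views carry degree labels on their leaves and one must verify that those labels are determined by the deeper view. Once this is in place, the rest is a direct application of the view-based symmetry reasoning that is by now standard for anonymous networks.
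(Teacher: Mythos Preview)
Your proof is correct and follows the same contradiction strategy as the paper: assume a second node shares the leader's augmented truncated view at depth $k$, invoke that a deterministic node's output after $k$ rounds is a function of $\cB^k(\cdot)$, and conclude both nodes output 1. Your first step (recovering $\cB^r$ from $\cB^k$ for all $r\le k$) is correct but unnecessary here, since the paper simply uses that the final output depends only on $\cB^k(v)$, without tracking intermediate rounds.
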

\begin{proof}
	To obtain a contradiction, assume that there exists an execution $e$ of $\mathcal{A}$ such that a node $u$ outputs 1 and there exists a node $u'$ such that $\cB^{k}(u) = \cB^{k}(u')$. Since each node $v$'s output is a function that depends only on $\cB^{k}(v)$, it follows that, in execution $e$, node $u'$ also outputs 1. This contradicts the correctness of $\mathcal{A}$ since, to solve $\select$, exactly one node must output 1.
\end{proof}

To solve $\select$ in time $\psi_{\select}(G)$ with advice, we specify an oracle that picks a node $u$ whose augmented truncated view is unique (as guaranteed by Proposition \ref{uniquekview}), and provides as advice to all nodes the augmented truncated view of $u$. Our distributed algorithm consists of each node computing its own augmented truncated view and comparing it to the advice they receive from the oracle. The unique node whose view matches the advice outputs 1, and all other nodes output 0. This gives us the following upper bound on the size of advice sufficient to solve $\select$.

\begin{theorem}\label{S-ub}
	There exists a distributed algorithm that solves $\select$ in every graph $G$ whose election index is finite, uses $\psi_{\select}(G)$ communication rounds, and uses advice of size at most $O((\Delta-1)^{\psi_{\select}(G)} \log \Delta)$, where $\Delta$ is the maximum degree of nodes in $G$.
\end{theorem}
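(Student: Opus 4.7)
The plan is to formalize the informal description immediately preceding the theorem. Let $k=\psi_{\select}(G)$, which is finite by assumption, so some algorithm solves $\select$ on $G$ in $k$ rounds, and Proposition~\ref{uniquekview} yields a node $u$ with $\cB^{k}(u)\neq \cB^{k}(w)$ for every $w\neq u$. The oracle, which knows $G$ entirely, picks such a $u$ and outputs as advice a binary encoding of the pair $(k,\cB^{k}(u))$. Every node $v$ then runs the standard view-gathering procedure in the $\cal{LOCAL}$ model for $k$ rounds, after which $v$ holds $\cB^{k}(v)$. It decodes the advice, tests whether $\cB^{k}(v)=\cB^{k}(u)$, and outputs \emph{leader} if the test succeeds and \emph{non-leader} otherwise. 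Correctness is immediate: by the choice of $u$, exactly one node, namely $u$, sees a match.

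The heart of the argument is the advice-size bound. A direct encoding of $\cB^{k}(u)$ as a rooted edge-labelled tree would have up to $\Theta(\Delta^{k})$ nodes, which is too large. I would instead use a compressed representation that, at every non-root node $w$ of the tree, omits the child of $w$ corresponding to sending the walk back to $w$'s parent through the same port by which $w$ was first reached. By induction on depth, this backtrack subtree is determined, as an edge-labelled rooted tree with leaf-degree labels, by parts of the tree already recorded (specifically, by the subtree hanging off $w$'s parent truncated by two additional levels), so nothing is lost; equivalently, each node $v$ can compress its own $\cB^{k}(v)$ in the same way and compare the resulting strings directly.

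In the compressed tree, the root has at most $\Delta$ children, while every other internal node has at most $\Delta-1$ children because one of its ports is the entry port from its parent. The number of nodes at depth $i\ge 1$ is therefore at most $\Delta(\Delta-1)^{i-1}$, and the total number of nodes is $O((\Delta-1)^{k})$. Encoding the two port labels on each edge and the degree of each leaf uses $O(\log \Delta)$ bits per node, which yields the advertised bound $O((\Delta-1)^{k}\log \Delta)$; the self-delimiting prefix encoding of $k$ contributes only $O(\log k)$ additional bits and is absorbed. The main technical obstacle is making the losslessness claim rigorous, i.e., writing out the induction on $k$ showing that the compressed encoding uniquely determines $\cB^{k}(u)$ and that distinct full views compress to distinct strings, so that the equality test performed by the algorithm is well-defined.
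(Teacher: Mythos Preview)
Your proposal is correct and follows the same oracle-and-compare approach as the paper. In fact you are more careful than the paper itself: the paper simply asserts that $\cB^{k}(u)$ has at most $\Delta(\Delta-1)^{k-1}$ edges without justification, whereas you correctly observe that the full (backtracking) view can have $\Theta(\Delta^{k})$ nodes and that the non-backtracking compression---which is indeed lossless by the induction you sketch---is what actually yields the stated $O((\Delta-1)^{k}\log\Delta)$ bound.
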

\begin{proof}
	We specify an oracle and algorithm pair that solves $\select$ in every graph $G$ whose election index is finite. 
	
	By Proposition \ref{uniquekview}, we know that there exists at least one node $u$ whose augmented truncated view $\cB^{\psi_{\select}(G)}(u)$ is unique. Among all such nodes, the oracle chooses the node $u$ whose $\cB^{\psi_{\select}(G)}(u)$ is lexicographically smallest. The oracle encodes $\cB^{\psi_{\select}(G)}(u)$ as a binary string $A$ using at most $O((\Delta-1)^{\psi_{\select}(G)} \log \Delta)$ bits (this is possible since there are most $\Delta\cdot(\Delta-1)^{(\psi_{\select}(G)-1)}$ edges in this view, and each edge's two port numbers can be encoded using $O(\log \Delta)$ bits). This binary string $A$ is provided as advice to all nodes in the network.
	
	Our distributed algorithm works as follows: each node decodes  the augmented truncated view encoded in the provided advice $A$, and calculates the height $h$ of this view. Then, using $h$ communication rounds, each node $w$ calculates $\cB^{h}(w)$. Finally, each node compares its $\cB^{h}(w)$ with the augmented truncated view encoded in $A$. If these are equal, then the node outputs 1, and outputs 0 otherwise. Correctness is guaranteed by the fact that the augmented truncated view encoded in $A$ by the oracle is equal to $\cB^{h}(u)$ for exactly one node $u$ in the network. The number of communication rounds used is equal to the height $h$ of the augmented truncated view encoded in $A$, i.e., $\psi_{\select}(G)$.
\end{proof}

\subsection{Lower bound}
In this section, we prove a tight lower bound on the size of advice needed to solve $\select$ in minimum time. In particular, for arbitrary positive integers $\Delta \geq 3$ and $k \geq 1$, we construct a class of graphs $\cG_{\Delta,k}$ in which each graph has maximum degree $\Delta$ and has finite $\select$-index $k$ such that every deterministic distributed algorithm solving $\select$ in graphs of this class requires advice of size at least $\Omega((\Delta-1)^k \log \Delta)$. 

\subsubsection{Construction of $\cG_{\Delta,k}$}\label{constructGDeltak}

Consider any positive integers $\Delta \geq 3$ and $k \geq 1$. Our construction involves various building blocks, which we present in an incremental fashion.

{\bf Building Block 1: Rooted Tree $T$.} We define a rooted tree $T$ of height $k$ whose root $r$ has degree $\Delta-2$, and all other internal nodes have degree $\Delta$ (i.e., $\Delta-1$ children and one parent). The ports at the root leading to the root's children are labeled $1,\ldots,\Delta-2$. For each internal node other than the root, the port leading to its parent is labeled 0, and the ports leading to its children are labeled $1,\ldots,\Delta-1$. Let $z$ denote the number of leaves in $T$, and note that $z = (\Delta-2)\cdot(\Delta-1)^{k-1}$. 

{\bf Building Block 2: Augmented Trees.} Using the rooted tree $T$, we construct a large set of trees $\mathcal{T}_{\Delta,k}$ by attaching new nodes to each leaf of $T$. In particular, let $\ell_1,\ldots,\ell_z$ be the leaves of $T$, indexed in increasing order using the lexicographic ordering of the sequence of ports leading from $r$ to each leaf. We construct a tree $T_X$ for each sequence $X = (x_1,\ldots,x_z)$ of $z$ positive integers such that $1 \leq x_i \leq \Delta-1$ by attaching $x_i$ degree-one nodes to $\ell_i$ for each $i \in \{1,\ldots,z\}$. The ports at each $\ell_i$ leading to its new children are labeled $1,\ldots,x_i$. The set $\mathcal{T}_{\Delta,k}$ is defined to be the set of all such trees $T_X$. Note that the number of trees in $\mathcal{T}_{\Delta,k}$ is the number of different sequences $X$ described above, i.e., $|\mathcal{T}_{\Delta,k}| = (\Delta-1)^z$ where $z = (\Delta-2)\cdot(\Delta-1)^{k-1}$.


{\bf Building Block 3: Augmented Trees with Appended Paths} For each tree $T_X \in \mathcal{T}_{\Delta,k}$, create two new trees $T_{X,1}$ and $T_{X,2}$. The tree $T_{X,1}$ is constructed by taking a copy of $T_X$ and creating a new path of length $k+1$ starting at its root $r$. In particular, the new path consists of nodes $r,p_1,\ldots,p_{k+1}$, the ports at $r$ and $p_{k+1}$ on this path are labeled 0, and, for each $i \in \{1,\ldots,k\}$, the port at $p_i$ leading to $p_{i-1}$ is labeled 1, and the port at $p_i$ leading to $p_{i+1}$ is labeled 0. The tree $T_{X,2}$ is similar: take a copy of $T_{X,1}$, but swap the port labels at $p_k$ on the newly-created path so that the port at $p_k$ leading to $p_{k-1}$ (or $r$, if $k=1$) is labeled 0, and the port at $p_k$ leading to $p_{k+1}$ is labeled 1. 
See Figure \ref{TXs} for an illustration of the trees $T_{X,1}$ and $T_{X,2}$.

To make the notation cleaner, we will often index the trees of $\mathcal{T}_{\Delta,k}$ using integers rather than sequences of integers. To enable this, we order the trees of $\mathcal{T}_{\Delta,k}$ as $T_1,\ldots,T_{|\mathcal{T}_{\Delta,k}|}$, in increasing lexicographic order of the integer sequence $X$ used to generate each tree. For each $j \in \{1,\ldots,|\mathcal{T}_{\Delta,k}|\}$, we denote by $r_{j,1}$ the root node of tree $T_{j,1}$, and we denote by $r_{j,2}$ the root node of the tree $T_{j,2}$.

\begin{figure}[h!]
	
	\begin{minipage}{.5\linewidth}
		\centering
		\includegraphics[scale=0.4]{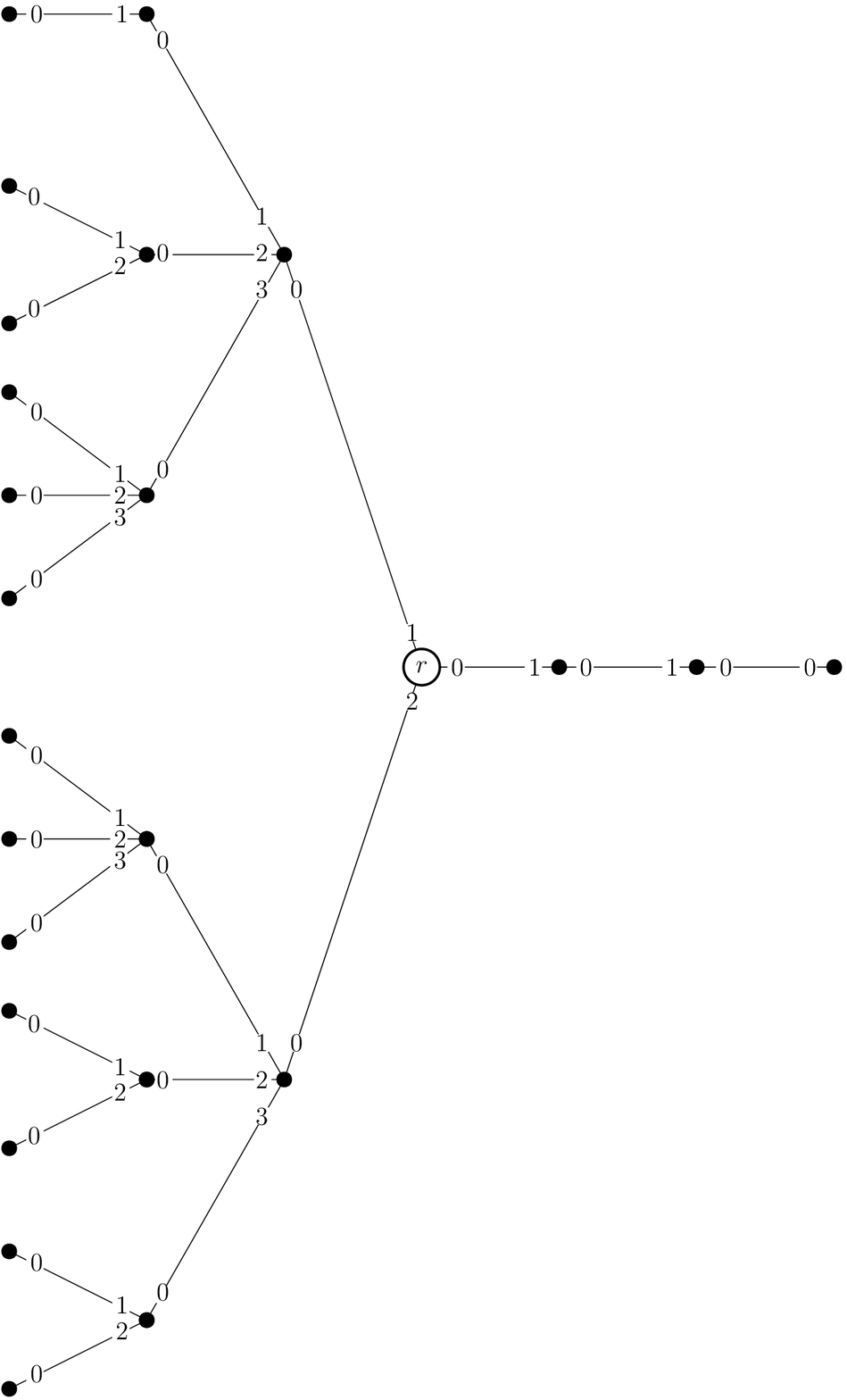} 
	\end{minipage}%
	\begin{minipage}{.5\linewidth}
		\centering
	\includegraphics[scale=0.4]{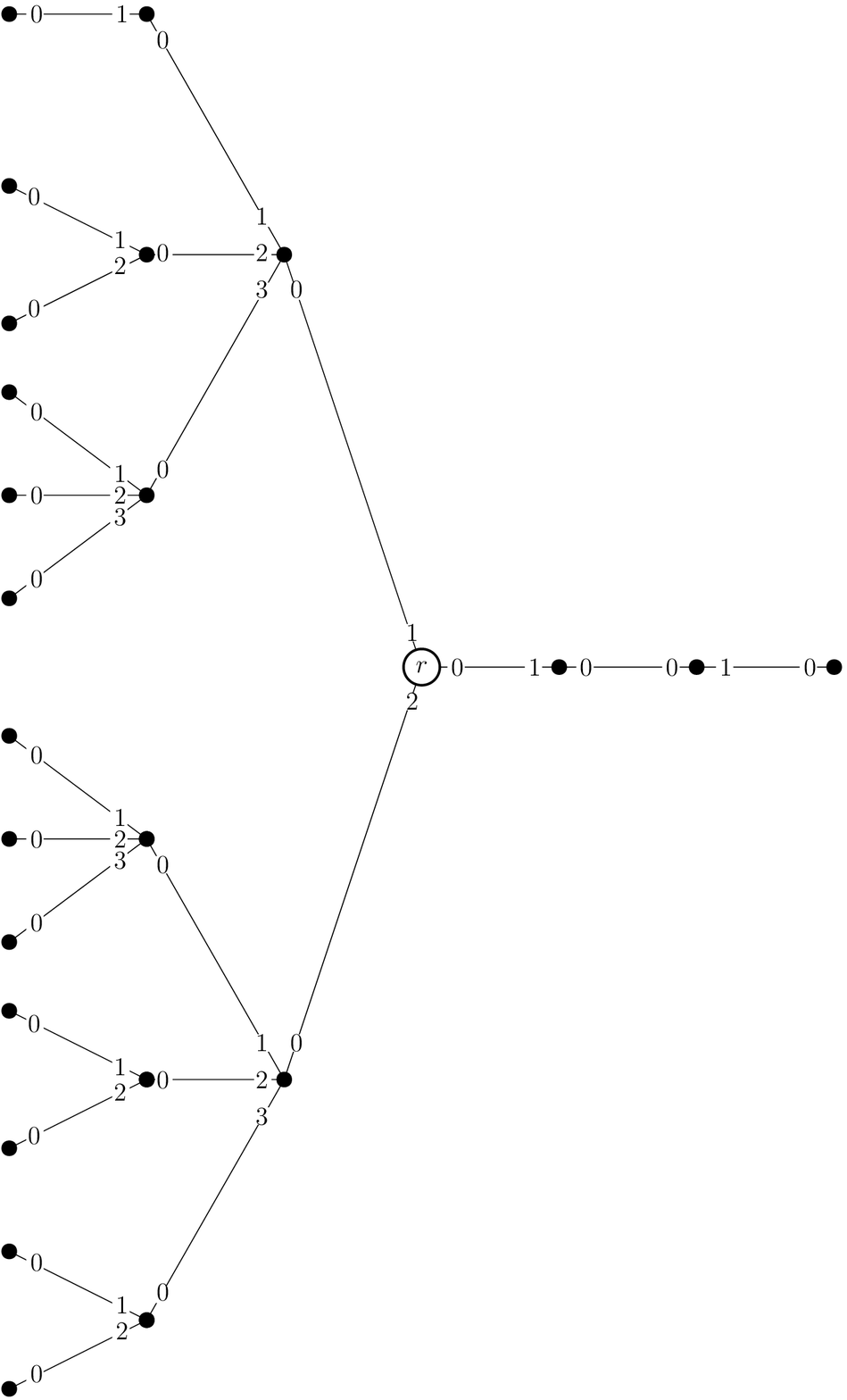} 
	\end{minipage}
	\caption{The tree $T_{X,1}$ (left) and $T_{X,2}$ (right) when $k=2$, $\Delta=4$, and $X = (1,2,3,3,2,2)$}
	\label{TXs}
\end{figure}

{\bf Final Construction of $\cG_{\Delta,k}$.} The class $\cG_{\Delta,k}$ consists of graphs $\{G_1,\ldots,G_{|\mathcal{T}_{\Delta,k}|}\}$, where each $G_i$ is constructed by taking the disjoint union of the following graphs: the tree $T_{i,2}$, two copies of each tree $T_{j',2}$ for $j' \in \{1,\ldots,i-1\}$, two copies of each tree $T_{j,1}$ for $j \in \{1,\ldots,i\}$, and a cycle $C_i$ of $4i-1$ nodes $c_1,\ldots,c_{4i-1}$ with ports alternately labeled 0 and 1. Further, we add the following edges: for each $j \in \{1,\ldots,i\}$, we add an edge between $c_{4j-3}$ and the root node $r_{j,1}$ in the first copy of $T_{j,1}$, an edge between $c_{4j-2}$ and the root node $r_{j,1}$ in the second copy of $T_{j,1}$, an edge between $c_{4j-1}$ and the root node $r_{j,2}$ in the first copy of $T_{j,2}$, and, for each $j' \in \{1,\ldots,i-1\}$, an edge between $c_{4j'}$ and the root node $r_{j',2}$ in the second copy of $T_{j',2}$. For each of these added edges, the port at the cycle node is labeled 2, and the port at the $r$ node is labeled $\Delta-1$. See Figure \ref{Gis} for an illustration of the graph $G_i$. From the description of the construction, we can verify the following calculation of the number of graphs in the class.
\begin{figure}[h]
	\centering
	\includegraphics[scale=0.7]{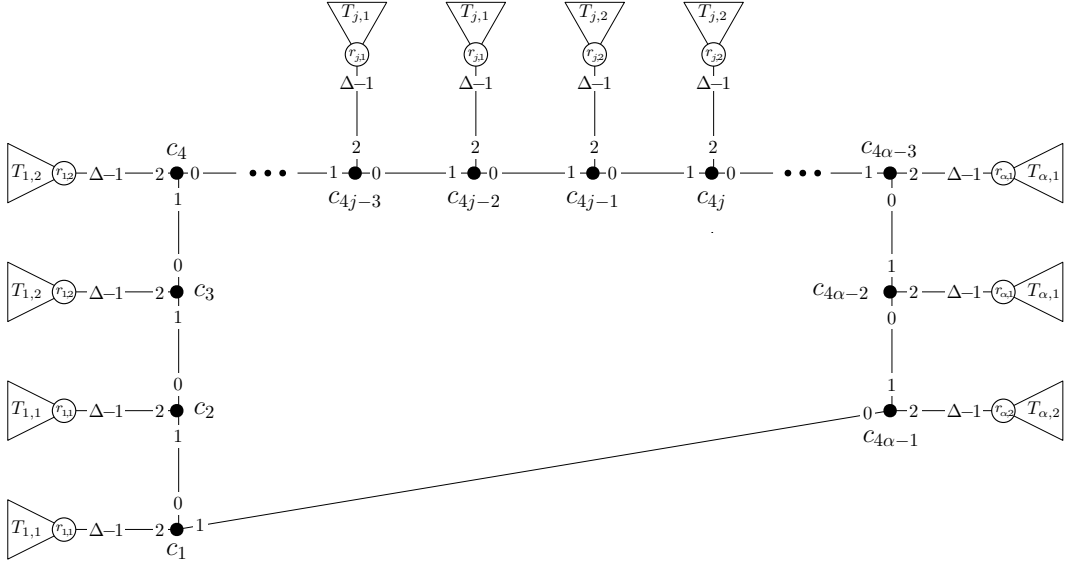}
	\caption{The graph $G_i$}
	\label{Gis}
\end{figure}


\begin{fact}\label{classCountG}
	$|\cG_{\Delta,k}| = |\mathcal{T}_{\Delta,k}| = (\Delta-1)^{(\Delta-2)\cdot(\Delta-1)^{k-1}}$ for any positive integers $\Delta \geq 3$ and $k \geq 1$.
\end{fact}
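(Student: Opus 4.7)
The plan is to verify this by directly unpacking the construction, since the statement is really a bookkeeping consequence of the definitions rather than a substantive claim.

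First I would establish the first equality, $|\cG_{\Delta,k}| = |\mathcal{T}_{\Delta,k}|$. This is immediate from the definition of $\cG_{\Delta,k}$: the construction explicitly produces exactly one graph $G_i$ for each index $i \in \{1,\ldots,|\mathcal{T}_{\Delta,k}|\}$, and distinct indices yield non-isomorphic graphs (for example because $G_i$ contains the cycle $C_i$ of length $4i-1$, so different values of $i$ force different numbers of vertices).

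Next I would count the leaves of the rooted tree $T$. Since $T$ has height $k$, the root has $\Delta-2$ children, and every internal node at depth $1 \le d < k$ has $\Delta-1$ children, a straightforward induction on depth gives that the number of nodes at depth $d$ is $(\Delta-2)(\Delta-1)^{d-1}$. Setting $d = k$ yields $z = (\Delta-2)(\Delta-1)^{k-1}$ leaves, which matches the value of $z$ used in Building Block 1.

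Finally I would count $|\mathcal{T}_{\Delta,k}|$. By Building Block 2, each tree in $\mathcal{T}_{\Delta,k}$ is determined by a sequence $X = (x_1,\ldots,x_z) \in \{1,\ldots,\Delta-1\}^z$, and distinct sequences give rise to distinct augmented trees (since, for each leaf $\ell_i$ of $T$, the number of degree-one children attached to $\ell_i$ in $T_X$ equals $x_i$, so $X$ is recoverable from $T_X$). Therefore $|\mathcal{T}_{\Delta,k}| = (\Delta-1)^z = (\Delta-1)^{(\Delta-2)(\Delta-1)^{k-1}}$, completing the proof. There is no real obstacle here; the only thing to be mildly careful about is justifying why distinct parameters produce non-isomorphic objects, so that the "counting" upper bounds on $|\mathcal{T}_{\Delta,k}|$ and $|\cG_{\Delta,k}|$ are actually attained.
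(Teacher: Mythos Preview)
Your proposal is correct and matches the paper's approach: the paper simply states that the count ``can be verified from the description of the construction'' without giving any further argument, so your write-up is in fact more detailed than what appears there. Your extra care in checking that distinct sequences $X$ yield distinct trees $T_X$ and that distinct indices $i$ yield distinct graphs $G_i$ is a welcome addition, since the paper leaves this implicit.
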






\subsubsection{Lower bound proof}

The idea behind the lower bound is to prove that, in each graph $G_i \in \cG_{\Delta,k}$, the root node $r_{i,2}$ of $T_{i,2}$ is the only node that has a unique truncated augmented view at depth $k$. At a high level, this is because only the roots of $T_{1,1},T_{1,1},T_{1,2},T_{1,2},\ldots,T_{i,1},T_{i,1},T_{i,2}$ can ``see'' far enough to determine which tree they are in, and, since there are two copies of each tree other than $T_{i,2}$, each root other than the root of $T_{i,2}$ has a ``twin'' that has the exact same view. The fact that the root of $T_{i,2}$ has no ``twin'' implies that it has a unique view up to depth $k$, which is the key fact that is used to prove that the $\select$-index of each $G_i \in \cG_{\Delta,k}$ is $k$. These ideas are formalized in the following results.

\begin{proposition}\label{k-1Tjbviews}
	For any $h \in \{0,\ldots,k-1\}$, any $j,j' \in \{1,\ldots,|\mathcal{T}_{\Delta,k}|\}$, and any $b,b' \in \{1,2\}$, we have that $\cB^{h}(r_{j,b})$ in $T_{j,b}$ is equal to $\cB^{h}(r_{j',b'})$ in $T_{j',b'}$.
\end{proposition}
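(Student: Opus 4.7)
The plan is to show that the $(k-1)$-hop neighborhood of $r_{j,b}$ in $T_{j,b}$, viewed as a port-labeled graph with degree annotations taken in $T_{j,b}$, is literally identical across every choice of $j$ and $b$. Since $\cB^h(r_{j,b})$ for $h \leq k-1$ is determined entirely by this neighborhood (together with the degrees at its frontier), the conclusion will then follow for all $h \in \{0, \ldots, k-1\}$. A clean way to formalize this is by induction on $h$, using that $\cB^h(v)$ is determined recursively by the port/degree signature of $v$ together with the multiset of subviews $\cB^{h-1}(w)$ at neighbors $w$, and verifying that this recursion unrolls identically in all $T_{j,b}$.

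The key observation driving the proof is that each of the two ways in which different $T_{j,b}$ can differ only manifests at distance at least $k$ from $r$. First, the sequence $X$ affects only the pendant nodes attached to the leaves $\ell_1, \ldots, \ell_z$ of the base tree $T$, and these leaves lie at depth exactly $k$ from $r$. Second, the swap distinguishing $b=1$ from $b=2$ alters port labels only at the path node $p_k$, which is also at distance $k$ from $r$. Consequently any walk of length $\leq k-1$ starting at $r$ is confined to the substructure consisting of the internal nodes of $T$ together with the path nodes $p_1, \ldots, p_{k-1}$, and this substructure (including all of its port labels) is the same in every $T_{j,b}$.

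To close the argument, I need to check that the degree annotations carried by the leaves of the truncated view also match. Any such annotation corresponds to a node $u$ reached from $r$ by a walk of length exactly $h \leq k-1$, and by the previous paragraph $u$ must be either the root $r$ (degree $\Delta-1$, from its $\Delta-2$ children in $T$ plus the edge to $p_1$), an internal non-root node of $T$ (degree $\Delta$), or one of $p_1, \ldots, p_{k-1}$ (degree $2$); none of these degrees depends on $j$ or $b$. I do not expect a real obstacle here—the whole argument is a careful unpacking of the construction—but the most delicate step is the boundary case $h = k-1$, where the view first reaches the parents of the leaves of $T$. The crucial point there is that these parents have degree $\Delta$ in $T_{j,b}$ regardless of $X$ (one parent port and $\Delta-1$ child ports to leaves, which are not yet revealed to be pendant-endowed at this depth), so $X$ really is invisible at depth $k-1$.
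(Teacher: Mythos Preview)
Your proposal is correct and takes essentially the same approach as the paper's proof. Both argue that the port-labeled structure and all node degrees within distance $k-1$ of the root are independent of $j$ and $b$, because the sequence $X$ only affects the leaves $\ell_i$ at depth $k$ (and their pendants), while the choice of $b$ only affects the port swap at $p_k$, also at distance $k$; your treatment of the degree annotations at the frontier is in fact slightly more explicit than the paper's.
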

\begin{proof}
	First, we note that it is sufficient to prove the desired result for $h=k-1$, since any two augmented truncated views that are equal at some depth $d$ are also equal at any depth less than $d$.
	
	From our construction, we note that for any integer sequence $X$, each node within distance $k-1$ from the root of the augmented tree $T_X$ is defined to have $\Delta-1$ children, and the ports leading to these children are labeled $1,\ldots,\Delta-1$. Moreover, each non-root node within distance $k-1$ from the root of the augmented tree $T_X$ is defined to have a port labeled 0 leading to its parent in $T_X$. Thus, regardless of the sequence $X$ used to construct $T_X$, the augmented truncated view at depth $k-1$ of the root of $T_X$ is always the same. 
	
	Next, we recall from the definition of every $T_{j,1}$ that the (ordered) appended path of nodes $r_{j,1},p_1,\ldots,p_{k-1}$ is labeled such that the port leading to the next node is labeled 0 and the port leading to the previous node is labeled 1. Similarly, from the definition of every $T_{j,2}$ we know that the (ordered) appended path of nodes $r_{j,2},q_1,\ldots,q_{k-1}$ is labeled such that the port leading to the next node is labeled 0 and the port leading to the previous node is labeled 1. In particular, this means that the augmented truncated view at depth $k-1$ of node $r_{j,b}$ on the appended path is the same for any $j \in \{1,\ldots,|\mathcal{T}_{\Delta,k}|\}$ and any $b \in \{1,2\}$.
	
	For each $j \in \{1,\ldots,|\mathcal{T}_{\Delta,k}|\}$ and $b \in \{1,2\}$, each $T_{j,b}$ consists of some $T_X$ with an appended path. So, the above observations are sufficient to conclude that the root node of $T_{j,b}$ has the same augmented truncated view at depth $k-1$ regardless of the values of $j$ and $b$.
\end{proof}

\begin{lemma}\label{rootsandcyclenonunique}
	For any integers $k \geq 1$ and $\Delta \geq 3$, for any $G_\alpha,G_\beta \in \cG_{\Delta,k}$ with $\alpha \leq \beta$, and for each $h \in \{0,\ldots,k\}$, both of the following statements hold:
	\begin{enumerate}
		\item $\cB^{h}(c_{m})$ in $G_\alpha$ is equal to $\cB^{h}(c_{m'})$ in $G_\beta$ for all $m,m' \in \{1,\ldots,4\alpha-1\}$, and,
		\item if $h < k$, then $\cB^{h}(r_{j,b})$ in $G_\alpha$ is equal to $\cB^{h}(r_{j',b'})$ in $G_\beta$ for all $j,j' \in \{1,\ldots,\alpha\}$ and all $b,b' \in \{1,2\}$.
	\end{enumerate}
\end{lemma}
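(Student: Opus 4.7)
My plan is to prove both parts simultaneously by induction on $h$, relying on the standard recursive description of the augmented truncated view: $\cB^h(v)$ is determined by $\deg(v)$ together with the multiset of port-indexed $\cB^{h-1}$-views of $v$'s neighbors. The base case $h=0$ is immediate from the construction: every cycle node in every $G_i$ has degree $3$ (two cycle neighbors via ports $0,1$ and one root neighbor via port $2$), and every root $r_{j,b}$ has degree $\Delta$ (the $\Delta-2$ tree children via ports $1,\dots,\Delta-2$, the path node $p_1$ via port $0$, and a cycle node via port $\Delta-1$). Since $\cB^0$ is just the degree, both statements hold at $h=0$ with dependence only on the role (cycle node versus root) and not on the graph.

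For the inductive step of Part~1 at $h\le k$, a cycle node $c_m$ has three neighbors: two cycle nodes (ports $0,1$) and one root (port $2$). By the induction hypothesis for Part~1 at step $h-1$ and for Part~2 at step $h-1$ (available since $h-1<k$ whenever $h\le k$), the $\cB^{h-1}$-views of all three neighbors depend only on their roles, hence so does $\cB^h(c_m)$. One subtle point here is that in $G_\beta$ with $\alpha<\beta$, the cycle neighbors of $c_m$ may have indices outside $\{1,\dots,4\alpha-1\}$, so I would actually phrase the induction hypothesis to apply to \emph{all} cycle nodes in \emph{all} graphs; the statement of the lemma then follows as a restriction.

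The hard part will be the inductive step for Part~2 at $h\le k-1$. After disposing of the cycle neighbor by Part~1 at step $h-1$, there remain the $\Delta-2$ tree-side children of $r$ in $T$ and the appended-path node $p_1$, and I still need their $\cB^{h-1}$-views to be uniform across graphs. To handle this I would introduce an auxiliary claim: for every node $v$ lying on the tree side of any root $r$ -- that is, inside an augmented tree $T_X$ at distance $d\ge 1$ from $r$, or on an appended path at index $d$ -- and every $h'\ge 0$ satisfying $d+h'\le k-1$, the view $\cB^{h'}(v)$ depends only on the role of $v$. The threshold $d+h'\le k-1$ is sharp: it prevents any walk witnessed by the view from reaching a node whose local structure depends on $X$, $j$, or $b$, namely the $X$-dependent leaves of $T$ at depth $k$, the degree-one pendants at depth $k+1$, or the path node $p_k$ whose port at its distinguishing edge is $1$ in $T_{j,1}$ but $0$ in $T_{j,2}$. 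The auxiliary claim is proved by a nested induction on $h'$: the base case uses only that non-leaf tree-internal nodes have degree $\Delta$ and path nodes $p_1,\dots,p_{k-1}$ have degree $2$, and the inductive step expands along $v$'s neighbors, invoking Proposition~\ref{k-1Tjbviews} for walks that stay inside $T_{j,b}$ and the already-established Part~1 and Part~2 of the main induction for walks that loop back through $r$ into the cycle.

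Finally, applying the auxiliary with $d=1$ and $h'=h-1$ gives $d+h' = h \le k-1$, so the auxiliary applies and yields uniform $\cB^{h-1}$-views for $p_1$ and the $\Delta-2$ tree-side children of $r$, completing Part~2 of the main induction. The main technical obstacle is calibrating the depth bound in the auxiliary claim so that it is simultaneously strong enough to exclude every $X$-, $j$-, and $b$-dependent landmark of the construction and weak enough to accommodate every recursive call generated by both the main and the nested inductions; once this alignment is fixed, every inductive step reduces to the local uniformity of the construction.
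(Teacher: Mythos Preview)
Your proposal is correct and follows the same simultaneous-induction skeleton as the paper: base case via degrees, Part~1 via the three-neighbour decomposition of a cycle node, Part~2 via separating the cycle neighbour from the tree-side neighbours of a root. The difference lies in how Part~2 handles the tree side. The paper does not introduce an auxiliary claim about arbitrary tree-side nodes; instead it observes that $\cB^{h}(r_{j,b})$ in $G_\alpha$ is determined by $\cB^{h}(r_{j,b})$ computed in the standalone tree $T_{j,b}$ together with $\cB^{h-1}(c_m)$ in $G_\alpha$ (the depth-$h$ structure of $T_{j,b}$ fixes where copies of the root recur, and hence where truncated copies of the cycle view must be grafted on), and then invokes Proposition~\ref{k-1Tjbviews} once. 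Your auxiliary claim with its nested induction effectively re-proves and extends Proposition~\ref{k-1Tjbviews} to views computed in $G_\alpha$ rather than in $T_{j,b}$; this is more laborious but also makes the treatment of walks that loop back through $r$ and into the cycle fully explicit, whereas the paper's ``consists of'' decomposition is stated rather tersely. One minor imprecision: in your nested induction the inductive step should invoke the auxiliary hypothesis at $(d\pm 1,h'-1)$ for tree-side neighbours and Part~2 at $h'-1$ when the neighbour is the root; Proposition~\ref{k-1Tjbviews} itself is not actually needed there. You also correctly flag (and fix by strengthening the hypothesis to all cycle nodes in all graphs) the index-range subtlety in Part~1---that cycle neighbours of $c_{m'}$ in $G_\beta$ may fall outside $\{1,\ldots,4\alpha-1\}$---which the paper glosses over.
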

\begin{proof}
	Fix arbitrary $G_\alpha,G_\beta \in \cG_{\Delta,k}$ such that $\alpha \leq \beta$. The proof of the two required statements proceeds by simultaneous induction on $h$. In the base case, we note that each $r_{j,b}$ in $G_\alpha$ has degree $\Delta$ and each $r_{j',b'}$ in $G_\beta$ has degree $\Delta$, so the second statement holds for $h=0$. Similarly, each $c_m$ in $G_\alpha$ has degree 3 and each $c_{m'}$ in $G_\beta$ has degree 3, so the first statement holds for $h=0$. As induction hypothesis, assume that both of the following statements hold for some $h \in \{1,\ldots,k\}$: 
	\begin{enumerate}
		\item $\cB^{h-1}(c_{m})$ in $G_\alpha$ is equal to $\cB^{h-1}(c_{m'})$ in $G_\beta$ for all $m,m' \in \{1,\ldots,4\alpha-1\}$, and,
		\item if $h < k$, then $\cB^{h-1}(r_{j,b})$ in $G_\alpha$ is equal to $\cB^{h-1}(r_{j',b'})$ in $G_\beta$ for all $j,j' \in \{1,\ldots,\alpha\}$ and all $b,b' \in \{1,2\}$.
	\end{enumerate}
	First, we set out to prove that $\cB^{h}(c_{m})$ in $G_i$ is equal to $\cB^{h}(c_{m'})$ in $G_i$ for all $m,m' \in \{1,\ldots,4\alpha-1\}$. In what follows, it is assumed that arithmetic in subscripts ``wraps around'', i.e., $c_{m+1} = c_1$ when $m=4\alpha-1$, and $c_{m-1} = c_{4\alpha-1}$ when $m=1$, and similarly for $c_{m'+1}$ and $c_{m'-1}$. The proof proceeds by noticing from our construction that $\cB^{h}(c_m)$ in $G_\alpha$ consists of: the view $\cB^{h-1}(c_{m-1})$ in $G_\alpha$, the view $\cB^{h-1}(c_{m+1})$ in $G_\alpha$, and the view $\cB^{h-1}(r_{j,b})$ in $T_{j,b}$ for some $j \in \{1,\ldots,\alpha\}$ and $b \in \{1,2\}$, with edges connecting the roots of these views  to $c_m$. In particular, the edge between $c_{m}$ and $r_{j,b}$ is labeled $2$ at $c_{m}$ and $\Delta-1$ at $r_{j,b}$, the edge between $c_{m}$ and $c_{m-1}$ is labeled $1$ at $c_{m}$ and $0$ at $c_{m+1}$, and the edge between $c_{m}$ and $c_{m+1}$ is labeled 0 at $c_{m}$ and $1$ at $c_{m+1}$. Similarly, we know from our construction that $\cB^{h}(c_{m'})$ in $G_\beta$ consists of: the view $\cB^{h-1}(c_{m'-1})$ in $G_\beta$, the view $\cB^{h-1}(c_{m'+1})$ in $G_\beta$, and the view $\cB^{h-1}(r_{j',b'})$ in $T_{j',b'}$ for some $j' \in \{1,\ldots,\alpha\}$ and $b' \in \{1,2\}$, with edges connecting the roots of these views to $c_{m'}$. In particular, the edge between $c_{m'}$ and $r_{j',b'}$ is labeled $2$ at $c_{m'}$ and $\Delta-1$ at $r_{j',b'}$, the edge between $c_{m'}$ and $c_{m'-1}$ is labeled $1$ at $c_{m'}$ and $0$ at $c_{m'+1}$, and the edge between $c_{m'}$ and $c_{m'+1}$ is labeled 0 at $c_{m'}$ and $1$ at $c_{m'+1}$.
	By the first statement in the induction hypothesis, we know that the view $\cB^{h-1}(c_{m-1})$ in $G_\alpha$ is equal to the view $\cB^{h-1}(c_{m'-1})$ in $G_\beta$, that the view $\cB^{h-1}(c_{m+1})$ in $G_\alpha$ is equal to the view $\cB^{h-1}(c_{m'+1})$ in $G_\beta$. Further, since $h-1 < k$, the second statement of the induction hypothesis tells us that the view $\cB^{h-1}(r_{j,b})$ in $T_{j,b}$ is equal to the view $\cB^{h-1}(r_{j',b'})$ in $T_{j',b'}$. It follows that $\cB^{h}(c_{m})$ in $G_\alpha$  is equal to $\cB^{h}(c_{m'})$ in $G_\beta$, as required.
	
	Next, suppose that $h < k$. We set out to prove that $\cB^{h}(r_{j,b})$ in $G_\alpha$ is equal to $\cB^{h}(r_{j',b'})$ in $G_\beta$ for all $j,j' \in \{1,\ldots,\alpha\}$ and all $b,b' \in \{1,2\}$. The key is to notice from our construction that $\cB^{h}(r_{j,b})$ in $G_\alpha$ consists of the view $\cB^{h}(r_{j,b})$ in $T_{j,b}$ and the view $\cB^{h-1}(c_m)$ in $G_\alpha$ for some $m \in \{1,\ldots,4\alpha-1\}$, with an edge joining the two roots of these views, and this edge is labeled $\Delta-1$ at endpoint $r_{j,b}$ and labeled 2 at endpoint $c_m$. Similarly, $\cB^{h}(r_{j',b'})$ in $G_\beta$ consists of $\cB^{h}(r_{j',b'})$ in $T_{j',b'}$ and $\cB^{h-1}(c_{m'})$ in $G_\beta$ for some $m' \in \{1,\ldots,4\alpha-1\}$, with an edge joining the two roots of these views, and this edge is labeled $\Delta-1$ at endpoint $r_{j',b'}$ and labeled 2 at endpoint $c_{m'}$. Since $h \leq k-1$, Proposition \ref{k-1Tjbviews} implies that $\cB^{h}(r_{j,b})$ in $T_{j,b}$ is equal to $\cB^{h}(r_{j',b'})$ in $T_{j',b'}$. Further, by the first statement of the induction hypothesis, we know that $\cB^{h-1}(c_{m})$ in $G_\alpha$ is equal to $\cB^{h-1}(c_{m'})$ in $G_\beta$. It follows that $\cB^{h}(r_{j,b})$ in $G_\alpha$  is equal to $\cB^{h}(r_{j',b'})$ in $G_\beta$, as required.
\end{proof}

\begin{lemma}\label{onlyunique}
	For any graph $G_i \in \cG_{\Delta,k}$, the root $r_{i,2}$ of $T_{i,2}$ is the only node $u \in V(G_i)$ that satisfies the property $\cB^k(u) \neq \cB^k(u')$ for all $u' \in V(G_i) \setminus \{u\}$.
\end{lemma}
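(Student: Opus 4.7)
The proof is naturally split into two directions: showing $\cB^k(r_{i,2})$ is unique, and showing every other node in $G_i$ has a structural ``twin'' sharing its depth-$k$ augmented view.

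For the twin-exhibiting direction, I would classify the nodes of $V(G_i) \setminus \{r_{i,2}\}$ by their location. Cycle nodes immediately inherit twins from Lemma~\ref{rootsandcyclenonunique} (first statement with $h=k$, $\alpha=\beta=i$). A root $r_{j,b}^{(t)}$ of a tree copy other than $T_{i,2}$ can be paired with its twin-copy root $r_{j,b}^{(3-t)}$: the tree-side subviews agree because the two copies are port-isomorphic, while the cycle-side subview is $\cB^{k-1}$ of the attached cycle node, and these $\cB^{k-1}$-values coincide across all cycle nodes by Lemma~\ref{rootsandcyclenonunique}. For a non-root node $u$ lying inside some $T_{j,b}^{(t)}$ whose twin copy is present in $G_i$, I would pair $u$ with the corresponding node $u'$ in $T_{j,b}^{(3-t)}$ and prove $\cB^h(u)=\cB^h(u')$ by induction on $h \in \{0,\ldots,k\}$, unwinding the recursive definition of the augmented view and invoking Lemma~\ref{rootsandcyclenonunique}'s second statement (applicable since $h-1 \le k-1 < k$) at the step that first reaches the root. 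The subtlest case is a non-root node in the lone copy $T_{i,2}$: a tree-part node at depth $d \ge 1$ from $r_{i,2}$ gets paired with its analogue in $T_{i,1}^{(1)}$ because the only structural difference between $T_{i,2}$ and $T_{i,1}$ is the port swap at $p_k$, which sits at distance $d+k \ge k+1$ from $u$ and is therefore invisible to the depth-$k$ view; a path node $p_s$ gets paired with the corresponding $p_s$ in $T_{j,2}^{(t)}$ for some $j<i$, because both appended paths share the swap at $p_k$ and the tree-part content reachable within depth $k$ from a path node consists only of internal nodes of $T$, which are common to every sequence $X$.

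For the uniqueness direction I would apply a degree filter first: any node whose degree differs from $\Delta$ has $\cB^0$ different from $\cB^0(r_{i,2})$, hence different $\cB^k$. The remaining degree-$\Delta$ candidates split into three families: roots of the various tree copies; non-root, non-leaf internal nodes of $T$; and leaves $\ell_s$ with $x_s=\Delta-1$. For the latter two families, the port-$0$ neighbor is the parent inside $T$ (degree $\Delta$), while $r_{i,2}$'s port-$0$ neighbor is $p_1$ (degree $2$), so already $\cB^1$ separates these nodes from $r_{i,2}$. For a root $r_{j,b}^{(t)}$ with $(j,b)\ne(i,2)$, I would split on $j$: if $j\ne i$ then $X_i\ne X_j$, and some $\ell_s$ has a different augmented degree $1+x_s$ in $T_{X_i}$ than in $T_{X_j}$, showing up as a distinct boundary-leaf degree at a fixed position in the depth-$k$ view; if $j=i$ and $b=1$, the length-$k$ walk along the appended path from the root terminates at $p_k$, and its final edge carries different port labels in $T_{i,2}$ and $T_{i,1}$ precisely because of the swap at $p_k$.

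The hardest parts are the inductive argument for non-root nodes in twin-copy trees, where one must show $\cB^k(u)=\cB^k(u')$ even though the recursion eventually descends into $\cB^{k-1}$ of the root (which is why Lemma~\ref{rootsandcyclenonunique}'s threshold $h<k$ is exactly what one needs), and the distance bookkeeping for the lone-copy case, where one must confirm that the distinguishing edge $(p_{k-1},p_k)$ lies strictly beyond the depth-$k$ view of every candidate non-root node in $T_{i,2}$ paired across trees.
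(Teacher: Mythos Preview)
Your overall strategy and case decomposition match the paper's: partition nodes by location, exhibit twins for non-$r_{i,2}$ nodes using the other copy of their tree together with Lemma~\ref{rootsandcyclenonunique}, and distinguish $r_{i,2}$ from every other candidate via the port-sequence discrepancy along the appended path (for $b=1$ versus $b=2$) and the boundary-degree discrepancy at the depth-$k$ leaves (for $j\neq i$). The uniqueness direction you outline, including the degree filter and the port-$0$-neighbour test, is essentially the paper's argument reorganised.

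You are actually more careful than the paper in one respect: you explicitly address non-root nodes of the lone copy $T_{i,2}$, a sub-case the paper's written ``exhaustive list of cases'' omits entirely (its three bullets cover only $T_{j,b}$ with $(j,b)\neq(i,2)$, the cycle $C_i$, and the single node $r_{i,2}$). Your pairing of a tree-part node of $T_{i,2}$ with its analogue in a copy of $T_{i,1}$ is sound, since the port swap at $p_k$ sits at distance at least $k+1$ from any such node and the cycle-side subviews match by Lemma~\ref{rootsandcyclenonunique}. However, your proposed pairing of an appended-path node $p_s$ in $T_{i,2}$ with the corresponding $p_s$ in some $T_{j,2}$ for $j<i$ breaks down at $i=1$: the graph $G_1$ contains no tree $T_{j,2}$ with $j<1$, so no such twin is available there. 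This boundary case would require a separate argument (and, as you may notice on closer inspection, it is genuinely delicate---the swap at $p_k$ is visible from every $p_s$ in $T_{1,2}$, so pairing with a node of $T_{1,1}$ does not work either).
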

\begin{proof}
	Let $v$ be an arbitrary node in $V(G_i)$. 

	 We separately consider the following exhaustive list of cases, which come directly from the various building blocks used in the construction of $G_i$.
\begin{itemize}
	
	\item Suppose that node $v$ is contained in a tree $T_{j,b}$ such that $j \neq i$ or $b \neq 2$.
	
	We prove that there is another node $v'$ in $G_i$ that has the same truncated view at depth $k$. From the construction of $G_i$, if $j \neq i$ or $b \neq 2$, then there are two copies of $T_{j,b}$ in $G_i$. 
	Let $v'$ be the corresponding copy of node $v$ that is contained in the other copy of $T_{j,b}$ (i.e., the $T_{j,b}$ that does not contain $v$). 
	Let $r$ be the root of the $T_{j,b}$ containing $v$, and 
	let $r'$ be the root of the $T_{j,b}$ containing $v'$. 
	By construction, $r$ is connected by an edge $e$ to some node $c_m \in C_i$, the port number at $r$ on this edge is $\Delta-1$, and the port number at $c_m$ on this edge is 2. 
	Similarly, by construction, $r'$ is connected by an edge $e'$ to some node $c_{m'} \in C_i$, with $m' \neq m$, the port number at $r'$ on this edge is $\Delta-1$, and the port number at $c_{m'}$ on this edge is 2. 
	
	Consider each root-to-leaf path $\pi$ in the view $\cB^k(v)$. There are two cases to consider:
	\begin{itemize}
		\item If all nodes in $\pi$ are contained in $T_{j,b}$, then the same path appears in the view $\cB^k(v')$ since $v'$ is the corresponding copy of $v$ in the other copy of $T_{j,b}$.
		\item If there exists a node in $\pi$ that is not in $T_{j,b}$, then consider the first such node $w$ along the path starting from the root of $\cB^k(v)$. By the construction of $G_i$, the only node in $T_{j,b}$ that has a neighbour outside of $T_{j,b}$ is $r$, and this neighbour is $c_m$. It follows that $w = c_m$, and the parent of $w$ is $r$. Let $\pi_{in}$ denote the prefix of the path $\pi$ that is entirely contained in $T_{j,b}$ (i.e., starting at $v$ and ending at $r$), and let $\pi_{out}$ denote the remainder of the path (i.e., starting at $c_m$ until the leaf of $\pi$). Since $v'$ is the corresponding copy of $v$ in the other copy of $T_{j,b}$, the same path prefix $\pi_{in}$ appears in the view $\cB^k(v')$ from $v'$ to $r'$. As observed above, by construction, $r'$ is connected by an edge $e'$ to $c_{m'}$ and is labeled with the same port numbers as the edge $e = \{r,c_m\}$. Finally, by Lemma \ref{rootsandcyclenonunique}, the views $\cB^{|\pi_{out}|}(c_m)$ and $\cB^{|\pi_{out}|}(c_{m'})$ are the same, so the path suffix $\pi_{out}$ appears in the view $\cB^k(v')$ as well. This concludes the proof that the entire path $\pi$ appears as a root-to-leaf path in $\cB^k(v')$ as well.
	\end{itemize}
By symmetry (swapping the roles of $v$ and $v'$), each root-to-leaf path $\pi'$ in $\cB^k(v')$ appears as a root-to-leaf path in $\cB^k(v)$, which concludes the proof that $\cB^k(v)=\cB^k(v')$.
	

	\item Suppose that $v \in C_i$.
	
	We prove that there is at least one other node $v'$ in $G_i$ that has the same truncated view at depth $k$. By Lemma \ref{rootsandcyclenonunique} with $\alpha=\beta=i$, we know that $\cB^{k}(c_{m}) = \cB^{k}(c_{m'})$ for all $m,m' \in \{1,\ldots,4\alpha-1\}$, which means that $\cB^{k}(v)$ is the same for all nodes in $v \in C_i$.
	
	\item Suppose that $v$ is the root $r_{i,2}$ of $T_{i,2}$.
	
	We prove that no other node $v'$ in $G_i$ has the same truncated view at depth $k$ as $v$. We consider all possible cases for $v'$:
	\begin{itemize}
		\item Suppose that $v' \in C_i$.
		In this case, by construction, $v'$ has degree 3, and has two neighbours with degree 3 (its neighbours in $C_i$), and one neighbour with degree $\Delta \geq 3$ (the root of some $T_{j,b}$). However, $v = r_{i,2}$ has at least one neighbour with degree 2, i.e., its neighbour in the appended path. Thus, $\cB^{1}(v) \neq \cB^{1}(v')$, which implies that $\cB^{k}(v) \neq \cB^{k}(v')$.
		
		\item Suppose that $v' = r_{j,b}$ for some $j \neq i$ or $b \neq 2$.
		First, consider the case where $b \neq 2$, i.e., $b=1$. By construction, the tree $T_{j,1}$ has an appended path of length $k+1$ starting at its root $r_{j,1} = v'$, and the port sequence $\pi$ along this path in $\cB^{k}(v')$ has its first port equal to 0 and its final port equal to 1. However, by construction, the tree $T_{i,2}$ has an appended path starting at its root $r_{i,2} = v$, and the port sequence along this path in $\cB^{k}(v)$ has first port equal to 0 and its final port equal to 0. (This is precisely the reason why the construction of $T_{X,2}$ swaps the port numbers at node $p_k$ in the appended path.) Since no other port at $v$ is labeled 0, it follows that the port sequence $\pi$ does not exist in $\cB^{k}(v)$ starting at $v$. This proves that $\cB^{k}(v) \neq \cB^{k}(v')$.
		
		Next, consider the case where $b=2$ and $j \neq i$. Since $j \neq i$, it follows from our construction that $T_{i,2}$ is built using a tree $T_{X}$ and $T_{j,2}$ is built using a tree $T_{Y}$ where $X=(x_1,\ldots,x_z)$ and $Y=(y_1,\ldots,y_z)$ are distinct sequences of integers in the range $1,\ldots,\Delta-1$. Recall that $T_{X}$ has height $k+1$, has $z$ nodes at distance $k$ from its root $r_{i,2}$, and the $i^{th}$ node at distance $k$ from the root has $x_i$ degree-1 neighbours. (The ordering of nodes at distance $k$ is based on the lexicographic ordering of the sequence of ports leading from the root to each such node.) Similarly, $T_{Y}$ has height $k+1$, has $z$ nodes at distance $k$ from its root $r_{j,2}$, and the $i^{th}$ node at distance $k$ from the root has $y_i$ degree-1 neighbours. As $X$ and $Y$ are distinct, there exists an index $i$ such that $x_i \neq y_i$. Therefore, in the augmented view $\cB^{k}(r_{i,2})$, there is a path with port sequence $\pi$ starting at $r$ whose other endpoint is labeled with $x_i$, and, in the augmented view $\cB^{k}(r_{j,2})$, the path with the same port sequence $\pi$ starting at $r$ exists, but the other endpoint is labeled with $y_i \neq x_i$. This proves that $\cB^{k}(r_{i,2}) \neq \cB^{k}(r_{j,2})$, i.e., $\cB^{k}(v) \neq \cB^{k}(v')$.
		
		\item Suppose that $v' \in T_{j,b}$ for some $j \in \{1,\ldots,i\}$, $b \in \{1,2\}$, and $v' \neq r_{j,b}$.
	     By construction, every root-to-leaf path in the tree $T_{j,b}$ has length exactly $k+1$. As $v=r_{j,b}$, it follows that $\cB^{k}(v)$ contains no node that has degree 1 in $G_i$. Moreover, as $v'$ is assumed to be a node in $T_{j,b}$ other than the root $r_{j,b}$, the distance from $v'$ to a leaf node in $T_{j,b}$ is at most $k$. Therefore, $\cB^{k}(v')$ contains a node $w$ that has degree 1 in $G_i$. It follows that $\cB^{k}(v) \neq \cB^{k}(v')$.
	\end{itemize}
	
\end{itemize}
\end{proof}

%
%

\begin{lemma}
$\psi_{\select}(G_i)=k$ for any graph $G_i \in \cG_{\Delta,k}$.
\end{lemma}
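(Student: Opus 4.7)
The plan is to establish the equality $\psi_{\select}(G_i) = k$ by proving the two inequalities separately. The upper bound $\psi_{\select}(G_i) \leq k$ follows directly from Lemma \ref{onlyunique}: since $r_{i,2}$ has an augmented truncated view at depth $k$ that is unique among all nodes of $G_i$, the generic advice-based algorithm of Theorem \ref{S-ub} elects $r_{i,2}$ in $k$ rounds using an encoding of $\cB^{k}(r_{i,2})$ as advice.

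For the lower bound $\psi_{\select}(G_i) \geq k$, I would apply the contrapositive of Proposition \ref{uniquekview}: it suffices to show that no node of $G_i$ has a unique augmented truncated view at depth $k-1$, which I plan to do by producing for every $v$ a twin $v' \neq v$ with $\cB^{k-1}(v) = \cB^{k-1}(v')$. The argument splits by the structural role of $v$. If $v \in C_i$, the first statement of Lemma \ref{rootsandcyclenonunique} (with $\alpha = \beta = i$ and $h = k-1$) immediately gives that every other cycle node is a twin. If $v = r_{j,b}$ is a tree root with $j \leq i$, then since $k-1 < k$, the second statement of the same lemma gives that every other root has the same $\cB^{k-1}$.

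If $v$ is a non-root node of some tree $T_{j,b}$ with $(j,b) \neq (i,2)$, then $G_i$ contains a second isomorphic copy of $T_{j,b}$, and the image of $v$ under the natural tree isomorphism serves as the twin; I would verify agreement of $\cB^{k-1}$ by combining this tree isomorphism with the matching root views from the previous case, just as in the first case of the proof of Lemma \ref{onlyunique}.

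The delicate case is a non-root node $v$ of the unique tree $T_{i,2}$. If $v$ lies in the $T_X$ portion (not on the appended path), then the distance from $v$ to $p_k$ through $r_{i,2}$ is at least $k+1$, so the swapped port labels at $p_k$---the only feature separating $T_{i,2}$ from $T_{i,1}$---cannot appear in $\cB^{k-1}(v)$, and the corresponding node in either copy of $T_{i,1}$ is a twin. If instead $v = p_j$ lies on the appended path, I would locate the twin in the appended path of a copy of some $T_{j',2}$ with $j' < i$, whose local path structure is identical up to the required depth. The hard part will be this last subcase: carefully tracking how far along the appended path the port swap at $p_k$ remains undetectable, and ensuring that the chosen path length $k+1$ is long enough to prevent $\cB^{k-1}(p_j)$ from spilling past $r_{i,2}$ into the $T_X$ portion, which could otherwise expose the index $i$.
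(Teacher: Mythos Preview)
Your proposal is essentially correct but works much harder than necessary on the lower bound, and the extra work introduces a small gap.

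The paper's lower bound is two lines: Lemma~\ref{onlyunique} asserts that $r_{i,2}$ is the \emph{only} node of $G_i$ whose $\cB^{k}$ is unique, so every other node already has a twin at depth $k$, hence automatically at depth $k-1$; and the remaining node $r_{i,2}$ is handled at depth $k-1$ by the second statement of Lemma~\ref{rootsandcyclenonunique}. Your case-by-case reconstruction of twins at depth $k-1$ is redundant once you are allowed to cite the statement of Lemma~\ref{onlyunique} (which covers non-root nodes of $T_{i,2}$ as well, even though you single them out as a ``delicate case'').

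Your reconstruction also has a genuine gap in that delicate case. For a node $p_j$ on the appended path of $T_{i,2}$ you propose to find its twin in some $T_{j',2}$ with $j'<i$; but when $i=1$ there is no such tree in $G_1$, so the argument fails there. Separately, your worry about $\cB^{k-1}(p_j)$ ``spilling past $r_{i,2}$ into the $T_X$ portion'' is misplaced: by Proposition~\ref{k-1Tjbviews} all augmented trees look identical from the root up to depth $k-1$, so such spillover cannot reveal the index~$i$. The cleanest fix is simply to invoke Lemma~\ref{onlyunique} directly, as the paper does.

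On the upper bound, note that $\psi_{\select}(G)$ is defined relative to nodes knowing a full map of $G$, not relative to advice. Citing Theorem~\ref{S-ub} is morally fine (the map subsumes any advice), but the paper gives a direct map-based algorithm: read $k$ off the map, compute $\cB^{k}(v)$, and compare it to the unique-view node located in the map. This keeps the argument inside the definition of the election index.
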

\begin{proof}
Consider an arbitrary graph $G_i \in \cG_{\Delta,k}$. We first observe that no node in $G_i$ has a unique augmented truncated view at depth $k-1$: Lemma \ref{rootsandcyclenonunique} proves that this is true for the root $r_{i,2}$ of $T_{i,2}$, and Lemma \ref{onlyunique} proves that this is true for all other nodes in $G_i$. Thus, by Proposition \ref{uniquekview}, it follows that $\psi_{\select}(G_i) \geq k$. 

Next, we give an algorithm that solves $\select$ in time $k$ in any $G_i \in \cG_{\Delta,k}$ given a map of the graph $G_i$. First, each node uses the map to deduce the value of $k$ by subtracting 2 from the shortest distance from a leaf in $G_i$ to a node in the cycle $C_i$. Then, using $k$ communication rounds, each node $v$ learns $\cB^k(v)$. Finally, each node $v$ finds, in the map of $G_i$, the node with the unique augmented truncated view at depth $k$ (Lemma \ref{onlyunique} guarantees that there is exactly one such node), and compares it to its own $\cB^k(v)$. If these match, then the node outputs 1, and otherwise outputs 0. This shows that $\psi_{\select}(G_i) \leq k$, which concludes the proof.
\end{proof}


To obtain a lower bound on the size of advice, we first observe that, using only its truncated augmented view at depth $k$, a root node $r_{j,2}$ of $T_{j,2}$ cannot determine whether it is in $G_j$ or some other graph $G_i$ with $i \neq j$. Using this fact, we show that, for any algorithm using insufficient advice, there exists a $G_i \in \cG_{\Delta,k}$ and an $r_{j,2}$ in $G_i$ with $i \neq j$ that is ``fooled'' into outputting 1, and that $r_{i,2}$ will also output 1, which proves that the algorithm does not solve $\select$.

\begin{lemma}\label{sameviewsameroots}
	For any $G_\alpha,G_\beta \in \cG_{\Delta,k}$ with $\alpha \leq \beta$, the view $\cB^{k}(r_{j,b})$ in $G_\alpha$ is the same as $\cB^{k}(r_{j,b})$ in $G_\beta$ for any $j \in \{1,\ldots,\alpha\}$ and $b \in \{1,2\}$.
\end{lemma}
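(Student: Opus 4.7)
The plan is to prove, by induction on $h \in \{0, 1, \ldots, k\}$, the strengthened claim that for \emph{every} node $w$ in the copy of $T_{j,b}$ under consideration, $\cB^h(w)$ in $G_\alpha$ equals $\cB^h(w)$ in $G_\beta$. Specializing to $h = k$ and $w = r_{j,b}$ then yields the lemma. The structural observation that makes the induction work is that $T_{j,b}$ is built by the same deterministic recipe regardless of the ambient graph, so its internal edges and port labels are identical in $G_\alpha$ and $G_\beta$; moreover, the only edge incident to a node of $T_{j,b}$ that leaves $T_{j,b}$ is the edge at the root $r_{j,b}$ using port $\Delta - 1$, which in $G_\alpha$ leads to some cycle node $c_m \in C_\alpha$ and in $G_\beta$ to some cycle node $c_{m'} \in C_\beta$, with the cycle endpoint carrying port $2$ in both cases.

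For the base case $h = 0$, every node of $T_{j,b}$ has the same degree in both graphs, so the claim holds trivially. For the inductive step, I would use the standard fact that $\cB^h(w)$ is determined, as a rooted edge-labeled tree, by the list (over edges $\{w, w'\}$ incident to $w$) of triples consisting of the two port labels on the edge together with $\cB^{h-1}(w')$. If $w \neq r_{j,b}$, then every neighbour $w'$ of $w$ lies inside $T_{j,b}$, so applying the induction hypothesis to each such $w'$ immediately yields $\cB^h(w)$ in $G_\alpha$ equals $\cB^h(w)$ in $G_\beta$. If $w = r_{j,b}$, the same argument handles the $\Delta - 1$ children inside $T_{j,b}$, and for the remaining edge to the cycle neighbour, statement~1 of Lemma~\ref{rootsandcyclenonunique} applied at depth $h - 1 \leq k$ delivers the missing equality $\cB^{h-1}(c_m)$ in $G_\alpha$ equals $\cB^{h-1}(c_{m'})$ in $G_\beta$; combined with the matching port labels, this completes the case.

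The main obstacle is cleanly handling the boundary case $w = r_{j,b}$, where one incident edge genuinely crosses out of $T_{j,b}$ into components that differ between $G_\alpha$ and $G_\beta$. Lemma~\ref{rootsandcyclenonunique} is engineered to plug precisely this gap: its first statement about cycle nodes holds at all depths up to $k$, which is exactly the amount of slack needed for the induction on $T_{j,b}$-nodes to reach $h = k$ and conclude the proof.
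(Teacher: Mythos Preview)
Your proof is correct and follows essentially the same approach as the paper. Both arguments decompose $\cB^{k}(r_{j,b})$ by whether the first step stays in $T_{j,b}$ or goes to the cycle node, and both invoke statement~1 of Lemma~\ref{rootsandcyclenonunique} at depth $k-1$ for the cycle neighbour; your explicit induction over all nodes of $T_{j,b}$ simply makes rigorous the paper's one-line claim that ``the part of $\cB^{k}(r_{j,b})$ belonging to $T_{j,b}$'' is the same in $G_\alpha$ and $G_\beta$ (a claim that tacitly relies on the fact that any walk which exits $T_{j,b}$ and re-enters the cycle does so through $r_{j,b}$, hence sees only $\cB^{\leq k-2}$ of a cycle node, again handled by the same lemma).
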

\begin{proof}
A node $r_{j,b}$ in $G_\alpha$ is the root node of a tree $T_{j,b}$ in $G_\alpha$, and a node $r_{j,b}$ in $G_\beta$ is the root node of a tree $T_{j,b}$ in $G_\beta$. The part of $\cB^{k}(r_{j,b})$ in $G_\alpha$ belonging to $T_{j,b}$ is equal to the part of $\cB^{k}(r_{j,b})$ in $G_\beta$ belonging to $T_{j,b}$. The remainder of node $r_{j,b}$'s view in $G_\alpha$ consists of some $\cB^{k-1}(c_{m})$ along with an edge between $r_{j,b}$ and $c_m$ with the port at $r_{j,b}$ labeled $\Delta-1$ and the port at $c_m$ labeled 2. The remainder of node $r_{j,b}$'s view in $G_\beta$ consists of some $\cB^{k-1}(c_{m'})$ along with an edge between $r_{j,b}$ and $c_{m'}$ with the port at $r_{j,b}$ labeled $\Delta-1$ and the port at $c_{m'}$ labeled 2. However, by Lemma \ref{rootsandcyclenonunique}, we know that $\cB^{k-1}(c_{m})$ in $G_\alpha$ is equal to $\cB^{k-1}(c_{m'})$ in $G_\beta$, which concludes the proof.
\end{proof}

\begin{theorem}
Consider any algorithm $\cA$ that solves $\select$ in $\psi_{\select}(G)$ rounds for every graph $G$. For all integers $k\ge 1,\Delta \ge 5$, there exists a
graph $G$ with maximum degree $\Delta$ and with $\psi_{\select}(G)=k$ for which algorithm $\cA$ requires
advice of size $\Omega((\Delta-1)^k \log \Delta)$.
\end{theorem}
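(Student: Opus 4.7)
The plan is to combine the class $\cG_{\Delta,k}$ built in the preceding subsection with a standard pigeonhole-on-advice argument. The previous lemma established $\psi_{\select}(G_i)=k$ for every $G_i \in \cG_{\Delta,k}$, and each $G_i$ has maximum degree $\Delta$, so this family is the right witness set. The target lower bound $\Omega((\Delta-1)^k \log \Delta)$ will fall out of $\log_2 |\cG_{\Delta,k}| = (\Delta-2)(\Delta-1)^{k-1}\log_2(\Delta-1)$ via Fact \ref{classCountG}, provided that we can show each graph in the class must receive a distinct advice string under any correct algorithm.

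First I would set up the contradiction: suppose $\cA$ uses advice of size strictly less than $\log_2 |\cG_{\Delta,k}|$ on every graph in $\cG_{\Delta,k}$. Then by pigeonhole there exist indices $\alpha < \beta$ such that the oracle assigns the same advice string $s$ to both $G_\alpha$ and $G_\beta$. Because $\psi_{\select}(G_\alpha)=\psi_{\select}(G_\beta)=k$, the algorithm runs for exactly $k$ rounds on both graphs, so each node's output depends only on the pair $(s, \cB^k(v))$. I would then apply Lemma \ref{onlyunique} together with Proposition \ref{uniquekview} to conclude that in $G_\alpha$ the node $r_{\alpha,2}$ is the one that outputs \emph{leader}, since it is the only node whose augmented truncated view at depth $k$ is unique.

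The crux is then to transport this output decision over to $G_\beta$. By Lemma \ref{sameviewsameroots} applied with $j=\alpha$ and $b=2$ (which is permitted because $\alpha \le \beta$), the view $\cB^k(r_{\alpha,2})$ inside $G_\beta$ coincides with $\cB^k(r_{\alpha,2})$ inside $G_\alpha$. Since both the advice and the view match, each copy of $r_{\alpha,2}$ present in $G_\beta$ (there are in fact two such copies, because $\alpha < \beta$ causes $G_\beta$ to contain two copies of $T_{\alpha,2}$) must also output \emph{leader}. However, Lemma \ref{onlyunique} applied to $G_\beta$ says that the only node in $G_\beta$ that can correctly be chosen as the leader by any algorithm running for $k$ rounds is $r_{\beta,2}$, and Proposition \ref{uniquekview} forces $\cA$ to pick that unique node. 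Since $r_{\alpha,2} \ne r_{\beta,2}$ in $G_\beta$, the algorithm either designates more than one leader in $G_\beta$ or designates the wrong node — in either case contradicting correctness on $G_\beta$.

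The argument is almost purely counting once Lemmas \ref{onlyunique} and \ref{sameviewsameroots} are in hand; there is no real technical obstacle left, only the small bookkeeping point of confirming that the algorithm genuinely runs for the same number of rounds on both graphs (guaranteed by $\psi_{\select}(G_\alpha)=\psi_{\select}(G_\beta)=k$) so that outputs are functions of $\cB^k(v)$ and the common advice $s$. To finish, I would conclude that any correct $\cA$ must give pairwise distinct advice strings to the graphs of $\cG_{\Delta,k}$, so on some $G \in \cG_{\Delta,k}$ the advice has length at least $\lceil \log_2 |\cG_{\Delta,k}| \rceil = (\Delta-2)(\Delta-1)^{k-1}\log_2(\Delta-1) = \Omega((\Delta-1)^k \log \Delta)$, which is the desired bound and uses $\Delta \ge 5$ only to ensure $\log(\Delta-1) = \Theta(\log \Delta)$ in the asymptotic simplification.
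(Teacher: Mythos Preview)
Your proposal is correct and follows essentially the same approach as the paper: a pigeonhole collision on advice strings over $\cG_{\Delta,k}$, then transporting the leader output from $G_\alpha$ to $G_\beta$ via Lemma \ref{sameviewsameroots}, with the contradiction coming from the two copies of $r_{\alpha,2}$ in $G_\beta$. The only cosmetic differences are that the paper fixes an explicit constant ($\tfrac{1}{8}$) in the contradiction hypothesis rather than arguing that all advice strings must be pairwise distinct, and it derives the contradiction directly from ``two nodes output leader'' without invoking that $r_{\beta,2}$ must be the leader; your route through Lemma \ref{onlyunique} and Proposition \ref{uniquekview} for $G_\beta$ is a slight detour but harmless.
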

\begin{proof}
To obtain a contradiction, assume that there exists an algorithm $\cA$ that solves $\select$ in $k$ rounds for the class of graphs $\cG_{\Delta,k}$ with the help of an oracle that provides advice of size $\frac{1}{8}(\Delta-1)^k \log_2 \Delta$. There are at most $2^{1+(\frac{1}{8}(\Delta-1)^k \log_2 \Delta)} \leq 2^{\frac{1}{4}(\Delta-1)^k \log_2 \Delta} = \Delta^{\frac{1}{4}(\Delta-1)^k}$ binary advice strings whose length is at most $\frac{1}{8}(\Delta-1)^k \log_2 \Delta$. 
By Fact \ref{classCountG}, the total number of graphs in $\cG_{\Delta,k}$ is $|\mathcal{T}_{\Delta,k}| = (\Delta-1)^{(\Delta-2)\cdot(\Delta-1)^{k-1}}  > \Delta^{\frac{1}{2}(\Delta-2)\cdot(\Delta-1)^{k-1}} > \Delta^{\frac{1}{4}(\Delta-1)^k}$. Therefore, by the Pigeonhole Principle, the oracle provides the same advice for at least two graphs $G_\alpha$ and $G_\beta$ from $\cG_{\Delta,k}$. Suppose $\alpha < \beta$. 
By Lemma \ref{sameviewsameroots}, the root node $r_{\alpha,2}$ in $T_{\alpha,2}$ of $G_\alpha$ and the root node $r_{\alpha,2}$ in $T_{\alpha,2}$ of $G_\beta$ have the same augmented truncated view at the end of $k$ communication rounds. Hence, the output of $r_{\alpha,2}$ is the same when the algorithm $\cA$ is executed in $G_\alpha$ and $G_\beta$. By Lemma \ref{uniquekview}, since the node $r_{\alpha,2}$ in $T_{\alpha,2}$ of $G_\alpha$ is the only node with a unique augmented truncated view at depth $k$, it will output 1 when $\cA$ is executed in $G_\alpha$. Therefore, the node $r_{\alpha,2}$ in $T_{\alpha,2}$ of $G_\beta$ also outputs 1 when $\cA$ is executed in $G_\beta$. But, according to the construction of $G_\beta$, there are two copies of $T_{\alpha,2}$ in $G_\beta$, and, by Lemma \ref{sameviewsameroots} (with $\alpha=\beta$), the two copies of node $r_{\alpha,2}$ have the same augmented truncated view at depth $k$. Therefore, there are two nodes in $G_\beta$ that output 1, which contradicts the correctness of $\mathcal{A}$.
\end{proof}

\section{Port Election vs. Selection}\label{secPE}

In this section, we prove that the size of advice needed to solve $\pe$ in minimum time is exponentially larger than the size of advice needed to solve $\select$. More specifically, for any fixed integers $k \geq 1$ and $\Delta \geq 4$, we construct a class of graphs such that: $\psi_{\select}(G) = \psi_{\pe}(G) = k$ for each graph in the class, solving $\select$ in time $k$ in this class can be done with advice of size at most $O((\Delta-1)^k\log \Delta)$ (in view of Theorem \ref{S-ub}), but there exists a graph in the class for which the size of advice needed to solve $\pe$ in time $k$ is at least $\Omega((\Delta-1)^{(\Delta-2)(\Delta-1)^{k-1}}\log\Delta)$.

\subsection{Construction of $\cU_{\Delta,k}$}\label{constructUDeltak}

Consider any positive integers $\Delta \geq 4$ and $k \geq 1$. Recall, from Building Block 2 of Section \ref{constructGDeltak}, the construction of the set of augmented trees $\mathcal{T}_{\Delta,k}$. Our construction proceeds by first constructing the following template graph $U$ with maximum degree $2\Delta-1$ (which is illustrated in Figure \ref{constructU}):
\begin{figure}[h]
	\centering
	\includegraphics[scale=1.2]{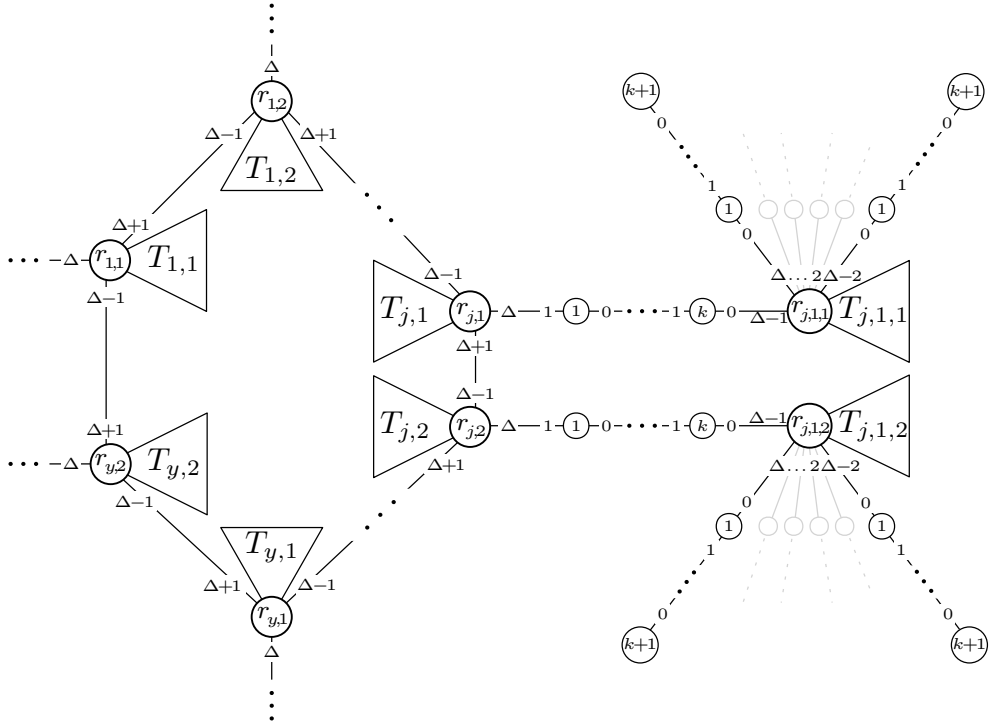}
	\caption{The graph $U$, where $y = |\mathcal{T}_{\Delta,k}|$. Node labels in the diagram are not known by the nodes: they correspond to variables used in the description of the construction, or to indicate the lengths of paths.}
	\label{constructU}
\end{figure}
\begin{enumerate}
	\item Create a graph $U$ by taking the disjoint union of all the trees $T_{j,b}$ for $j \in \{1,\ldots,|\mathcal{T}_{\Delta,k}|\}$ and $b \in \{1,2\}$ (as they are defined in Building Block 3 of Section \ref{constructGDeltak}). Add edges so that the roots of these trees form the cycle $r_{1,1}, r_{1,2}, r_{2,1}, r_{2,2}, \ldots, r_{|\mathcal{T}_{\Delta,k}|,1},r_{|\mathcal{T}_{\Delta,k}|,2},r_{1,1}$. As for port numbers on these added edges, orient the cycle by labeling the port at $r_{1,1}$ leading to $r_{1,2}$ as $\Delta+1$, label the port at $r_{1,2}$ leading to $r_{1,1}$ as $\Delta-1$, and keep alternating between $\Delta+1$ and $\Delta-1$ around the cycle.
	\item For each $j \in \{1,\ldots,|\mathcal{T}_{\Delta,k}|\}$, add two more copies of $T_{j,1}$ to $U$. These additional trees will be denoted by $T_{j,1,1}$ and $T_{j,1,2}$, and their roots will be denoted by $r_{j,1,1}$ and $r_{j,1,2}$, respectively.
	\item For each $j \in \{1,\ldots,|\mathcal{T}_{\Delta,k}|\}$, create a path of length $k+1$ between $r_{j,1}$ and $r_{j,1,1}$ (by introducing $k$ new nodes). Label the new port created at $r_{j,1}$ as $\Delta$, label the new port created at $r_{j,1,1}$ as $\Delta-1$, and label the remaining ports on the new path by assigning a 1 in the direction leading towards $r_{j,1}$ and assigning a 0 in the direction leading towards $r_{j,1,1}$. Similarly, for each $j \in \{1,\ldots,|\mathcal{T}_{\Delta,k}|\}$, create a path of length $k+1$ between $r_{j,2}$ and $r_{j,1,2}$ (by introducing $k$ new nodes), label the new port created at $r_{j,2}$ as $\Delta$, label the new port created at $r_{j,1,2}$ as $\Delta-1$, and label the remaining ports on the new path by assigning a 1 in the direction leading towards $r_{j,2}$ and assigning a 0 in the direction leading towards $r_{j,1,2}$.
	\item For each $j \in \{1,\ldots,|\mathcal{T}_{\Delta,k}|\}$, introduce $\Delta-1$ new paths of length $k+1$, each with $r_{j,1,1}$ as an endpoint. Label the $\Delta-1$ new ports at $r_{j,1,1}$ using the integers $\Delta,\ldots,2\Delta-2$. At each of the new nodes introduced to form these paths, label the port leading towards $r_{j,1,1}$ with 0, and the other port with 1. Similarly, for each $j \in \{1,\ldots,|\mathcal{T}_{\Delta,k}|\}$, introduce $\Delta-1$ new paths of length $k+1$, each with $r_{j,1,2}$ as an endpoint. Label the $\Delta-1$ new ports at $r_{j,1,2}$ using the integers $\Delta,\ldots,2\Delta-2$. At each of the new nodes introduced to form these paths, label the port leading towards $r_{j,1,2}$ with 0, and the other port with 1.
\end{enumerate}


 Using the template graph $U$, we construct each graph of the class $\cU_{\Delta,k}$ as follows. Consider every integer sequence $\sigma = (s_1,\ldots,s_{|\mathcal{T}_{\Delta,k}|})$ where $s_j \in \{1,\ldots,\Delta-1\}$ for each $j \in \{1,\ldots,|\mathcal{T}_{\Delta,k}|\}$. For each such sequence $\sigma$, construct the graph $G_\sigma$ by taking the template graph $U$ and, for each $j \in \{1,\ldots,|\mathcal{T}_{\Delta,k}|\}$, exchanging the ports $\Delta-1$ and $\Delta-1+s_j$ at both of the nodes $r_{j,1,1}$ and $r_{j,1,2}$. The class $\cU_{\Delta,k}$ consists of all such graphs $G_\sigma$. The number of different sequences $\sigma$ is $(\Delta-1)^{|\mathcal{T}_{\Delta,k}|}$, which we use along with Fact \ref{classCountG} to get the following result about the number of graphs in $\cU_{\Delta,k}$.
 
 \begin{fact}\label{classCountU}
 	$|\cU_{\Delta,k}| = (\Delta-1)^{|\mathcal{T}_{\Delta,k}|} = (\Delta-1)^{(\Delta-1)^{(\Delta-2)\cdot(\Delta-1)^{k-1}}}$ for any positive integers $\Delta \geq 4$ and $k \geq 1$.
 \end{fact}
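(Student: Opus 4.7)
The plan is to establish the two equalities in the fact statement separately. The first equality, $|\cU_{\Delta,k}| = (\Delta-1)^{|\mathcal{T}_{\Delta,k}|}$, is a counting argument: each graph $G_\sigma \in \cU_{\Delta,k}$ is determined by the integer sequence $\sigma = (s_1,\ldots,s_{|\mathcal{T}_{\Delta,k}|})$, where each coordinate $s_j$ is drawn independently from $\{1,\ldots,\Delta-1\}$. The number of such sequences is precisely $(\Delta-1)^{|\mathcal{T}_{\Delta,k}|}$. To turn this into a count of graphs, I would verify that distinct sequences yield distinct (port-numbered) graphs.

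For the distinctness step, I would argue as follows. Fix two sequences $\sigma \neq \sigma'$ and let $j$ be an index at which they differ, i.e., $s_j \neq s_j'$. By the construction, the modification governed by the $j^{\text{th}}$ coordinate occurs only at the two specific nodes $r_{j,1,1}$ and $r_{j,1,2}$, which are distinguishable within $U$ by their positions in the template (they are the unique endpoints of the paths of length $k+1$ emanating from $r_{j,1}$ and $r_{j,2}$, respectively, via the port labeled $\Delta$). Since the swap of ports $\Delta-1$ and $\Delta-1+s_j$ at $r_{j,1,1}$ differs from the swap of ports $\Delta-1$ and $\Delta-1+s_j'$, the resulting port numberings at $r_{j,1,1}$ in $G_\sigma$ and $G_{\sigma'}$ differ, so $G_\sigma$ and $G_{\sigma'}$ are distinct as port-numbered graphs. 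Hence the map $\sigma \mapsto G_\sigma$ is injective and $|\cU_{\Delta,k}|$ equals the number of sequences.

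The second equality just substitutes the value of $|\mathcal{T}_{\Delta,k}|$ already established in Building Block 2 of Section \ref{constructGDeltak}, namely $|\mathcal{T}_{\Delta,k}| = (\Delta-1)^z$ where $z = (\Delta-2)\cdot(\Delta-1)^{k-1}$. Plugging this into the first equality gives $|\cU_{\Delta,k}| = (\Delta-1)^{(\Delta-1)^{(\Delta-2)\cdot(\Delta-1)^{k-1}}}$, as claimed.

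I do not expect any serious obstacle here; the only subtlety is confirming that the port-swap operation at $r_{j,1,1}$ and $r_{j,1,2}$ genuinely distinguishes graphs for distinct values of $s_j$, which reduces to the observation that distinct values of $s_j \in \{1,\ldots,\Delta-1\}$ produce distinct permutations of the port set at those nodes (since the transposition $(\Delta-1 \;\; \Delta-1+s_j)$ is determined by $s_j$).
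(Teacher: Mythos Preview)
Your proposal is correct and follows essentially the same approach as the paper: count the sequences $\sigma$ and then substitute the known value of $|\mathcal{T}_{\Delta,k}|$ from Fact~\ref{classCountG}. The paper's justification is in fact even terser than yours---it simply states the sequence count in the sentence preceding the fact and does not explicitly verify that distinct $\sigma$ yield distinct port-labeled graphs---so your injectivity argument is additional care rather than a departure.
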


In the template graph $U$ or any fixed $G_\sigma \in \cU_{\Delta,k}$, we refer to the set of nodes $\{r_{j,b}\ |\ j \in \{1,\ldots,|\mathcal{T}_{\Delta,k}|\} \textrm{ and } b \in \{1,2\} \}$ as the \emph{cycle nodes}.

\subsection{Minimum Election Time and Advice}
For each graph in $\cU_{\Delta,k}$ (where $\Delta \geq 4$ and $k \geq 1$), we show that the $\select$-index and the $\pe$-index are both $k$.

To prove that the $\select$-index is at least $k$ for any graph $G_\sigma \in \cU_{\Delta,k}$, the idea is that $G_\sigma$ was carefully constructed such that, when considering truncated views up to distance $k-1$, each node $v$ has at least one `twin' elsewhere in the graph with the same view. We divide the proof into three parts (Propositions \ref{prop:k-1rjbviewsGS}, \ref{prop:kTijviewsGS}, and \ref{prop:kHijviewsGS}) based on $v$'s location within $G_\sigma$. First, we prove this about the cycle nodes in $G_\sigma$.

\begin{proposition}\label{prop:k-1rjbviewsGS}
	Consider any $G_\sigma \in \cU_{\Delta,k}$. For any $h \in \{0,\ldots,k-1\}$, any $j,j' \in \{1,\ldots,|\mathcal{T}_{\Delta,k}|\}$, and any $b,b' \in \{1,2\}$, we have that $\cB^{h}(r_{j,b})$ in $G_\sigma$ is equal to $\cB^{h}(r_{j',b'})$ in $G_\sigma$.
\end{proposition}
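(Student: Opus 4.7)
My plan is to show that the $h$-view $\cB^h(r_{j,b})$ for $h \leq k-1$ is entirely determined by the local, ``generic'' structure of $G_\sigma$---the structure shared by every cycle node---and does not see any of the $(j,b,\sigma)$-dependent distinguishing features of the construction. Since every cycle node exhibits the same generic local structure, all such views coincide.

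The argument reduces to a single distance computation. The only features of $G_\sigma$ that depend on $(j,b,\sigma)$ are (i) the degrees of the leaves $\ell_i$ of the underlying tree $T$ inside each $T_{j,b}$, which encode the sequence $X_j$, and (ii) the port labels at $r_{j,1,1}$ and $r_{j,1,2}$, which are permuted by $\sigma$. Every $\ell_i$ sits at distance exactly $k$ from the cycle node $r_{j,b}$, while each $r_{j,1,b}$ sits at distance $k+1$ from $r_{j,b}$. Consequently, neither distinguishing feature lies within the $h$-ball around $r_{j,b}$ when $h \leq k-1$. Every node in this $h$-ball is then ``generic'': either an interior $T$-node of degree $\Delta$, an interior node of the appended or added path of degree $2$, or another cycle node of degree $\Delta+2$; and the port labels between them follow the template $U$, which has a rotational symmetry about the cycle that identifies any two cycle nodes.

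I would formalize this by a straightforward induction on $h$. The base case ($h=0$) merely observes that each $r_{j,b}$ has degree $\Delta+2$. For the inductive step, I would unfold $\cB^h(r_{j,b})$ port by port (into the $T$-subtree of $T_{j,b}$, onto the appended path, onto the added path towards $r_{j,1,b}$, and around the cycle) and, for each neighbor, invoke the inductive hypothesis together with the distance computation above to identify the $(h-1)$-view of that neighbor with the analogous $(h-1)$-view of the corresponding neighbor of $r_{j',b'}$. The main technical obstacle is the recursive dependence of views: the $(h-1)$-view of a non-cycle neighbor can include walks that return through $r_{j,b}$ into other branches of $G_\sigma$. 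I would address this by strengthening the induction to the statement that, for every node $v$ at graph-distance $d$ from some cycle node and every $h'$ with $d + h' \leq k-1$, the view $\cB^{h'}(v)$ is determined purely by the ``type'' of $v$ (cycle node, $T$-interior, appended-path, or added-path node) and by $d$ and $h'$, not by $(j,b,\sigma)$. The bookkeeping $d + h' \leq k-1$ ensures every walk explored stays inside the generic region, closing the induction and yielding the proposition by taking $d=0$ and $h' = h$.
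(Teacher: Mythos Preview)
Your overall plan matches the paper's: reduce to the template graph $U$ (the $\sigma$-dependent port swaps sit at distance $k+1$ from every cycle node, so they are invisible at depth $\le k-1$), then argue by induction on $h$ that all cycle nodes share the same $\cB^h$. You also correctly flag a subtlety the paper handles only informally, namely that a walk leaving $r_{j,b}$ through a non-cycle port can return through $r_{j,b}$ and head into the cycle, so the subtree at such a port is not literally ``the view within $T_{j,b}$.''

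The genuine gap is in your strengthened hypothesis: the claim that $\cB^{h'}(v)$ is determined by the triple (type, $d$, $h'$) is false as stated. Take two distinct depth-$1$ children of $r_{j,b}$ in $T$, reached via ports $1$ and $2$. Both are ``$T$-interior'' with $d=1$, yet already their $\cB^{1}$-views differ, because the edge back to $r_{j,b}$ carries incoming port $1$ in one case and $2$ in the other. What you actually need to track is the full port-labeled position of $v$ within its pendant (equivalently: nodes that correspond under the obvious port-preserving isomorphism between the pendants of $r_{j,b}$ and $r_{j',b'}$, which agree up to depth $k-1$, have equal $\cB^{h'}$-views whenever $d+h'\le k-1$). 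Your coarse four-type classification is not enough.

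The paper sidesteps this by keeping the induction over cycle nodes only and never stating an invariant for off-cycle nodes. A clean way to make that rigorous (which the paper leaves implicit) is: for the subtree of $\cB^{h}(r_{j,b})$ through a pendant port, partition walks by the first step $s\ge 1$ at which they return to $r_{j,b}$. The length-$s$ prefix lies in the pendant and is generic by your distance argument and Proposition~\ref{k-1Tjbviews}; the remaining suffix explores $\cB^{\,h-1-s}(r_{j,b})$, universal by the induction hypothesis since $h-1-s<h$. This avoids classifying off-cycle nodes altogether.
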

\begin{proof}
	It is sufficient to prove the claim about the template graph $U$, as any differences between $U$ and $G_\sigma$ are at a distance greater than $k$ from $r_{j,b}$ and $r_{j',b'}$. The proof proceeds by induction on $h$. For the base case, consider $h=0$.  By the construction of the template graph $U$, all nodes $r_{j,b}$ with $j \in \{1,\ldots,|\mathcal{T}_{\Delta,k}|\}$ and $b \in \{1,2\}$ have the same degree $\Delta+2$, which proves the desired result for $h=0$.
	
	As induction hypothesis, assume that, for any $h \in \{1,\ldots,k-1\}$, any $j,j' \in \{1,\ldots,|\mathcal{T}_{\Delta,k}|\}$, and any $b,b' \in \{1,2\}$, we have that $\cB^{h-1}(r_{j,b})$ in $U$ is equal to $\cB^{h-1}(r_{j',b'})$ in $U$.
	
	To prove the inductive step, consider any $j,j' \in \{1,\ldots,|\mathcal{T}_{\Delta,k}|\}$, and any $b,b' \in \{1,2\}$. By the construction of $U$, the view $\cB^{h}(r_{j,b})$ in $U$ consists of the following parts:
	\begin{enumerate}[label=(\arabic*)]
		\item The view of $r_{j,b}$ at distance $h$ within $T_{j,b}$,
		\item A path $P$ of length $h$ with:
		\begin{itemize}
			\item the node $r_{j,b}$ as one endpoint of $P$ with port labeled $\Delta$,
			\item $h-1$ internal nodes of degree 2 with port 1 leading in the direction towards $r_{j,b}$ (and the other port labeled 0), and,
			\item a second endpoint of $P$ with port 1 leading in the direction towards $r_{j,b}$, but with degree 2 in $U$.
		\end{itemize}
		\item An edge on the cycle from $r_{j,b}$ to some $r_{\alpha_1,\beta_1}$ with $\alpha_1 \in \{1,\ldots,|\mathcal{T}_{\Delta,k}|\}$ and $\beta_1 \in \{1,2\}$, with the port at $r_{j,b}$ labeled $\Delta+1$ and the port at $r_{\alpha_1,\beta_1}$ labeled $\Delta-1$, together with the view $\cB^{h-1}(r_{\alpha_1,\beta_1})$, and,
		\item An edge on the cycle from $r_{j,b}$ to some $r_{\alpha_2,\beta_2}$ with $\alpha_2 \in \{1,\ldots,|\mathcal{T}_{\Delta,k}|\}$ and $\beta_2 \in \{1,2\}$, with the port at $r_{j,b}$ labeled $\Delta-1$ and the port at $r_{\alpha_2,\beta_2}$ labeled $\Delta+1$, together with the view $\cB^{h-1}(r_{\alpha_2,\beta_2})$.
	\end{enumerate}
	Similarly, by the construction of $U$, the view $\cB^{h}(r_{j',b'})$ in $U$ consists of the following parts:
	\begin{enumerate}[label=(\arabic*)]
		\item The view of $r_{j',b'}$ at distance $h$ within $T_{j',b'}$,
		\item A path $P'$ of length $h$ with:
		\begin{itemize}
			\item the node $r_{j',b'}$ as one endpoint of $P'$ with port labeled $\Delta$,
			\item $h-1$ internal nodes of degree 2 with port 1 leading in the direction towards $r_{j',b'}$ (and the other port labeled 0), and,
			\item a second endpoint of $P$ with port 1 leading in the direction towards $r_{j',b'}$, but with degree 2 in $U$.
		\end{itemize}
		\item An edge on the cycle from $r_{j',b'}$ to some $r_{\alpha_1',\beta_1'}$ with $\alpha_1' \in \{1,\ldots,|\mathcal{T}_{\Delta,k}|\}$ and $\beta_1' \in \{1,2\}$, with the port at $r_{j',b'}$ labeled $\Delta+1$ and the port at $r_{\alpha_1',\beta_1'}$ labeled $\Delta-1$, together with the view $\cB^{h-1}(r_{\alpha_1',\beta_1'})$, and,
		\item An edge on the cycle from $r_{j',b'}$ to some $r_{\alpha_2',\beta_2'}$ with $\alpha_2' \in \{1,\ldots,|\mathcal{T}_{\Delta,k}|\}$ and $\beta_2' \in \{1,2\}$, with the port at $r_{j',b'}$ labeled $\Delta-1$ and the port at $r_{\alpha_2',\beta_2'}$ labeled $\Delta+1$, together with the view $\cB^{h-1}(r_{\alpha_2',\beta_2'})$.
	\end{enumerate}
	
	First, we see that part (1) of the two views are equal by Lemma \ref{k-1Tjbviews} since $h \leq k-1$. Part (2) of the two views are equal since $P$ and $P'$ are labeled in the same way. Part (3) of the two views are equal since, by the induction hypothesis, the views $\cB^{h-1}(r_{\alpha_1,\beta_1})$ and $\cB^{h-1}(r_{\alpha_1',\beta_1'})$ are equal. Similarly, part (4) of the two views are equal since, by the induction hypothesis, the views $\cB^{h-1}(r_{\alpha_2,\beta_2})$ and $\cB^{h-1}(r_{\alpha_2',\beta_2'})$ are equal. This concludes the proof that $\cB^{h}(r_{j,b})$ in $U$ is equal to $\cB^{h}(r_{j',b'})$ in $U$, which completes the inductive step.
\end{proof}

Next, we show that, when considering truncated views up to distance $k-1$, each node in each $T_{j,b}$ in $G_\sigma$ has a `twin' elsewhere in the graph with the same view. In fact, we will prove a stronger version of the result so that it holds for truncated views up to distance $k$, which we will reuse later when considering the advice size for the Port Election task in $G_\sigma$.

\begin{proposition}\label{prop:kTijviewsGS}
		Consider any $G_\sigma \in \cU_{\Delta,k}$, any $j \in \{1,\ldots,|\mathcal{T}_{\Delta,k}|\}$ and any $b \in \{1,2\}$. For each node $v \in T_{j,b} - \{r_{j,b}\}$, there exists a node $v' \neq v$ such that $\cB^{k}(v)$ in $G_\sigma$ is equal to $\cB^{k}(v')$ in $G_\sigma$.
\end{proposition}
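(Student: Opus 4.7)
The plan is to construct, for each $v \in T_{j,b}\setminus\{r_{j,b}\}$, an explicit twin $v'\ne v$ with $\cB^{k}(v) = \cB^{k}(v')$ in $G_\sigma$. I would split on the location of $v$: either $v$ lies on the appended path $p_1,\ldots,p_{k+1}$ of $T_{j,b}$, or $v$ lies in its $T_X$-portion at some depth $d \ge 1$ from the root. The two sub-cases call for twins of quite different character, so it is cleaner to treat them separately.

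For $v = p_i$ on the appended path, I would take $v' = p_i$ in the appended path of $T_{j',b}$ for any $j' \ne j$, which exists by Fact~\ref{classCountG} since $|\mathcal{T}_{\Delta,k}| \ge 2$. The port-labeling of appended paths in $T_{\cdot,b}$ depends only on $b$ (not on $j$), so the paths themselves are isomorphic as port-labeled paths. I would then prove $\cB^{h}(p_i) = \cB^{h}(p_i')$ by a joint induction on $h \le k$ over all $i \in \{1,\ldots,k+1\}$ simultaneously: the base case records only the matching degrees, while the inductive step decomposes the view into those of the at-most-two neighbors $p_{i \pm 1}$ — with the convention $p_0 = r_{j,b}$ when $i=1$ — handled by the inductive hypothesis for the path neighbors and by Proposition~\ref{prop:k-1rjbviewsGS} for the root neighbor (using that $h-1 \le k-1$).

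For $v$ in the $T_X$-portion at depth $d \ge 1$, I would take $v'$ to be the corresponding node at the same position in the $T_X$-portion of $T_{j,b'}$ with $b' = 3-b$; these are distinct vertices because the two trees are attached at different cycle nodes. The crucial observation is that $T_{j,1}$ and $T_{j,2}$ share exactly the \emph{same} underlying $T_X$ and differ only by a port swap at $p_k$, while $p_k$ sits at graph-distance $d + k > k$ from $v$. Consequently, within the $k$-ball of $v$ one sees only: the shared $T_X$, the root $r_{j,b}$, appended-path nodes as far as $p_{k-d}$ (on which the port labels in $T_{j,b}$ and $T_{j,b'}$ still agree), and nodes outside $T_{j,b}$ within distance $k - d \le k - 1$ of the root. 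I would then build a port-preserving bijection between the $k$-balls of $v$ and $v'$ sending $v \mapsto v'$ and $r_{j,b} \mapsto r_{j,b'}$: inside the tree this is forced by the identity of the $T_X$-parts and the agreement of the truncated appended paths, while outside the tree it is supplied by Proposition~\ref{prop:k-1rjbviewsGS}.

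The step I expect to be the main obstacle is this outside-tree matching in Case~2: the natural one-step rotation of the cycle carrying $r_{j,b} \mapsto r_{j,b'}$ also sends $T_{j-1,2}$ onto $T_{j,1}$ and $T_{j,2}$ onto $T_{j+1,1}$, and these neighbouring trees are genuinely \emph{different} as port-labeled graphs (different underlying $T_X$, different appended-path labels at $p_k$). What rescues the argument is that only truncations at depth at most $k-1$ of these neighbouring trees enter the $k$-ball around $v$, and Proposition~\ref{prop:k-1rjbviewsGS} is precisely the statement that such truncated augmented views are identical at every cycle node. The delicate verification is that this truncation-level agreement is enough for equality of $\cB^{k}(v)$ and $\cB^{k}(v')$, rather than requiring any stronger global isomorphism of $G_\sigma$.
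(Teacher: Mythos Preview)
Your proposal is correct and follows essentially the same approach as the paper: the same two-case split (appended path versus $T_X$-portion), the same choice of twin in each case ($p_i$ in $T_{j',b}$ with $j'\ne j$ for the path case, and the corresponding node in $T_{j,3-b}$ for the $T_X$ case), and the same appeal to Proposition~\ref{prop:k-1rjbviewsGS} to handle the portion of the view outside $T_{j,b}$. The only cosmetic difference is that the paper phrases the argument via a type-1/type-2 path decomposition rather than your induction and bijection language, but the content is identical.
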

\begin{proof}
	
	Consider any fixed $j \in \{1,\ldots,|\mathcal{T}_{\Delta,k}|\}$ and $b \in \{1,2\}$, and consider any node $v \in T_{j,b} - \{r_{j,b}\}$. There are two cases to consider depending on which part of $T_{j,b}$ the node $v$ is located in:
	\begin{itemize}
		\item Suppose that $v$ is contained in the augmented tree $T_j$. We prove that there is a node in $T_{j,3-b}$ with the same view as $v$ up to distance $k$. Note that $T_{j,b}$ and $T_{j,3-b}$ are the trees $T_{j,1}$ and $T_{j,2}$. According to the construction in Building Block 3 of Section \ref{constructGDeltak}, $T_{j,1}$ and $T_{j,2}$ are constructed using the same tree $T_j$, so it follows that there is a node $v'$ in $T_{j,3-b}$ that is at the same location within $T_j$ as $v$ is located within $T_{j,b}$. We show that $v$ and $v'$ have the same view up to distance $k$. The view of $v$ up to distance $k$ consists of two parts: paths of length $k$ contained entirely within $T_j$ (we'll call these type-1 paths), and, paths of length $k$ that pass through $r_{j,b}$ and have a subpath of length at most $k-1$ outside of $T_j$ (we'll call these type-2 paths). Regarding type-1 paths: by the choice of $v'$ within $T_{j,3-b}$ (i.e., $v$ and $v'$ are copies of the same node within $T_j$) the set of paths that originate at $v$ and lie entirely within $T_j$ is the same as the set of paths that originate at $v'$ and lie entirely within $T_j$. Regarding type-2 paths: the fact established in the previous sentence implies that the path with $v$ and $r_{j,b}$ as endpoints in $T_{j,b}$ is the same as the path with $v'$ and $r_{j,3-b}$ as endpoints in $T_{j,3-b}$. By Proposition \ref{prop:k-1rjbviewsGS}, the views of $r_{j,b}$ and $r_{j,3-b}$ are identical up to distance $k-1$. So, the set of paths of length $h \leq k-1$ starting from $r_{j,b}$ is the same as the set of paths of length $h$ starting from $r_{j,3-b}$. It follows that the set of type-2 paths starting at $v$ is the same as the set of type-2 paths starting at $v'$. This concludes the proof that the views of $v$ and $v'$ are identical up to distance $k$.
		\item Suppose that $v$ is contained in the path $P$ appended to $T_j$ to form $T_{j,b}$. Consider any $j' \in \{1,\ldots,|\mathcal{T}_{\Delta,k}|\} - \{j\}$. According to the construction in Building Block 3 of Section \ref{constructGDeltak}, $T_{j,b}$ and $T_{j',b}$ are constructed by appending the same path $P$ of length $k+1$ to $T_{j}$ and $T_{j'}$, respectively. Denoting by $h \geq 1$ the distance from $v$ to $r_{j,b}$, let $v'$ be the node at distance $h$ from $r_{j',b}$ in the appended path $P$ of $T_{j',b}$. We show that $v$ and $v'$ have the same view up to distance $k$. The view of $v$ up to distance $k$ consists of two parts: paths of length $k$ contained entirely within $P$ (we'll call these type-1 paths), and, paths of length $k$ that pass through $r_{j,b}$ and have a subpath of length $k-h$ outside of $P$ (we'll call these type-2 paths). Regarding type-1 paths: by the choice of $v'$ within the appended path $P$ (i.e., $v$ and $v'$ are copies of the same node within $P$) the set of paths that originate at $v$ and lie entirely within $P$ is the same as the set of paths that originate at $v'$ and lie entirely within $P$. Regarding type-2 paths: the fact established in the previous sentence implies that the path with $v$ and $r_{j,b}$ as endpoints in $P$ is the same as the path with $v'$ and $r_{j',b}$ as endpoints in $T_{j',b}$. By Proposition \ref{prop:k-1rjbviewsGS}, the views of $r_{j,b}$ and $r_{j',b}$ are identical up to distance $k-h$. So, the set of paths of length $k-h$ starting from $r_{j,b}$ is the same as the set of paths of length $k-h$ starting from $r_{j',b}$. It follows that the set of type-2 paths starting at $v$ is the same as the set of type-2 paths starting at $v'$. This concludes the proof that the views of $v$ and $v'$ are identical up to distance $k$.
	\end{itemize}

\end{proof}

Finally, we consider the remaining nodes in $G_\sigma$, i.e., each node in $G_\sigma$ that is reachable from an $r_{j,b}$ using a path whose first outgoing port is $\Delta$. We show that, when considering truncated views up to distance $k-1$, each such node has a `twin' elsewhere in the graph with the same view. Once again, we will prove a stronger version of the result so that it holds for truncated views up to distance $k$, which will be reused later when considering the advice size for the Port Election task in $G_\sigma$.
For any $j \in \{1,\ldots,|\mathcal{T}_{\Delta,k}|\}$, denote by $H_{j,1}$ the subtree rooted at $r_{j,1}$ consisting of all nodes reachable from $r_{j,1}$ by a path with first outgoing port $\Delta$. Similarly, denote by $H_{j,2}$ the subtree rooted at $r_{j,2}$ consisting of all nodes reachable from $r_{j,2}$ by a path with first outgoing port $\Delta$.

\begin{fact}\label{H1H2identical}
	For any $G_\sigma \in \cU_{\Delta,k}$ and any $j \in \{1,\ldots,|\mathcal{T}_{\Delta,k}|\}$, the trees $H_{j,1}$ and $H_{j,2}$ are identical.
\end{fact}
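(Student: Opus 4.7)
The plan is to argue that Fact \ref{H1H2identical} follows directly by unpacking the construction of $U$ (steps 3 and 4) and the port-swap that turns $U$ into $G_\sigma$, and noticing that every ingredient is defined symmetrically in $b \in \{1,2\}$. First, I would identify the structural decomposition of $H_{j,b}$ for each $b \in \{1,2\}$: by step 3 of the construction of $U$, the only neighbor of $r_{j,b}$ reached via port $\Delta$ is the first internal node on the length-$(k+1)$ path $\Pi_{j,b}$ joining $r_{j,b}$ to $r_{j,1,b}$; then $r_{j,1,b}$ is the root of the tree $T_{j,1,b}$, which is by definition a copy of $T_{j,1}$; and by step 4, $r_{j,1,b}$ has $\Delta-1$ additional pendant paths of length $k+1$ attached via ports $\Delta,\ldots,2\Delta-2$. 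Thus $H_{j,b}$ equals the rooted subgraph obtained by gluing $\Pi_{j,b}$, $T_{j,1,b}$, and the $\Delta-1$ pendant paths at $r_{j,1,b}$.

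Next, I would verify that all of the port labels on this decomposition are the same for $b=1$ and $b=2$. For the spine path $\Pi_{j,b}$: step 3 labels the endpoint port at $r_{j,b}$ as $\Delta$, the endpoint port at $r_{j,1,b}$ as $\Delta-1$, and the internal ports as $1$ (toward $r_{j,b}$) and $0$ (toward $r_{j,1,b}$), and these labels do not depend on $b$. For the attached copy of $T_{j,1}$: $T_{j,1,1}$ and $T_{j,1,2}$ are, by definition, two copies of the same labeled tree $T_{j,1}$, so they are identical as port-labeled rooted trees. For the $\Delta-1$ pendant paths attached in step 4: the ports at $r_{j,1,b}$ are $\Delta,\ldots,2\Delta-2$ and the internal path ports are $0$ (toward $r_{j,1,b}$) and $1$ (outward), again independent of $b$.

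Finally, I would observe that the port-swap defining $G_\sigma$ is symmetric between $b=1$ and $b=2$: for the fixed index $j$, ports $\Delta-1$ and $\Delta-1+s_j$ are exchanged at both $r_{j,1,1}$ and $r_{j,1,2}$. Hence, after the swap, the labeled rooted subgraph $H_{j,1}$ in $G_\sigma$ is identical, port-by-port, to $H_{j,2}$ in $G_\sigma$. I do not anticipate any real obstacle here; the proof is purely a careful bookkeeping of the construction, and the only subtlety to flag is that one must explicitly check that the port-swap is indeed applied at both $r_{j,1,1}$ and $r_{j,1,2}$ with the same value $s_j$, which is exactly what the definition of $G_\sigma$ in Section \ref{constructUDeltak} says.
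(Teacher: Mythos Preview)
Your proposal is correct and follows essentially the same approach as the paper's own proof: verify that $H_{j,1}$ and $H_{j,2}$ are identical in the template graph $U$ by inspecting the construction, and then observe that the port swap defining $G_\sigma$ is applied identically at $r_{j,1,1}$ and $r_{j,1,2}$. The paper states this in two sentences, whereas you spell out the structural decomposition of $H_{j,b}$ explicitly; this extra detail is fine and does not constitute a different route.
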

\begin{proof}
	The result is true about the template graph $U$, which can be verified by the description of its construction. To create $G_\sigma$ from template graph $U$, the same port $\Delta-1+s_j$ was swapped with $\Delta-1$ at the nodes $r_{j,1,1}$ and $r_{j,1,2}$, so it follows that the subtrees are identical in $G_\sigma$ as well.
\end{proof}

\begin{proposition}\label{prop:kHijviewsGS}
	Consider any $G_\sigma \in \cU_{\Delta,k}$ and any $j \in \{1,\ldots,|\mathcal{T}_{\Delta,k}|\}$. For any node in $v_1 \in H_{j,1}$ other than the root $r_{j,1}$, the corresponding copy $v_2$ of $v_1$ in $H_{j,2}$ has the same view as $v_1$ up to distance $k$ in $G_\sigma$.
\end{proposition}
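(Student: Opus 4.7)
The plan is to prove the claim by induction on $h$ from $0$ to $k$, establishing that $\cB^h(v_1) = \cB^h(v_2)$ in $G_\sigma$ for every node $v_1 \in H_{j,1} - \{r_{j,1}\}$ with corresponding mirror copy $v_2 \in H_{j,2} - \{r_{j,2}\}$. The key structural observation I will exploit is that, by the construction of the template graph $U$, $r_{j,1}$ is the unique node of $H_{j,1}$ that has neighbours outside $H_{j,1}$, and similarly for $r_{j,2}$ in $H_{j,2}$. Consequently, for every such $v_1$, all of $v_1$'s neighbours in $G_\sigma$ lie within $H_{j,1}$, so the induction stays entirely inside $H_{j,1}$ except at the single boundary neighbour $r_{j,1}$.

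The base case $h=0$ is immediate: by the observation above together with Fact \ref{H1H2identical}, the degrees of $v_1$ and $v_2$ in $G_\sigma$ coincide with their degrees in the identical subgraphs $H_{j,1}$ and $H_{j,2}$, and hence are equal. For the inductive step, I would decompose $\cB^h(v_1)$ in the usual way as being determined by the degree of $v_1$ together with, for each port $p$ of $v_1$, the triple $(p, q_p, \cB^{h-1}(w_p))$, where $w_p$ is the neighbour of $v_1$ reached via port $p$ and $q_p$ is the port at $w_p$ on that edge (and likewise for $v_2$). By Fact \ref{H1H2identical}, the port numbering at $v_1$ agrees with that at $v_2$, and each neighbour $w_p$ of $v_1$ in $H_{j,1}$ corresponds under the mirror map to a neighbour $w_p'$ of $v_2$ in $H_{j,2}$ via the same port pair $(p, q_p)$. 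When $w_p \ne r_{j,1}$, the induction hypothesis at depth $h-1$ gives $\cB^{h-1}(w_p) = \cB^{h-1}(w_p')$. When $w_p = r_{j,1}$ (which happens precisely when $v_1$ is the neighbour of $r_{j,1}$ inside $H_{j,1}$), Proposition \ref{prop:k-1rjbviewsGS} applied at depth $h-1 \le k-1$ gives $\cB^{h-1}(r_{j,1}) = \cB^{h-1}(r_{j,2})$. Assembling these contributions over all ports $p$ yields $\cB^h(v_1) = \cB^h(v_2)$.

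The main subtlety is the sub-case $w_p = r_{j,1}$: the inductive hypothesis (stated only for nodes in $H_{j,1} - \{r_{j,1}\}$) does not apply to the boundary neighbour, so one has to invoke Proposition \ref{prop:k-1rjbviewsGS} instead. Since that proposition is available only up to depth $k-1$, the bound $h-1 \le k-1$ is exactly what is needed, which is precisely why the induction can be carried only up to $h=k$ and no further. Intuitively, Proposition \ref{prop:k-1rjbviewsGS} supplies the ``external'' view of $r_{j,b}$ required to extend $v_1$'s local view one step past the boundary of $H_{j,b}$, and the depth it guarantees matches exactly the depth to which our induction reaches.
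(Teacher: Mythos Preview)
Your proof is correct and rests on the same two ingredients as the paper's argument: Fact~\ref{H1H2identical} (the subtrees $H_{j,1}$ and $H_{j,2}$ are identical) and Proposition~\ref{prop:k-1rjbviewsGS} (the cycle nodes $r_{j,1}$ and $r_{j,2}$ have equal views up to depth $k-1$), together with the structural observation that $r_{j,b}$ is the sole point of contact between $H_{j,b}$ and the rest of $G_\sigma$.

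The organization differs slightly. The paper argues by a direct case split on whether the distance $d$ from $v_1$ to $r_{j,1}$ exceeds $k$: if $d>k$ the view lies entirely inside $H_{j,1}$ and Fact~\ref{H1H2identical} alone suffices; if $d\le k$ the view decomposes into an internal part (handled by Fact~\ref{H1H2identical}) and an external part rooted at $r_{j,1}$ of depth $k-d\le k-1$ (handled by Proposition~\ref{prop:k-1rjbviewsGS}). Your induction on $h$ achieves the same effect one layer at a time, invoking Proposition~\ref{prop:k-1rjbviewsGS} at depth $h-1\le k-1$ precisely when a neighbour equals $r_{j,1}$. The paper's version is shorter and more geometric; your inductive version is a touch more mechanical but makes the depth bookkeeping (why the argument stops exactly at $h=k$) fully explicit. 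Neither buys anything the other does not.
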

\begin{proof}
	Consider any $v_1 \in H_{j,1}$, and let $v_2$ be the corresponding copy of $v_1$ in $H_{j,2}$. There are two cases to consider:
	\begin{itemize}
		\item Suppose that the distance between $v_1$ and $r_{j,1}$ is greater than $k$. Then $v_1$'s view up to distance $k$ does not include $r_{j,1}$, and $v_2$'s view up to distance $k$ does not include $r_{j,2}$, i.e., the views of $v_1$ and $v_2$ are contained strictly inside $H_{j,1}$ and $H_{j,2}$, respectively. By Fact \ref{H1H2identical}, $H_{j,1}$ and $H_{j,2}$ are identical, which implies that the views of $v_1$ and $v_2$ up to distance $k$ are identical in $G_\sigma$.
		\item Suppose that the distance between $v_1$ and $r_{j,1}$ is at most $k$. Let $d \geq 1$ denote the distance between $v_1$ and $r_{j,1}$ (which is also the distance between $v_2$ and $r_{j,2}$). The part of $v_1$'s view that lies completely within $H_{j,1}$ is identical to the part of $v_2$'s view that lies completely within $H_{j,2}$, since $H_{j,1}$ and $H_{j,2}$ are identical (by Fact \ref{H1H2identical}). The rest of $v_1$'s view consists of $r_{j,1}$'s view up to distance $k-d$, and the rest of $v_2$'s view consists of $r_{j,2}$'s view up to distance $k-d$. However, as $d \geq 1$, Proposition \ref{prop:k-1rjbviewsGS} tells us that $\cB^{k-d}(r_{j,1})$ in $G_\sigma$ is equal to $\cB^{k-d}(r_{j,2})$ in $G_\sigma$, which completes the proof that the views of $v_1$ and $v_2$ up to distance $k$ are identical in $G_\sigma$.
	\end{itemize}
\end{proof}

Together, Propositions \ref{prop:k-1rjbviewsGS}, \ref{prop:kTijviewsGS}, and \ref{prop:kHijviewsGS} give us the following result, which tells that no node has a unique view up to distance $k-1$ in $G_\sigma$. This implies that the $\select$-index of $G_S$ is at least $k$.
\begin{lemma}\label{k-1viewsGS}
	For any $G_\sigma \in \cU_{\Delta,k}$ and any node $v \in G_\sigma$, there exists a node $v' \in G_\sigma - \{v\}$ such that $\cB^{k-1}(v) = \cB^{k-1}(v')$ in $G_\sigma$.
\end{lemma}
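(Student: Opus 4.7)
The plan is a straightforward case analysis based on where $v$ sits within $G_\sigma$, partitioning the vertex set according to the three building blocks of the construction: the cycle nodes $\{r_{j,b}\}$, the non-root nodes of the trees $T_{j,b}$, and the non-root nodes of the subtrees $H_{j,1}$ and $H_{j,2}$. Every vertex of $G_\sigma$ falls into exactly one of these categories, so exhausting them is sufficient. For each category, I will invoke the appropriate one of the three propositions just proved in order to exhibit a twin $v'$ with the same augmented truncated view.

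First, if $v = r_{j,b}$ is a cycle node, Proposition \ref{prop:k-1rjbviewsGS} asserts that all cycle nodes share the same view at depth $k-1$; since there are at least two such nodes (in fact $2|\mathcal{T}_{\Delta,k}|$ of them), one can pick any other cycle node as $v'$. Second, if $v$ lies in some $T_{j,b} \setminus \{r_{j,b}\}$, Proposition \ref{prop:kTijviewsGS} produces a distinct $v'$ with $\cB^{k}(v) = \cB^{k}(v')$, and since $k-1 \le k$ the equality of views at depth $k-1$ follows immediately from the general fact that equality of truncated views at depth $d$ implies equality at every smaller depth.

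Third, if $v$ is a non-root node of $H_{j,1}$ or $H_{j,2}$ for some $j$, then Proposition \ref{prop:kHijviewsGS} directly provides the corresponding copy in the other subtree as a twin at depth $k$, and hence also at depth $k-1$. This exhausts the vertex set of $G_\sigma$, because by construction every node belongs either to the cycle, to one of the trees $T_{j,b}$, or to one of the subtrees $H_{j,1}, H_{j,2}$ hanging off the cycle via port $\Delta$.

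There is no real obstacle here: the three propositions have already done all the work, and the only thing to verify is that the three cases form a partition of $V(G_\sigma)$, which is immediate from the description of the template graph $U$ and the construction of $G_\sigma$ from $U$. The conclusion $\psi_{\select}(G_\sigma) \ge k$ then follows at once from Proposition \ref{uniquekview}, since no node has a unique augmented truncated view at depth $k-1$.
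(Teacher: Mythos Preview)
Your proposal is correct and follows exactly the approach the paper takes: the paper simply states that the lemma follows immediately from Propositions~\ref{prop:k-1rjbviewsGS}, \ref{prop:kTijviewsGS}, and~\ref{prop:kHijviewsGS} without spelling out the case split, and you have faithfully filled in that case analysis. The only extraneous bit is your final sentence about $\psi_{\select}(G_\sigma)\ge k$, which belongs to Corollary~\ref{cor:Sgeqk} rather than to this lemma, but that does no harm.
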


\begin{corollary}\label{cor:Sgeqk}
	$\psi_{\select}(G_\sigma) \geq k$ for any graph $G_\sigma \in \cU_{\Delta,k}$.
\end{corollary}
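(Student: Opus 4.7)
The plan is to obtain this corollary as an immediate consequence of Lemma~\ref{k-1viewsGS} combined with Proposition~\ref{uniquekview}. Concretely, I would argue by contradiction: suppose $\psi_{\select}(G_\sigma) \leq k-1$ for some $G_\sigma \in \cU_{\Delta,k}$. Then there exists a deterministic algorithm $\mathcal{A}$ that solves $\select$ on $G_\sigma$ in at most $k-1$ rounds. By Proposition~\ref{uniquekview} applied with parameter $k-1$ (or by running $\mathcal{A}$ to completion and invoking the proposition at whatever round $\mathcal{A}$ terminates, padding with no-ops up to depth $k-1$), the unique node $u$ that outputs \emph{leader} must satisfy $\cB^{k-1}(u) \neq \cB^{k-1}(w)$ for every other node $w \in V(G_\sigma)$.

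However, Lemma~\ref{k-1viewsGS} asserts precisely the opposite: for every node $v \in G_\sigma$, there is some $v' \neq v$ with $\cB^{k-1}(v) = \cB^{k-1}(v')$ in $G_\sigma$. In particular, the node $u$ identified above has a twin with the same augmented truncated view at depth $k-1$, contradicting the conclusion drawn from Proposition~\ref{uniquekview}. Hence no such algorithm $\mathcal{A}$ can exist, so $\psi_{\select}(G_\sigma) \geq k$.

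The only step requiring care is to make sure Proposition~\ref{uniquekview} is invoked at the correct depth; since the proposition is stated for any positive integer $k$ (not tied to our specific $k$), taking its parameter to be $k-1$ gives exactly the statement needed, provided $k-1 \geq 1$. For the edge case $k=1$, the claim $\psi_{\select}(G_\sigma) \geq 1$ reduces to showing that no node has a unique degree (i.e., a unique $\cB^{0}$), which is precisely Lemma~\ref{k-1viewsGS} at $h=0$ and suffices since $\psi_{\select}(G)=0$ only when some node's degree is unique. I do not anticipate a substantive obstacle: the work has already been done in Lemma~\ref{k-1viewsGS}, and this corollary is merely the packaging of that lemma through the impossibility criterion of Proposition~\ref{uniquekview}.
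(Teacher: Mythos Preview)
Your proposal is correct and follows essentially the same approach as the paper: the paper treats the corollary as an immediate consequence of Lemma~\ref{k-1viewsGS} (stating just before it that the lemma ``implies that the $\select$-index of $G_\sigma$ is at least $k$'') without giving an explicit proof. Your write-up simply makes explicit the contrapositive link via Proposition~\ref{uniquekview}, and your handling of the $k=1$ edge case is a reasonable precaution given that Proposition~\ref{uniquekview} is stated for positive integers.
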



We now turn our attention to the $\pe$-index of graphs in $\cU_{\Delta,k}$. We set out to prove that the $\pe$-index is at most $k$ for any graph $G_\sigma \in \cU_{\Delta,k}$. This fact will allow us to conclude that $\psi_{\pe}(G_\sigma) = \psi_{\select}(G_\sigma) = k$ since, by Fact \ref{indexHierarchy} and Corollary \ref{cor:Sgeqk}, we have already shown that $\psi_{\pe}(G_\sigma) \geq \psi_{\select}(G_\sigma) \geq k$. 

We first prove a structural result which shows that each cycle node in $G_\sigma$ has a unique truncated view up to distance $k$. This fact will be exploited by our Port Election algorithm when choosing a leader node, since, given a complete map of the graph, each node can find the lexicographically smallest view from among the views of the cycle nodes, and compare it to its own view up to distance $k$.
%

\begin{lemma}\label{kviewsGS}
	For any integers $\Delta \geq 4$ and $k \geq 1$, consider any $G_\sigma \in \cU_{\Delta,k}$. For any $j \in \{1,\ldots,|\mathcal{T}_{\Delta,k}|\}$, any $b \in \{1,2\}$, and any node $v \neq r_{j,b}$, we have that $\cB^{k}(r_{j,b}) \neq \cB^{k}(v)$ in $G_\sigma$.
\end{lemma}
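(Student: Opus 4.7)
My plan is to show that each cycle node $r_{j,b}$ has a unique augmented truncated view at depth $k$ in $G_\sigma$, by combining a degree argument for non-cycle nodes with two walk-based distinguishing arguments for pairs of distinct cycle nodes. The first step is to verify that each $r_{j,b}$ has degree exactly $\Delta+2$ in $G_\sigma$, coming from its $\Delta-1$ ports into $T_{j,b}$ (namely ports $0,1,\ldots,\Delta-2$), its two cycle-edge ports ($\Delta-1$ and $\Delta+1$), and its port $\Delta$ leading toward $r_{j,1,b}$. I would then enumerate the remaining node types in $G_\sigma$ and check that none of them attains degree $\Delta+2$: internal nodes of $T$ have degree $\Delta$; leaves $\ell_i$ of $T$ have degree $1+x_i\le\Delta$; each $r_{j,1,b'}$ has degree $2\Delta-1$; and every other node (interior path nodes and path endpoints) has degree $1$ or $2$. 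Since $\Delta\ge 4$, only cycle nodes have degree $\Delta+2$, and because the root degree is already visible in $\cB^0$, this immediately yields $\cB^k(r_{j,b})\neq \cB^k(v)$ for every non-cycle $v$.

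The remaining task is to show $\cB^k(r_{j,b})\neq \cB^k(r_{j',b'})$ whenever $(j,b)\neq(j',b')$, which I would split into two subcases. If $j\neq j'$, the integer sequences $X$ and $X'$ used to construct $T_j$ and $T_{j'}$ differ at some coordinate $i$, so $x_i\neq x_i'$ and the $i$-th leaf of $T$ has degree $1+x_i$ inside $T_{j,b}$ but degree $1+x_i'$ inside $T_{j',b'}$. The skeleton $T$ is port-labeled identically in every $T_X$, so the unique simple path of length $k$ from the root of $T$ to $\ell_i$ has a well-defined port sequence $\pi_i$ which uses ports from $\{1,\ldots,\Delta-2\}$ at $r$ and ports from $\{1,\ldots,\Delta-1\}$ going down through the internal nodes (with every incoming port equal to $0$). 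Following $\pi_i$ from $r_{j,b}$ or from $r_{j',b'}$ stays entirely inside the corresponding copy of $T$ and arrives at the two leaves of different degrees, so the two augmented views disagree at the leaf reached by $\pi_i$.

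If $j=j'$ but $b\neq b'$, say $b=1$ and $b'=2$, I would exploit the only structural difference between $T_{j,1}$ and $T_{j,2}$, namely the port swap at $p_k$ on the appended path. Starting at $r_{j,b}$ and using port $0$ at every step produces a walk of length $k$ along the appended path whose port sequence is $(0,1,0,1,\ldots,0,1)$ when $b=1$ and $(0,1,0,1,\ldots,0,0)$ when $b=2$, because the back-port at $p_k$ is $1$ in the first case and $0$ in the second. Hence the port sequence $(0,1,\ldots,0,1)$ of length $2k$ represents a valid walk of $\cB^k(r_{j,1})$ ending at the degree-$2$ leaf $p_k$, but represents no valid walk in $\cB^k(r_{j,2})$, so the two views differ.

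The main obstacle is to be sure that the distinguishing walks in the last two cases really are present (respectively absent) in the \emph{global} view inside $G_\sigma$, not merely within the detached piece $T_{j,b}$. The key point is that each distinguishing port sequence uses only the port into $T_{j,b}$ from $r_{j,b}$ (ports $1,\ldots,\Delta-2$ in the first subcase and port $0$ in the second) together with downward ports along the $T_j$ skeleton or along the appended path, so the corresponding walk never leaves $T_{j,b}$ toward the cycle, the path to $r_{j,1,b}$, or any of the gadgetry hanging off $r_{j,1,b}$; the surrounding structure of $G_\sigma$ therefore plays no role in either argument, and the local analysis inside $T_{j,b}$ suffices.
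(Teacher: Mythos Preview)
Your proposal is correct and follows essentially the same approach as the paper: a degree argument to rule out non-cycle nodes (only cycle nodes have degree $\Delta+2$), the leaf-degree mismatch along the canonical path $\pi_i$ in the skeleton $T$ to separate $r_{j,b}$ from $r_{j',b'}$ when $j\neq j'$, and the port swap at $p_k$ on the appended path to separate the two $b$-values. The only cosmetic difference is the case split: you branch on $j\neq j'$ versus $j=j'$, whereas the paper branches on $b\neq b'$ versus $j\neq j'$, but the distinguishing walks used are identical.
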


\begin{proof}
	By construction of the template graph $U$ when $\Delta \geq 4$, the nodes with degree exactly $\Delta+2$ are precisely the cycle nodes (all other nodes either have degree at most $\Delta$, or degree exactly $2\Delta-1$). The construction of each $G_\sigma \in \cU_{\Delta,k}$ does not change the node degrees, so the same is true for each $G_\sigma$. 
	
	Fix an arbitrary cycle node $r_{j,b} \in G_\sigma$. From the above observation, for any node $v \in G_\sigma$ that is not a cycle node, the nodes $r_{j,b}$ and $v$ have different degree, so $\cB^{k}(r_{j,b}) \neq \cB^{k}(v)$ in $G_\sigma$.
	
	The remaining case to consider is when $v = r_{j',b'}$ with $j' \neq j$ or $b' \neq b$.
	\begin{itemize}
		\item Suppose that $b' \neq b$. Without loss of generality, assume that $b=1$ and $b'=2$. By construction (Building Block 3 of Section \ref{constructGDeltak}), the tree $T_{j,b}$ has an appended path of length $k+1$ starting at its root $r_{j,b}$, and the port sequence $\pi$ along this path in $\cB^{k}(r_{j,b})$ has its first port equal to 0 and its final port equal to 1. However, by construction (Building Block 3 of Section \ref{constructGDeltak}), the tree $T_{j,b'}$ has an appended path starting at its root $r_{j,b'} = v$, and the port sequence along this path in $\cB^{k}(v)$ has first port equal to 0 and its final port equal to 0. Since no other port at $v$ is labeled 0, it follows that the port sequence $\pi$ does not exist in $\cB^{k}(v)$ starting at $v$. This proves that $\cB^{k}(v) \neq \cB^{k}(r_{j,b})$.
		\item Suppose that $j' \neq j$.  It follows from our construction (Building Blocks 2 and 3 of Section \ref{constructGDeltak}) that $T_{j,b}$ is built using a tree $T_j = T_{X}$ , and $T_{j',b'}$ is built using a tree $T_{j'} = T_{Y}$, where $X=(x_1,\ldots,x_z)$ and $Y=(y_1,\ldots,y_z)$ are distinct sequences of integers in the range $1,\ldots,\Delta-1$. Recall that $T_{X}$ has height $k+1$, has $z$ nodes at distance $k$ from its root $r_{j,b}$, and the $i^{th}$ node at distance $k$ from the root has $x_i$ degree-1 neighbours. (The ordering of nodes at distance $k$ is based on the lexicographic ordering of the sequence of ports leading from the root to each such node.) Similarly, $T_{Y}$ has height $k+1$, has $z$ nodes at distance $k$ from its root $r_{j',b'}$, and the $i^{th}$ node at distance $k$ from the root has $y_i$ degree-1 neighbours. As $X$ and $Y$ are distinct, there exists an index $i$ such that $x_i \neq y_i$. Therefore, in the augmented view $\cB^{k}(r_{j,b})$, there is a path with port sequence $\pi$ starting at $r$ whose other endpoint is labeled with $x_i$, and, in the augmented view $\cB^{k}(r_{j',b'})$, the path with the same port sequence $\pi$ starting at $r$ exists, but the other endpoint is labeled with $y_i \neq x_i$. This proves that $\cB^{k}(r_{j,b}) \neq \cB^{k}(r_{j',b'})$, i.e., $\cB^{k}(r_{j,b}) \neq \cB^{k}(v)$.
	\end{itemize}

\end{proof}

Lemma \ref{kviewsGS} implies that there is a unique cycle node with the smallest lexicographic view, and, in what follows, we will denote this node by $r_{min}$.


%

To prove that the $\pe$-index is at most $k$ for any graph $G_\sigma \in \cU_{\Delta,k}$, we provide a distributed algorithm that solves Port Election using $k$ rounds of communication, assuming that each node has a complete map of the graph $G_\sigma$. At a high level, the algorithm partitions the nodes of $G_\sigma$ into three types: `heavy' nodes with degree $2\Delta-1$, `medium' nodes with degree $\Delta+2$, and `light' nodes with degree less than $\Delta+2$. The medium nodes are precisely the cycle nodes, so they follow a strategy similar to the one described above to solve $\select$: each cycle node has a unique truncated view up to distance $k$, and so given a complete map of the graph, each cycle node can find $r_{min}$ and compare $\cB^k(r_{min})$ with its own view up to distance $k$. The unique cycle node $r_{min}$ whose view is a match will output `leader', and all other cycle nodes output the port $\Delta+1$, which is the first port on a simple path around the cycle towards $r_{min}$. The heavy nodes are precisely the $r_{j,1,1}$ and $r_{j,1,2}$ nodes for each $j \in \{1,\ldots,|\mathcal{T}_{\Delta,k}|\}$. After $k$ rounds of communication, their view contains the entire $T_{j,1,1}$ or $T_{j,1,2}$ of which they are the root. Given a complete map of the network, they can locate $T_{j,1,1}$ and $T_{j,1,2}$ and conclude that they are root of one of these two trees. Regardless of which tree they belong to, the same port leads towards the cycle in $G_\sigma$, which they will output.
Finally, after $k$ rounds of communication, each light node either: has degree 1, in which case it outputs 0 (its only outgoing port); or, has at least one medium node in its view, in which case it outputs the port leading towards the closest such medium node; or, otherwise, it has a heavy node in its view, in which case it outputs the port leading towards this heavy node. The above argument is formalized in the following result.

\begin{lemma}\label{PEkU}
	For any integers $\Delta \geq 4$ and $k \geq 1$, $\psi_{\pe}(G_\sigma) = \psi_{\select}(G_\sigma) = k$ for any graph $G_\sigma \in \cU_{\Delta,k}$.
\end{lemma}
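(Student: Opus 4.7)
By Fact \ref{indexHierarchy} and Corollary \ref{cor:Sgeqk}, we already have $\psi_{\pe}(G_\sigma) \geq \psi_{\select}(G_\sigma) \geq k$, so the remaining task is to exhibit a deterministic algorithm that, given a complete map of $G_\sigma$ as advice, solves $\pe$ in $k$ rounds. The plan is to classify the nodes of $G_\sigma$ by degree into three disjoint groups: the \emph{medium} nodes of degree $\Delta+2$ (which by the construction of $U$ are exactly the cycle nodes $r_{j,b}$), the \emph{heavy} nodes of degree $2\Delta-1$ (which are exactly the roots $r_{j,1,1}$ and $r_{j,1,2}$ of the trees attached via paths of length $k+1$), and the \emph{light} nodes of degree at most $\Delta+1$ (all remaining nodes). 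Each node can recognize its own type from its degree, which is known at round $0$, and can recognize the types of all nodes appearing in its depth-$k$ view.

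For the cycle nodes I would rely on Lemma \ref{kviewsGS}: among the cycle nodes of $G_\sigma$ the lexicographically smallest augmented view is realized by the unique node $r_{\min}$. Using the map, every cycle node computes $\cB^k(r_{\min})$, while $k$ rounds of communication give each cycle node its own $\cB^k(\cdot)$; the match is unique by Lemma \ref{kviewsGS}. Then $r_{\min}$ outputs \emph{leader}, and every other cycle node outputs port $\Delta+1$, which by the port labeling of the cycle in Step~1 of the construction of $U$ is always the first port on a simple cycle path heading in the direction of $r_{\min}$ (we walk the shorter arc to $r_{\min}$, or either arc of the even-length cycle, and both directions around the cycle are consistent because the cycle is oriented alternately with $\Delta+1$ and $\Delta-1$).

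For the heavy nodes the key observation is that after $k$ rounds their view contains the entire appended path of length $k+1$ together with the $\Delta-1$ attached paths, and by inspecting the map they can identify themselves as either $r_{j,1,1}$ or $r_{j,1,2}$ for some $j$. Crucially, they cannot decide which of the two, because by Fact \ref{H1H2identical} the subtrees $H_{j,1}$ and $H_{j,2}$ are identical, but they do not need to: in both cases the port leading toward $r_{j,1}$ (respectively $r_{j,2}$), i.e.\ toward the cycle, is port $\Delta-1$ after the swap in the construction of $G_\sigma$ (since the swap exchanged ports $\Delta-1$ and $\Delta-1+s_j$, and the map reveals the value of $s_j$ to the algorithm). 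Every heavy node therefore outputs the port that, according to the map, leads toward the cycle node closest to it, and this gives a first port of a simple path to $r_{\min}$ by prepending this port to the path chosen by that cycle node.

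For the light nodes I would argue directly from the tree-like local structure. Degree-$1$ nodes output their unique port $0$. Every other light node sits in one of three substructures: an appended path $T_{j,b}\setminus T_j$, a subtree $T_j$, or one of the $\Delta-1$ paths of length $k+1$ hanging from a heavy node. In each case the node lies on a unique simple path to the nearest cycle or heavy node, and after $k$ rounds either a medium node or a heavy node appears in its view (here I use that the distance from any such light node to some $r_{j,b}$ or $r_{j,1,b}$ is at most $k$); it outputs the first port on the shortest such path. Concatenating this path with the paths chosen by that medium/heavy node yields a simple path to $r_{\min}$, giving a valid $\pe$ output. The main obstacle in writing this cleanly will be verifying that the concatenations are simple, which reduces to checking that the tree attachments of $G_\sigma$ meet the cycle at distinct nodes and that the chosen arc of the cycle toward $r_{\min}$ is consistent across all callers; both follow from the construction. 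Since the algorithm terminates in $k$ rounds, $\psi_{\pe}(G_\sigma) \leq k$, which combined with the lower bound yields $\psi_{\pe}(G_\sigma) = \psi_{\select}(G_\sigma) = k$.
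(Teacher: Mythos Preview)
Your approach is essentially the same as the paper's: classify nodes by degree, use Lemma~\ref{kviewsGS} to have the cycle nodes elect $r_{\min}$, and route all other nodes toward the cycle. However, there is a genuine gap in your treatment of the light nodes.

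Your enumeration of light-node substructures (``an appended path $T_{j,b}\setminus T_j$, a subtree $T_j$, or one of the $\Delta-1$ paths of length $k+1$ hanging from a heavy node'') omits two pieces of $G_\sigma$: the trees $T_{j,1,1}$ and $T_{j,1,2}$ added in Step~2 of the construction of $U$, and, more problematically, the internal nodes of the length-$(k+1)$ path between $r_{j,b}$ and $r_{j,1,b}$ added in Step~3. A light node $v$ on the latter path sees both a medium node $r_{j,b}$ and a heavy node $r_{j,1,b}$ within distance $k$. Your rule ``output the first port on the shortest such path'' would, for $v$ closer to $r_{j,1,b}$, send $v$ toward the heavy node; but from $r_{j,1,b}$ the only route to the cycle goes back through this very same path, so the concatenation you claim is simple is not. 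The only simple path from $v$ to $r_{\min}$ goes through $r_{j,b}$, so $v$ must output the port toward the medium node regardless of which is nearer. The paper fixes this by an explicit priority rule: if $\cB^k(v)$ contains any node of degree $\Delta+2$, route toward that; only otherwise route toward a node of degree $2\Delta-1$.

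Two minor points: your parenthetical about walking ``the shorter arc'' is misleading, since all non-leader cycle nodes uniformly output $\Delta+1$ and hence traverse the cycle in one fixed orientation (which may be the longer arc for some of them, but is still simple); and after the port swap at $r_{j,1,b}$ the port leading toward the cycle is $\Delta-1+s_j$, not $\Delta-1$. Neither of these affects the argument once the priority rule above is in place.
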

\begin{proof}
 For any graph $G_\sigma \in \cU_{\Delta,k}$, we note that $\psi_{\pe}(G_\sigma) \geq \psi_{\select}(G_\sigma) \geq k$ by Fact \ref{indexHierarchy} and Corollary \ref{cor:Sgeqk}. To complete the proof of the desired result, it suffices to prove that $\psi_{\pe}(G_\sigma) \leq k$. We provide a distributed algorithm that solves Port Election using $k$ rounds of communication, assuming that each node has a complete map of the graph $G_\sigma$. First, each node $v$ uses $k$ communication rounds to obtain $\cB^k(v)$, i.e., their view up to distance $k$. Based on $v$'s own degree, it follows one of the following strategies:
	\begin{itemize}
		\item If $v$'s degree is 1: output 0.
		\item If $v$'s degree is $\Delta+2$: compute $\cB^k(v')$ for each cycle node $v'$ in the network map. Let $r_{min}$ be the lexicographically smallest such view. If $\cB^k(v) = \cB^k(r_{min})$, then output `leader', otherwise output $\Delta+1$.\label{medium}
		\item If $v$'s degree is $2\Delta-1$: compute $\cB^k(v')$ for each node $v'$ with degree $2\Delta-1$ in the network map. Let $r_{match}$ be one such node on the map that has $\cB^k(r_{match}) = \cB^k(v)$. Output the port $p$ that is the first port on a simple path from $r_{match}$ to the network's cycle.\label{heavy}
		\item In all other cases: if $\cB^k(v)$ contains a node with degree $\Delta+2$, then let $v'$ be such a node, and, otherwise, let $v'$ be a node with degree $2\Delta-1$ in $\cB^k(v)$. Output the port $p$ that is the first port on a simple path from $v$ to $v'$.\label{allother}
	\end{itemize}
We now confirm that this $k$-round algorithm solves leader election. 

First, observe that exactly one cycle node outputs `leader': indeed, according to the algorithm, Case \ref{medium} is the only one that results in a node outputting `leader'. In this case, the node's degree is $\Delta+2$, which means that it is only executed by the cycle nodes. Further, by Lemma \ref{kviewsGS}, there is a unique cycle node whose view up to distance $k$ is lexicographically smallest, so only one cycle node will output `leader'. All other cycle nodes will output $\Delta+1$, which is the first port on a simple path towards the leader (i.e., the path obtained by following port $\Delta+1$ repeatedly). 

Next, we show that all other nodes output the first port on a simple path towards a cycle node (which is sufficient since, from such a cycle node, there is a simple path on the cycle that leads to the leader). Each node $v$ that is not on the cycle in $G_\sigma$ falls into one of the following cases:
\begin{itemize}
	\item Suppose that $v$ has degree equal to 1. As $v$ only has one incident edge, this edge is on the simple path from $v$ to the closest cycle node. By the construction of $G_\sigma$, $v$'s port on this edge is labeled 0, and, according to the algorithm, $v$'s output is 0.
	\item Suppose that, for some $j \in \{1,\ldots,|\mathcal{T}_{\Delta,k}|\}$ and $b \in \{1,2\}$, node $v$ is contained in $T_{j,b} - \{r_{j,b}\}$. We may assume that $v$ does not have degree equal to 1, as this is handled in the previous case. By the construction of $T_{j,b}$, all nodes other than $r_{j,b}$ have degree at most $\Delta$, so $v$ follows Case \ref{allother} of the algorithm's strategy. Further, as $T_{j,b}$ has height $k+1$, all non-leaf nodes are within distance $k$ of the root $r_{j,b}$. It follows that $v$'s view up to distance $k$ contains $r_{j,b}$, i.e., a node with degree $\Delta+2$. According to the algorithm, $v$ outputs the first port on a simple path towards $r_{j,b}$, i.e., towards a cycle node.
	\item Suppose that $v = r_{j,1,1}$ or $v=r_{j,1,2}$ for some $j \in \{1,\ldots,|\mathcal{T}_{\Delta,k}|\}$. By construction, $v$ has degree $2\Delta-1$, so $v$ follows Case \ref{heavy} of the algorithm's strategy. According to this strategy, $v$ finds a node $r_{match}$ in the network map such that $\cB^k(r_{match}) = \cB^k(v)$. Claim \ref{rmatch} (proven below) implies that $r_{match}$ is equal to $r_{j',1,1}$ or $r_{j',1,2}$ for exactly one $j' \in \{1,\ldots,|\mathcal{T}_{\Delta,k}|\}$, which implies that $j' = j$. In particular, $v$ has narrowed its own location within the map down to two possibilities, one of which is its actual location within the network. By the construction of $G_\sigma$ using the template graph $U$, the same port is swapped at both $r_{j,1,1}$ and $r_{j,1,2}$ for any fixed $j \in \{1,\ldots,|\mathcal{T}_{\Delta,k}|\}$. This implies that the same port $p$ at $r_{j,1,1}$ and $r_{j,1,2}$ is the first port on a simple path leading towards the closest cycle node, which is the algorithm's output.
	
	\begin{claim}\label{rmatch}
		 For any $j \in \{1,\ldots,|\mathcal{T}_{\Delta,k}|\}$, we have that $\cB^{k}(r_{j,1,1}) = \cB^{k}(r_{j,1,2})$, and, for any node $v \not\in \{r_{j,1,1},r_{j,1,2}\}$, we have that $\cB^{k}(v) \neq \cB^{k}(r_{j,1,1})$.
	\end{claim}
		To prove the claim, fix an arbitrary $j \in \{1,\ldots,|\mathcal{T}_{\Delta,k}|\}$. 
		
		Proposition \ref{prop:kHijviewsGS} implies that $\cB^{k}(r_{j,1,1}) = \cB^{k}(r_{j,1,2})$. 
		
		Next, consider any $v \not\in \{r_{j,1,1},r_{j,1,2}\}$. If $v$ does not have degree exactly $2\Delta-1$, then $\cB^{k}(v) \neq \cB^{k}(r_{j,1,1})$ since $r_{j,1,1}$ has degree $2\Delta-1$. 
		
		The remaining case to consider is when $v \in \{r_{j',1,1},r_{j',1,2}\}$ for some $j' \in \{1,\ldots,|\mathcal{T}_{\Delta,k}|\} - \{j\}$. Without loss of generality, assume that $v = r_{j',1,1}$ (since the case $v=r_{j',1,2}$ follows from the fact that $\cB^{k}(r_{j',1,1}) = \cB^{k}(r_{j',1,2})$). 
		It follows from our construction of $G_\sigma$ that $r_{j,1,1}$ is the root of $T_{j,1,1}$ and that $r_{j',1,1}$ is the root of $T_{j',1,1}$. Further, $T_{j,1,1}$ is a copy of $T_{j,1}$, and $T_{j',1,1}$ is a copy of $T_{j',1}$. By the construction of $T_{j,1}$ and $T_{j',1}$ (Building Blocks 2 and 3 of Section \ref{constructGDeltak}), it follows that $T_{j,1}$ is built using a tree $T_j = T_{X}$ , and $T_{j',1}$ is built using a tree $T_{j'} = T_{Y}$, where $X=(x_1,\ldots,x_z)$ and $Y=(y_1,\ldots,y_z)$ are distinct sequences of integers in the range $1,\ldots,\Delta-1$. Recall that $T_{X}$ has height $k+1$, has $z$ nodes at distance $k$ from its root, and the $i^{th}$ node at distance $k$ from the root has $x_i$ degree-1 neighbours. (The ordering of nodes at distance $k$ is based on the lexicographic ordering of the sequence of ports leading from the root to each such node.) Similarly, $T_{Y}$ has height $k+1$, has $z$ nodes at distance $k$ from its root, and the $i^{th}$ node at distance $k$ from the root has $y_i$ degree-1 neighbours. As $X$ and $Y$ are distinct, there exists an index $i$ such that $x_i \neq y_i$. Therefore, in the augmented view $\cB^{k}(r_{j,1,1})$, there is a path with port sequence $\pi$ starting at $r$ whose other endpoint is labeled with $x_i$, and, in the augmented view $\cB^{k}(r_{j',1,1})$, the path with the same port sequence $\pi$ starting at $r$ exists, but the other endpoint is labeled with $y_i \neq x_i$. This proves that $\cB^{k}(r_{j,1,1}) \neq \cB^{k}(r_{j',1,1})$, i.e., $\cB^{k}(r_{j,1,1}) \neq \cB^{k}(v)$, and concludes the proof of the claim.

	\item Suppose that, for some $j \in \{1,\ldots,|\mathcal{T}_{\Delta,k}|\}$ and some $b \in \{1,2\}$, node $v$ is contained in $T_{j,1,b} - \{r_{j,1,b}\}$. We may assume that $v$ does not have degree equal to 1, as this is handled in a previous case. By the construction of $U$, the tree $T_{j,1,b}$ is simply a copy of the tree $T_{j,1}$. So, by construction, all nodes other than $r_{j,1,b}$ in $T_{j,1,b}$ have degree at most $\Delta$, therefore $v$ follows Case \ref{allother} of the algorithm's strategy. Further, as $T_{j,1,b}$ has height $k+1$, all non-leaf nodes are within distance $k$ of the root $r_{j,1,b}$. It follows that $v$'s view up to distance $k$ contains $r_{j,1,b}$, i.e., a node with degree $2\Delta-1$. Note that only the cycle nodes in $G_\sigma$ have degree $\Delta+2$, and $v$'s distance to the cycle is greater than $k$, so $v$'s view up to distance $k$ does not contain a node with degree $\Delta+2$. Thus, according to the algorithm, $v$ outputs the first port on a simple path towards $r_{j,1,b}$. Since $r_{j,1,b}$ is on a simple path between $v$ and $r_{j,b}$, it follows that $v$'s output is the first port on a simple path towards a cycle node.
	\item Suppose that $v$ does not belong to any of the previous cases. From the construction of the template graph $U$, it follows that $v$ is an internal node on an induced path $P$ of length $k+1$ with one endpoint of $P$ equal to $r_{j,1,b}$ for some $j \in \{1,\ldots,|\mathcal{T}_{\Delta,k}|\}$ and $b \in \{1,2\}$. If the other endpoint of $P$ is $r_{j,b}$, then it follows that $v$ is within distance $k$ from $r_{j,b}$, which is a node with degree $\Delta+2$, so it outputs the first port on a simple path towards the cycle node $r_{j,b}$. Otherwise, the other endpoint of $P$ has degree 1, so there is no node within distance $k$ from $v$ that has degree $\Delta+2$. However, $v$ is within distance $k$ from $r_{j,1,b}$, which has degree $2\Delta-1$, so $v$ outputs the first port on a simple path towards $r_{j,1,b}$. Since $r_{j,1,b}$ is on a simple path from $v$ to $r_{j,b}$, it follows that $v$ outputs the first port on a simple path towards the cycle node $r_{j,b}$.
\end{itemize}
\end{proof}
%

We now proceed to analyze the amount of advice needed to solve Port Election in the graph class $\cU_{\Delta,k}$. First, we observe that any algorithm that solves $\pe$ for the graphs in $\cU_{\Delta,k}$ must elect one of the cycle nodes as leader. The proof of this fact follows directly from Propositions \ref{prop:kTijviewsGS} and \ref{prop:kHijviewsGS}, as these two results show that any node that does not belong to the network's cycle has at least one `twin' in the network that has the same view up to distance $k$. hence we have the following lemma.

\begin{lemma}\label{mustBeRoot}
	Consider any algorithm $\cA$ that, for any graph $G_\sigma \in \cU_{\Delta,k}$, solves $\pe$ in $\psi_{\pe}(G_\sigma)$ rounds. At the end of the execution of $\cA$ on any fixed graph $G_\sigma \in \cU_{\Delta,k}$, the node in $G_\sigma$ that outputs `leader' is the root node $r_{j,b}$ of $T_{j,b}$ for some $j \in \{1,\ldots,|\mathcal{T}_{\Delta,k}|\}$ and some $b \in \{1,2\}$.
\end{lemma}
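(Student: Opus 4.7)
The plan is to prove the lemma by contradiction, leveraging the two ``twin'' results (Propositions \ref{prop:kTijviewsGS} and \ref{prop:kHijviewsGS}) that were set up precisely for this purpose. Suppose $\cA$ solves $\pe$ in $\psi_{\pe}(G_\sigma)$ rounds, and recall that $\psi_{\pe}(G_\sigma)=k$ by Lemma \ref{PEkU}. Let $v$ be the unique node in $G_\sigma$ that outputs \emph{leader} under $\cA$, and assume toward a contradiction that $v$ is not a cycle node, i.e., $v \neq r_{j,b}$ for every $j \in \{1,\ldots,|\mathcal{T}_{\Delta,k}|\}$ and every $b \in \{1,2\}$. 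The key observation is that after $k$ rounds, the output of any node is a function solely of its augmented truncated view $\cB^{k}$ at depth $k$ together with the advice string; since the advice is identical at every node, two nodes with the same $\cB^{k}$ must produce identical outputs.

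From the construction of the template graph $U$ (and hence of $G_\sigma$), every non-cycle node falls into exactly one of two categories: either $v$ lies in $T_{j,b}-\{r_{j,b}\}$ for some $j$ and $b$, or $v$ lies in $H_{j,b}-\{r_{j,b}\}$ for some $j$ and $b$. In the first case I would directly invoke Proposition \ref{prop:kTijviewsGS} to obtain a node $v' \neq v$ with $\cB^{k}(v) = \cB^{k}(v')$ in $G_\sigma$. In the second case, if $b=1$ I apply Proposition \ref{prop:kHijviewsGS} to get the copy $v' \in H_{j,2}$ with $\cB^{k}(v) = \cB^{k}(v')$; if instead $b=2$, the symmetric statement (obtained by swapping the roles of $H_{j,1}$ and $H_{j,2}$ in the proof of Proposition \ref{prop:kHijviewsGS}, which goes through verbatim since Fact \ref{H1H2identical} is symmetric in the two subtrees) supplies the required twin $v' \in H_{j,1}$.

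In both cases, the twin $v'$ is distinct from $v$ and has the same augmented truncated view at depth $k$. By the observation above, $v'$ also outputs \emph{leader}, so $G_\sigma$ has at least two leaders under $\cA$, contradicting the correctness of $\pe$. Hence the node outputting \emph{leader} must be one of the cycle roots $r_{j,b}$, as claimed.

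The only real subtlety is the case analysis verifying that the union $\bigcup_{j,b}(T_{j,b}-\{r_{j,b}\}) \cup \bigcup_{j,b}(H_{j,b}-\{r_{j,b}\})$ exhausts the non-cycle nodes; this is immediate from the four-step description of $U$, where every added node either lies in some appended copy $T_{j,b}$ (attached at $r_{j,b}$ via port $0$) or in the subtree reached from $r_{j,b}$ via port $\Delta$, which is by definition $H_{j,b}$. No further work is needed beyond citing the two propositions and the determinism of the algorithm.
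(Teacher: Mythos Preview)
Your proof is correct and follows essentially the same approach as the paper: the paper states that the lemma follows directly from Propositions~\ref{prop:kTijviewsGS} and~\ref{prop:kHijviewsGS}, since these show every non-cycle node has a twin with the same $\cB^{k}$-view, and hence cannot be the unique leader. Your write-up simply fills in the routine details of that argument (the contradiction via determinism, the two cases, and the coverage of all non-cycle nodes), all of which are exactly as intended.
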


The main theorem of this section shows that the size of advice needed to solve $PE$ in the class $ \cU_{\Delta,k}$ in minimum time is exponential in $\Delta$, and thus exponentially larger than the size of advice needed to solve $\select$ in minimum time in this class.

\begin{theorem}
	Consider any algorithm $\cA$ that solves $\pe$ in $\psi_{\pe}(G)$ rounds for every graph $G$. For all integers $\Delta \ge 4, k\ge 1$, there exists a
	graph $G$ with maximum degree $2\Delta-1$ and with $\psi_{\pe}(G)=\psi_\select(G)=k$ for which algorithm $\cA$ requires
	advice of size $\Omega((\Delta-1)^{(\Delta-2)(\Delta-1)^{k-1}}\log\Delta)$.
\end{theorem}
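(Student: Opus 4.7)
The plan is to apply a pigeonhole counting argument in the spirit of the lower bound of Section~\ref{sec1}, but exploiting a special feature of the construction of $\cU_{\Delta,k}$: at each heavy node $r_{j,1,1}$ the port swap that encodes $s_j$ is \emph{invisible} from $r_{j,1,1}$ within $k$ communication rounds, even though it determines which port the node must output for $\pe$. Concretely, I would assume for contradiction that some algorithm $\cA$ solves $\pe$ in $\psi_\pe(G_\sigma)=k$ rounds on every $G_\sigma \in \cU_{\Delta,k}$ using advice of size strictly less than $\tfrac{1}{2}|\mathcal{T}_{\Delta,k}|\log_2(\Delta-1)$ (which is indeed $\Omega((\Delta-1)^{(\Delta-2)(\Delta-1)^{k-1}}\log\Delta)$ for $\Delta\ge 4$). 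Since there are at most $2^{s+1}$ binary advice strings of length at most $s$, while Fact~\ref{classCountU} gives $|\cU_{\Delta,k}| = (\Delta-1)^{|\mathcal{T}_{\Delta,k}|}$, a standard pigeonhole count yields two distinct sequences $\sigma \neq \sigma'$ for which the oracle assigns the same advice string to $G_\sigma$ and $G_{\sigma'}$; in particular, there is some $j$ with $s_j \neq s_j'$.

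The heart of the argument, and the step I expect to be the main obstacle, is the structural claim that $\cB^k(r_{j,1,1})$ is identical in $G_\sigma$ and $G_{\sigma'}$. I would establish this in two parts. First, any node $r_{j',1,b}$ with $j' \neq j$ at which a swap might be performed lies at distance strictly greater than $k$ from $r_{j,1,1}$: reaching such a node requires first traversing the appended path of length $k+1$ back to $r_{j,1}$, then one cycle edge, then a further length-$(k+1)$ path. So no swap other than the one at $r_{j,1,1}$ itself lies within the view. Second, the swap at $r_{j,1,1}$ merely exchanges the port labels $\Delta-1$ and $\Delta-1+s_j$ on two of the $\Delta$ edges leaving $r_{j,1,1}$, namely the edge to the appended path back to $r_{j,1}$ and one of the $\Delta-1$ dead-end paths. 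By the template construction, every one of these $\Delta$ paths has length exactly $k+1$, uses the same internal labelling (port $0$ toward $r_{j,1,1}$, port $1$ outward), and has a degree-$2$ node at distance $k$ from $r_{j,1,1}$ (since the distinguishing endpoints, $r_{j,1}$ of degree $\Delta+2$ and the degree-$1$ leaves, both lie at distance $k+1$, outside the truncation). Consequently, the subtrees rooted at these $\Delta$ ports are isomorphic as labelled trees at depth $k-1$ below, so permuting any two of them leaves $\cB^k(r_{j,1,1})$ unchanged. Care must be taken to keep the other subtrees separate in the bookkeeping: the port-$0$ subtree (the appended path of $T_{j,1,1}$, with the opposite orientation $1$-toward/$0$-away) and the ports-$1,\ldots,\Delta-2$ subtrees (encoding $T_j$) are distinguishable from the $\Delta$ paths among which the swap acts, but they do not depend on $\sigma$.

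Granted the structural claim, the proof concludes quickly. Since $r_{j,1,1}$ has the same view and receives the same advice in $G_\sigma$ and $G_{\sigma'}$, the deterministic algorithm $\cA$ outputs the same port at $r_{j,1,1}$ in both executions. However, by Lemma~\ref{mustBeRoot} the leader in each execution is a cycle node, and every simple path from $r_{j,1,1}$ to a cycle node must begin by traversing the unique edge of $r_{j,1,1}$ that leaves the subtree $T_{j,1,1}$ and its dead-end paths, namely the first edge of the length-$(k+1)$ path to $r_{j,1}$. That edge carries port label $\Delta-1+s_j$ at $r_{j,1,1}$ in $G_\sigma$ but $\Delta-1+s_j'$ in $G_{\sigma'}$, and these differ by choice of $j$. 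Hence the single port output by $\cA$ at $r_{j,1,1}$ is correct for at most one of $G_\sigma, G_{\sigma'}$, contradicting the assumed correctness of $\cA$ and yielding the advice lower bound $\Omega((\Delta-1)^{(\Delta-2)(\Delta-1)^{k-1}}\log\Delta)$.
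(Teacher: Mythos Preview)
Your proposal is correct and follows essentially the same approach as the paper: a pigeonhole argument over $\cU_{\Delta,k}$ to find two graphs $G_\sigma,G_{\sigma'}$ with the same advice, followed by the observation that $r_{j,1,1}$ (for some $j$ with $s_j\neq s_j'$) has the same $\cB^k$-view in both graphs yet must output different first ports toward the cycle, invoking Lemma~\ref{mustBeRoot} for the contradiction. The paper states the view equality in one sentence (noting that all $\Delta$ paths out of ports $\Delta-1,\dots,2\Delta-2$ look identical to depth $k$), whereas you spell out the two ingredients---that every other heavy node is at distance at least $k+1$ away, and that the swap at $r_{j,1,1}$ itself permutes indistinguishable depth-$k$ subtrees---which is a welcome elaboration. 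One tiny wording nit: your phrase ``any node $r_{j',1,b}$ with $j'\neq j$'' omits $r_{j,1,2}$, but your actual argument (``first traversing the appended path of length $k+1$ back to $r_{j,1}$'') covers it.
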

\begin{proof}
	To obtain a contradiction, assume that there exists an algorithm $\cA$ that solves $\pe$ in $k$ rounds for all graphs in the class $\cU_{\Delta,k}$ with the help of an oracle that provides advice of size $\frac{1}{4}|\mathcal{T}_{\Delta,k}|\log_2\Delta$. There are at most $2^{1+(\frac{1}{4}|\mathcal{T}_{\Delta,k}| \log_2 \Delta)}$ binary advice strings whose length is at most $\frac{1}{4}|\mathcal{T}_{\Delta,k}| \log_2 \Delta$. By Fact \ref{classCountG}, along with the assumptions that $\Delta \geq 4$ and $k \geq 1$, we see that $\frac{1}{4}|\mathcal{T}_{\Delta,k}|\log_2\Delta = \frac{1}{4}(\Delta-1)^{(\Delta-2)\cdot (\Delta-1)^{k-1}}\cdot\log_2\Delta \geq 1$, so the number of binary advice strings whose length is at most $\frac{1}{4}|\mathcal{T}_{\Delta,k}| \log_2 \Delta$ is at most $2^{1+(\frac{1}{4}|\mathcal{T}_{\Delta,k}| \log_2 \Delta)} \leq 2^{\frac{1}{2}|\mathcal{T}_{\Delta,k}| \log_2 \Delta} = \Delta^{\frac{1}{2}|\mathcal{T}_{\Delta,k}|}$. But, by Fact \ref{classCountU}, the total number of graphs in $\cU_{\Delta,k}$ is $(\Delta-1)^{|\mathcal{T}_{\Delta,k}|} > \Delta^{\frac{1}{2}|\mathcal{T}_{\Delta,k}|}$, so, by the Pigeonhole Principle, the oracle provides the same advice for at least two graphs $G_\alpha$ and $G_\beta$ from $\cU_{\Delta,k}$, where $\alpha$ and $\beta$ are distinct sequences of integers of length $|\mathcal{T}_{\Delta,k}|$.
	
	By Lemma \ref{mustBeRoot}, at the end of any execution of $\cA$ on any graph $G_\sigma \in \cU_{\Delta,k}$, the elected leader is a cycle node. 
	It follows that, in both $G_\alpha$ and $G_\beta$, all nodes $r_{j,1,1}$ and $r_{j,1,2}$ for $j \in \{1,\ldots,|\mathcal{T}_{\Delta,k}|\}$ output the unique integer that labels the first port along the simple path towards their corresponding $r_{j,1}$ and $r_{j,2}$ on the cycle. Since $\alpha \neq \beta$, there exists an $j' \in \{1,\ldots,|\mathcal{T}_{\Delta,k}|\}$ such that $\alpha_{j'} \neq \beta_{j'}$. Recall from our construction that $G_\alpha$ was obtained from the template graph $U$ by swapping the ports label $\Delta-1$ and $\Delta-1+\alpha_{j'}$ at nodes $r_{j',1,1}$ and $r_{j',1,2}$, and $G_\beta$ was obtained from the template graph $U$ by swapping the port labels $\Delta-1$ and $\Delta-1+\beta_{j'}$ at nodes $r_{j',1,1}$ and $r_{j',1,2}$. In particular, this means that, at $r_{j',1,1}$, the first port along the simple path towards its corresponding $r_{j',1}$ is \emph{different} in $G_{\alpha}$ than in $G_{\beta}$. However, for any fixed $j \in \{1,\ldots,|\mathcal{T}_{\Delta,k}|\}$, the node $r_{j,1,1}$ has the same view up to distance $k$ in both graphs $G_{\alpha}$ and $G_{\beta}$ (which is the same as its view in the template graph, i.e., the subtree $T_{j,1,1}$, along with $\Delta$ induced paths of length $k$, each with a different incident port at $r_{j,1,1}$ from the range $\Delta-1,\ldots,2\Delta-2$). Since each node receives the same advice string in the execution of $\mathcal{A}$ in both graphs $G_\alpha$ and $G_\beta$, the node $r_{j',1,1}$ will output the same port in both executions (due to indistinguishability), which is a contradiction.
\end{proof}


\section{Port Path Election and Complete Port Path Election}\label{secPPEandCPPE}

In this section, we prove that the size of advice needed to solve $\cppe$ (or $\ppe$) in minimum time is exponentially larger than the size of advice needed to solve $\select$. More specifically, for sufficiently large $k$ and $\Delta$, we construct a class of graphs such that: $\psi_{\select}(G) = \psi_{\ppe}(G) = \psi_{\cppe}(G) = k$ for each graph in the class, solving $\select$ in time $k$ in this class can be done with advice of size at most $O((\Delta-1)^k\log \Delta)$ (in view of Theorem \ref{S-ub}), but there exists a graph in the class for which the size of advice needed to solve $\cppe$ (or $\ppe$) in time $k$ is at least $\Omega(2^{\Delta^{k/6}})$.

\subsection{Construction of $\mathcal{J}_{\mu,k}$}
Consider any two integers $\mu \geq 2$ and $k \geq 4$. Denote by $T^h$ a port-labeled full $\mu$-ary tree of height $h$. In particular, the root has degree $\mu$ with ports labeled $0,\ldots,\mu-1$, each internal node has degree $\mu+1$ with port $\mu$ leading to its parent and ports $0,\ldots,\mu-1$ leading to its children, and each leaf has port 0 leading to its parent.

\subparagraph*{Part 1: Construct layer graphs.} We begin by constructing a collection of graphs $L_0, \ldots, L_k$, called \emph{layer graphs}, where the graph $L_j$ in this set has diameter $j$. Define $L_0$ to be a single node called $r_0^0$. Define $L_1$ to be a clique with $\mu$ nodes (and any port labeling using $0,\ldots,\mu-2$). For each $j \geq 1$, construct $L_{2j}$ as follows:
\begin{enumerate}
	\item Take two copies of $T^j$, denoted by $T^j_0$ and $T^j_1$ with roots $r^{2j}_0$ and $r^{2j}_1$, respectively. 
	\item For each leaf $\ell_0 \in T^j_0$, identify $\ell_0$ with the leaf $\ell_1 \in T^j_1$ such that the sequence of ports on the path from $r^{2j}_1$ to $\ell_1$ is the same as the sequence of ports on the path from $r^{2j}_0$ to $\ell_0$. The nodes corresponding to these identified leaves will be called the \emph{middle nodes of} $L_{2j}$.
	\item For each middle node of $L_{2j}$, label the port on the incident edge from $T^j_0$ with 0, and label the port on the incident edge from $T^j_1$ with 1.
\end{enumerate}
For each $j \geq 1$, construct $L_{2j+1}$ as follows:
\begin{enumerate}
	\item Take two copies of $T^j$, denoted by $T^j_0$ and $T^j_1$ with roots $r^{2j+1}_0$ and $r^{2j+1}_1$, respectively. 
	\item For each leaf $\ell_0 \in T^j_0$, add an edge between $\ell_0$ and the leaf $\ell_1 \in T^j_1$ such that the sequence of ports on the path from $r^{2j+1}_1$ to $\ell_1$ is the same as the sequence of ports on the path from $r^{2j+1}_0$ to $\ell_0$. The leaves of $T^j_0$ and $T^j_1$ will be called the \emph{middle nodes of} $L_{2j+1}$.
	\item  For each edge connecting two middle nodes, label both ports on the edge with 1.
\end{enumerate}
Figure \ref{figLayers} provides examples of the resulting layer graphs. From the description of the construction, we obtain the following result about the size of each layer graph.

\begin{fact}\label{fact:layersizes}
	The number of nodes in $L_0$ is 1, and the number of nodes in $L_1$ is $\mu$. For each $j \geq 1$, the number of nodes in $L_{2j}$ is $\frac{\mu^{j+1} + \mu^j - 2}{\mu-1}$, and the number of nodes in $L_{2j+1}$ is $\frac{2\mu^{j+1}-2}{\mu-1}$.
\end{fact}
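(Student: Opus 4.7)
\medskip

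\noindent\textbf{Proof proposal for Fact \ref{fact:layersizes}.}
The plan is to first count the nodes of the building block $T^j$ and then read off the sizes of $L_{2j}$ and $L_{2j+1}$ directly from the gluing rules in the construction. Since $T^j$ is a full $\mu$-ary tree of height $j$, the number of nodes at depth $i$ is $\mu^i$, so a standard geometric-series sum gives
\[
|V(T^j)| \;=\; \sum_{i=0}^{j} \mu^i \;=\; \frac{\mu^{j+1}-1}{\mu-1},
\]
of which exactly $\mu^j$ are leaves.

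For $L_0$ and $L_1$, the claimed counts hold directly from their definitions (a single node, and $\mu$ nodes of a clique, respectively). For $L_{2j}$ with $j\ge 1$, the construction begins with two disjoint copies $T_0^j$ and $T_1^j$, which together contribute $2\cdot\frac{\mu^{j+1}-1}{\mu-1}$ nodes. The only other step that affects the node count is the leaf identification: each of the $\mu^j$ leaves of $T_0^j$ is identified with the unique leaf of $T_1^j$ whose root-to-leaf port sequence matches, so exactly $\mu^j$ pairs of distinct nodes are collapsed into $\mu^j$ single nodes. Subtracting $\mu^j$ from the disjoint-union count and simplifying over the common denominator $\mu-1$ gives
\[
|V(L_{2j})| \;=\; \frac{2(\mu^{j+1}-1) - \mu^j(\mu-1)}{\mu-1} \;=\; \frac{\mu^{j+1}+\mu^j-2}{\mu-1},
\]
as required.

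For $L_{2j+1}$ with $j\ge 1$, the construction again starts from two disjoint copies $T_0^j$ and $T_1^j$, but now the matching leaves are connected by new edges rather than identified; adding edges does not change the vertex count, so
\[
|V(L_{2j+1})| \;=\; 2\cdot\frac{\mu^{j+1}-1}{\mu-1} \;=\; \frac{2\mu^{j+1}-2}{\mu-1}.
\]
There is no real obstacle here: the whole argument is a bookkeeping exercise, and the only place where one has to be careful is to remember that in $L_{2j}$ the middle nodes are created by identification (hence the subtraction of $\mu^j$), whereas in $L_{2j+1}$ no identification takes place and the two leaf-sets remain distinct.
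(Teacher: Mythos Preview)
Your proof is correct and is precisely the bookkeeping the paper leaves implicit: the paper offers no explicit argument beyond ``from the description of the construction,'' and your counting of $|V(T^j)|$ followed by subtracting the $\mu^j$ identified leaf-pairs for $L_{2j}$ (versus no subtraction for $L_{2j+1}$) is exactly what that description entails.
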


\captionsetup[subfigure]{justification=centering, labelformat=empty}

\begin{figure}
	\subcaptionbox{$L_0$}[0.2\linewidth]{\includegraphics[scale=1]{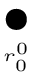}}
	\subcaptionbox{$L_1$}[0.2\linewidth]{\includegraphics[scale=0.5]{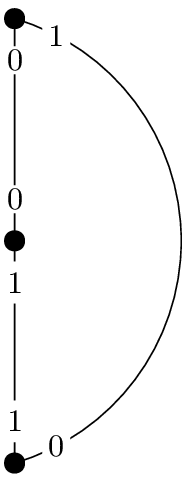}}
	\subcaptionbox{$L_2$}[0.25\linewidth]{\includegraphics[scale=0.5]{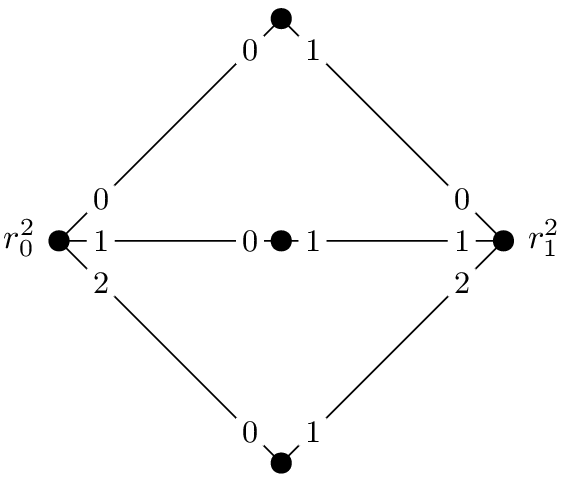}}
	\subcaptionbox{$L_3$}[0.25\linewidth]{\includegraphics[scale=0.5]{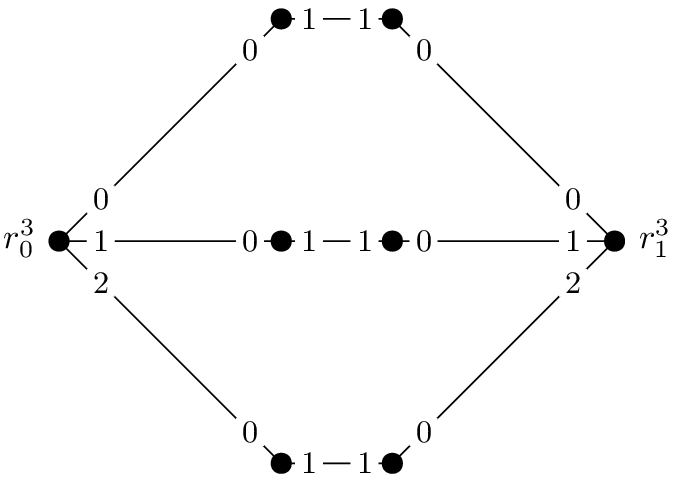}}\vspace{2mm}\\
	\subcaptionbox{$L_4$}[0.5\linewidth]{\includegraphics[scale=0.4]{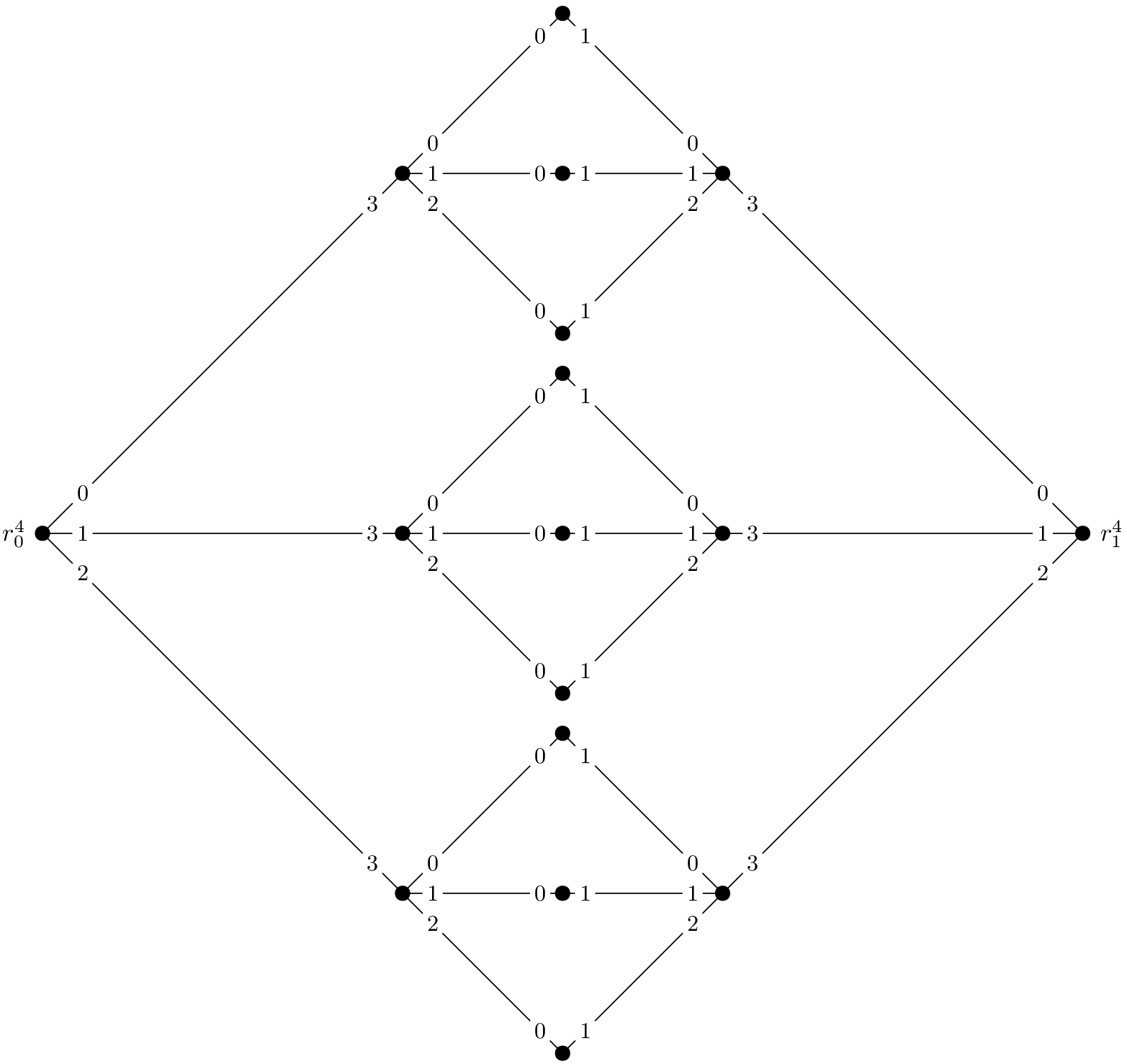}}
	\subcaptionbox{$L_5$}[0.5\linewidth]{\includegraphics[scale=0.4]{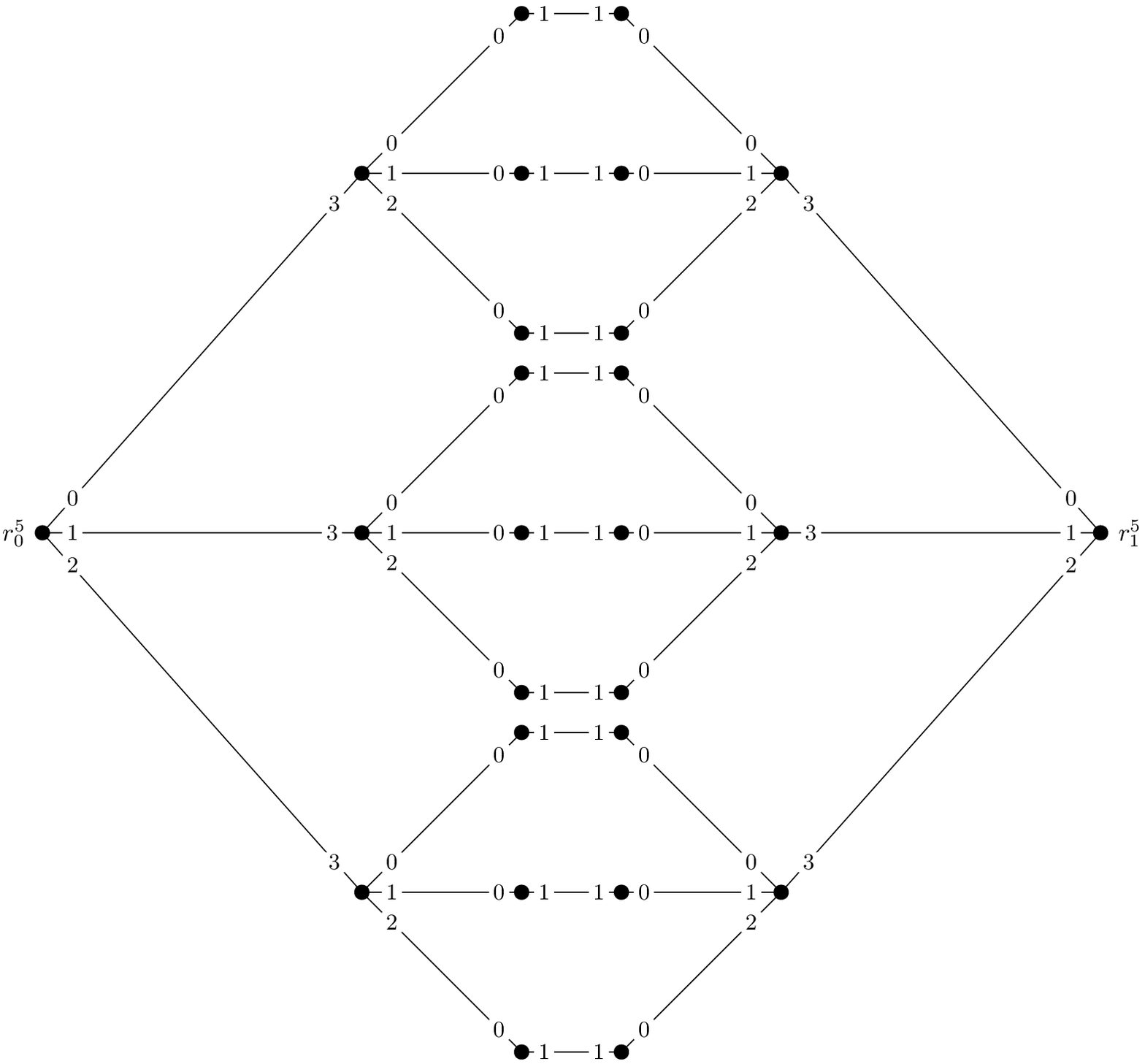}}
		\caption{Layer graphs $L_0,L_1,L_2,L_3,L_4,L_5$ when $\mu = 3$}
		\label{figLayers}
\end{figure}

In what follows, we will sometimes refer to a node in a layer graph using the outgoing port sequence that can be used to reach it starting from a node $r_0^m$ or $r_1^m$ for some $m \geq 0$. In particular, for any given integer sequence $\sigma$, the notation $v^m_{0 \leadsto \sigma}$ will denote the node in $L_m$ that can be reached starting from $r_0^m$ using the integers in $\sigma$ to trace a path of outgoing ports. The notation $v^m_{1 \leadsto \sigma}$ will be used analogously but starting from the node $r_1^m$. If $|\sigma| = 0$, then $v^m_{0 \leadsto \sigma} = r^m_0$ and $v^m_{1 \leadsto \sigma} = r^m_1$.

\subparagraph*{Part 2: Join layer graphs together to create component graphs.} Next, we construct a \emph{component graph} $H$ by starting with the disjoint union of the layer graphs $L_0, L_1, \ldots, L_{k-1}$ along with two copies of $L_k$ (denoted by $L_{k,1}$ and $L_{k,2}$) and then adding edges between consecutive layer graphs in the following way:
\begin{itemize}
	\item {\bf Edges between $L_0$ and $L_1$.} For each node $v \in L_1$, add an edge $\{r^0_0,v\}$. Label the ports at $r^0_0$ using $0,\ldots,\mu-1$, and label the newly-created port at each node in $L_1$ by $\mu-1$.
	\item {\bf Edges between $L_1$ and $L_2$.} For each $i \in 0,\ldots,\mu-1$, add an edge between $v^0_{0 \leadsto (i)}$ and $v^2_{0 \leadsto (i)}$. At each of these added edges, label the port at $v^0_{0 \leadsto (i)}$ with $\mu$, and label the port at $v^2_{0 \leadsto (i)}$ with 2. Next, add an edge connecting $v^0_{0 \leadsto (0)}$ to $r^2_0$. On this edge, label the port at $v^0_{0 \leadsto (0)}$ with $\mu+1$, and label the port at $r^2_0$ with $\mu$. Similarly, add an edge connecting $v^0_{0 \leadsto (\mu-1)}$ to $r^2_1$. On this edge, label the port at $v^0_{0 \leadsto (\mu-1)}$ with $\mu+1$, and label the port at $r^2_1$ with $\mu$.
	
%
%
	\item {\bf Edges between $L_m$ and $L_{m+1}$ when $2 \leq m < k-1$.} 
		\begin{itemize}
			\item Add an edge between $r^m_0$ and $r^{m+1}_0$, and add an edge between $r^m_1$ and $r^{m+1}_1$. Label the new ports at $r^m_0$ and $r^{m}_1$ with $\mu+1$, and label the new ports at $r^{m+1}_0$ and $r^{m+1}_1$ with $\mu$.
			\item Connect each non-middle node of $L_m$ (other than $r^m_0$ and $r^m_1$) to its corresponding non-middle node in $L_{m+1}$. Formally, for each sequence $\sigma$ consisting of integers from the range $\{0,\ldots,\mu-1\}$ such that $1 \leq |\sigma| < \left\lfloor\frac{m}{2}\right\rfloor$, add an edge between $v^m_{0 \leadsto \sigma}$ and $v^{m+1}_{0 \leadsto \sigma}$, and add an edge between $v^m_{1 \leadsto \sigma}$ and $v^{m+1}_{1 \leadsto \sigma}$. At $v^m_{0 \leadsto \sigma}$ and $v^m_{1 \leadsto \sigma}$, label the new port with $\mu+2$. At $v^{m+1}_{0 \leadsto \sigma}$ and $v^{m+1}_{1 \leadsto \sigma}$, label the new port with $\mu+1$.
			\item {\bf Case 1: $m$ is even.} Connect each middle node of $L_m$ to its two corresponding middle nodes in $L_{m+1}$. Formally, for each sequence $\sigma$ consisting of integers from the range $\{0,\ldots,\mu-1\}$ such that $|\sigma| = \frac{m}{2}$, add an edge between $v^m_{0 \leadsto \sigma}$ and $v^{m+1}_{0 \leadsto \sigma}$, label the new port at $v^m_{0 \leadsto \sigma}$ with 3 if $m=2$, or with 4 if $m > 2$, and label the new port at $v^{m+1}_{0 \leadsto \sigma}$ with 2. Also, add an edge between $v^m_{0 \leadsto \sigma}$ and $v^{m+1}_{1 \leadsto \sigma}$, label the new port at $v^m_{0 \leadsto \sigma}$ with 4 if $m=2$, or with $5$ if $m > 2$, and label the new port at $v^{m+1}_{1 \leadsto \sigma}$ with $2$.
			\item {\bf Case 2: $m$ is odd.} In what follows, we use $\doubleplus$ to denote sequence concatenation. Connect each middle node of $L_m$ to its corresponding node in $L_{m+1}$, as well as the $\mu$ middle nodes in $L_{m+1}$ adjacent to it. Formally, for each sequence $\sigma$ consisting of integers from the range $\{0,\ldots,\mu-1\}$ such that $|\sigma| = \frac{m-1}{2}$:
			\begin{itemize}
				\item Add an edge between $v^m_{0 \leadsto \sigma}$ and $v^{m+1}_{0 \leadsto \sigma}$, and add an edge between $v^m_{1 \leadsto \sigma}$ and $v^{m+1}_{1 \leadsto \sigma}$. At $v^m_{0 \leadsto \sigma}$ and $v^m_{1 \leadsto \sigma}$, label the new port with $3$. At $v^{m+1}_{0 \leadsto \sigma}$ and $v^{m+1}_{1 \leadsto \sigma}$, label the new port with $\mu+1$.
				\item For each $i \in \{0,\ldots,\mu-1\}$, add an edge between $v^m_{0 \leadsto \sigma}$ and $v^{m+1}_{0 \leadsto (\sigma \doubleplus i)}$, and add an edge between $v^m_{1 \leadsto \sigma}$ and $v^{m+1}_{1 \leadsto (\sigma \doubleplus i)}$. For each added edge, label the new port at $v^m_{0 \leadsto \sigma}$ and $v^m_{1 \leadsto \sigma}$ with $4+i$, label the new port at $v^{m+1}_{0 \leadsto (\sigma \doubleplus i)}$ with 2, and label the new port at $v^{m+1}_{1 \leadsto (\sigma \doubleplus i)}$ with 3.
			\end{itemize} 
		\end{itemize}
	
	\item {\bf Edges between $L_m$ and $L_{m+1}$ when $m = k-1$.} Add edges between $L_{k-1}$ and $L_{k,1}$ according to the $m < k-1$ case above. Then, add edges between $L_{k-1}$ and $L_{k,2}$ using a slightly modified version of the $m < k-1$ case: increase the values of port labels used at nodes in $L_{k-1}$ so that they do not conflict with the labels that were used when adding edges between $L_{k-1}$ and $L_{k,1}$.
	
\end{itemize}

	Figure \ref{fig:firstthreelayers} illustrates the edges added between layers $L_0$, $L_1$, $L_2$, and $L_3$. Figure \ref{fig:Interlayer-OddToEven} illustrates the edges added between layers $L_m$ and $L_{m+1}$ when $m$ is odd and strictly less than $k-1$. Figure \ref{fig:Interlayer-EvenToOdd} illustrates the edges added between $L_m$ and $L_{m+1}$ when $m$ is even and strictly less than $k-1$.
\begin{figure}
	\centering
	\includegraphics[width=\linewidth]{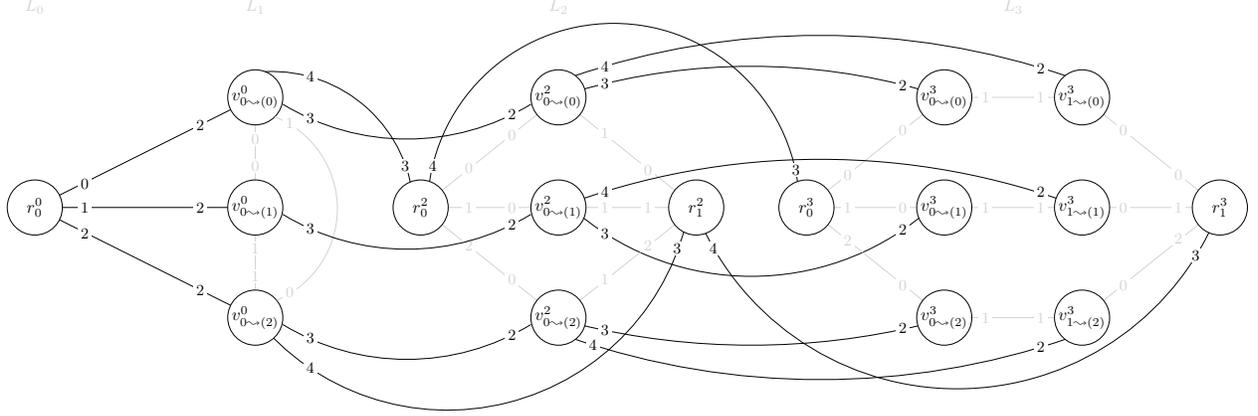}
	\caption{Subgraph of $H$ induced by layers $L_0$, $L_1$, $L_2$, and $L_3$ when $\mu = 3$. Grey edges were added in Part 1 of the construction (edges within a layer), and black edges were added in Part 2 of the construction (edges between different layers).}
	\label{fig:firstthreelayers}
\end{figure}

\begin{figure}
	\centering
	\includegraphics[width=\linewidth]{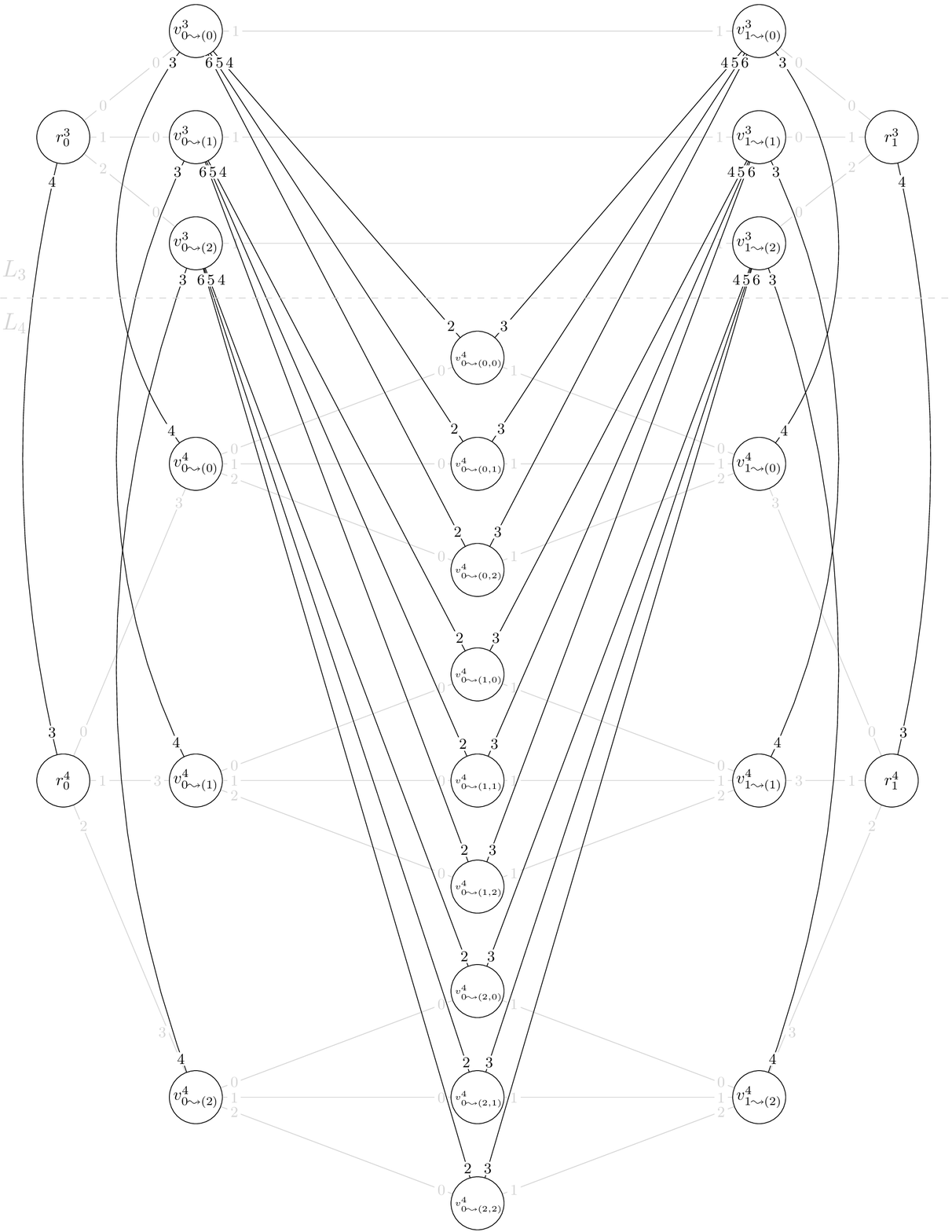}
	\caption{Subgraph of $H$ induced by layers $L_3$ and $L_4$ when $\mu = 3$ and $k > 5$. Grey edges were added in Part 1 of the construction (i.e., edges within a layer), and black edges were added in Part 2 of the construction (i.e., edges between different layers).}
	\label{fig:Interlayer-OddToEven}
\end{figure}
	
\begin{figure}
	\centering
	\includegraphics[width=\linewidth]{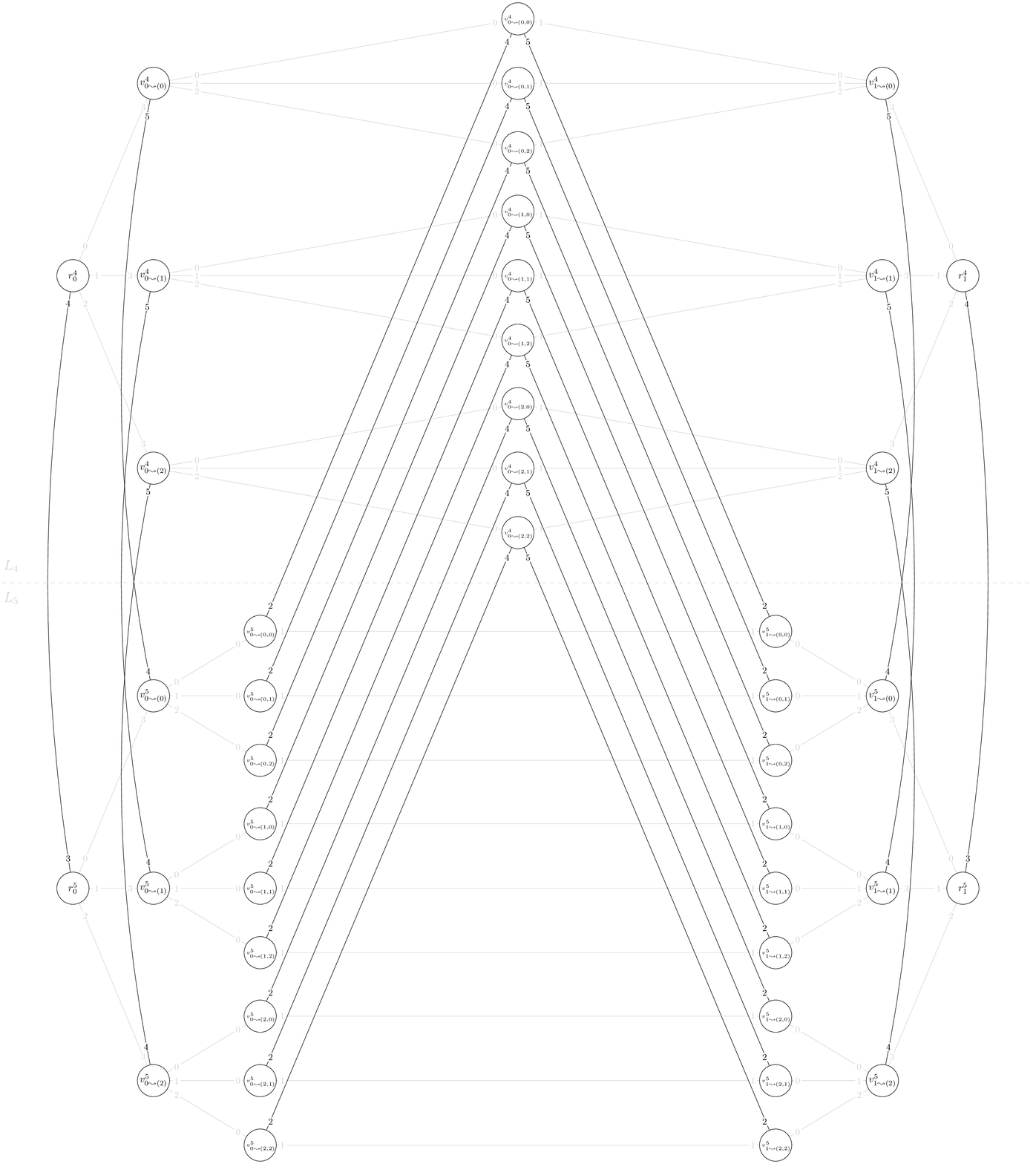}
	\caption{Subgraph of $H$ induced by layers $L_4$ and $L_5$ when $\mu = 3$ and $k > 5$. Grey edges were added in Part 1 of the construction (i.e., edges within a layer), and black edges were added in Part 2 of the construction (i.e., edges between different layers).}
	\label{fig:Interlayer-EvenToOdd}
\end{figure}

The edges we added to connect the layer graphs were chosen in such a way that, from any node $v$ in the resulting component graph $H$, all the nodes in $H$ are contained within $v$'s truncated view up to distance $k$, but, there exist some nodes in the last layer $L_k$ that are not within $v$'s truncated view up to distance $k-1$. This property will be formally proven and used later when arguing that the $\select$-index of each graph in our constructed class is at least $k$.

%

%


\subparagraph*{Part 3: Create a gadget graph using the component graph $H$.} Create four copies of the component graph $H$, which we refer to as \emph{left}, \emph{top}, \emph{right}, and \emph{bottom} component graphs, denoted by $H_L$, $H_T$, $H_R$, and $H_B$, respectively. Merge the four $r^0_0$ nodes of these component graphs together to create a new node $\rho$ that has degree $4\mu$. To avoid duplicate port labels at $\rho$ due to the merge, add $\mu$ to each port label at $\rho$ in $H_T$, add $2\mu$ to each port label at $\rho$ in $H_R$, and add $3\mu$ to each port label at $\rho$ in $H_B$. The resulting graph is called the \emph{gadget} $\widehat{H}$. Figure \ref{fig:gadgetDemo} illustrates the constructed graph $\widehat{H}$.

\begin{figure}[h!]
	\centering
	\includegraphics[scale=0.8]{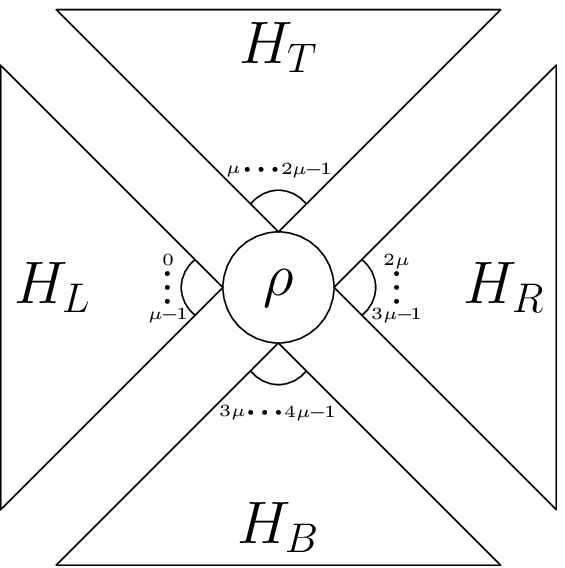}
	\caption{The gadget graph $\widehat{H}$}
	\label{fig:gadgetDemo}
\end{figure}

\subparagraph*{Part 4: Create a template graph by chaining together gadget graphs.} 

First, we induce an ordering on the vertices of the layer graph $L_{k}$. Denote by $z$ the number of nodes in $L_{k}$. As described in Part 2 of the construction above, each node in $L_{k}$ can be represented using $v^k_{b \leadsto \sigma}$ for some $b \in \{0,1\}$ and some sequence $\sigma$ of integers from the range $\{0,\ldots,\mu-1\}$. By prepending $b$ to the sequence $\sigma$, we obtain a sequence that identifies each node, and we order the resulting sequences using lexicographic order to obtain an ordered list $w_1,\ldots,w_z$ of the nodes of $L_{k}$. Recalling that each component graph $H$ contains two copies of $L_k$, i.e., $L_{k,1}$ and $L_{k,2}$, we use the notation $w_{1,1}, \ldots, w_{z,1}$ to refer to the nodes in $L_{k,1}$, and we use $w_{1,2},\ldots,w_{z,2}$ to refer to the nodes in $L_{k,2}$.

Next, we describe how to build our template graph. For each $i \in \{0,\ldots,2^z-1\}$, denote by $x_i$ the $z$-bit binary representation of $i$. Moreover, for each $i \in \{0,\ldots,2^z-1\}$, create a copy of the gadget graph $\widehat{H}$, denote it by $\widehat{H_i}$ and denote its $\rho$ node by $\rho_i$. We create a template graph $J$ by taking the disjoint union of all $\widehat{H_0},\ldots,\widehat{H_{2^z-1}}$, and adding edges as follows. For each $i \geq 1$, and for each $q \in \{1,\ldots,z\}$ such that the $q^{th}$ bit of $x_i$ is 1:
\begin{enumerate}[label=\arabic*.]
	\item Add an edge between $w_{q,1}$ and $w_{q,2}$ in component $H_B$ of gadget $\widehat{H_{i-1}}$.
	\item Add an edge between $w_{q,1}$ and $w_{q,2}$ in component $H_T$ of gadget $\widehat{H_i}$.
	\item Add an edge between $w_{q,1}$ in component $H_R$ of gadget $\widehat{H_{i-1}}$ and $w_{q,2}$ in component $H_L$ of gadget $\widehat{H_{i}}$.
	\item Add an edge between $w_{q,2}$ in component $H_R$ of gadget $\widehat{H_{i-1}}$ and $w_{q,1}$ in component $H_L$ of gadget $\widehat{H_i}$.
\end{enumerate}
Figure \ref{fig:connectGadgets} illustrates how gadgets are chained together to form the template graph $J$.

\begin{figure}[h!]
	\centering
	\includegraphics[scale=1]{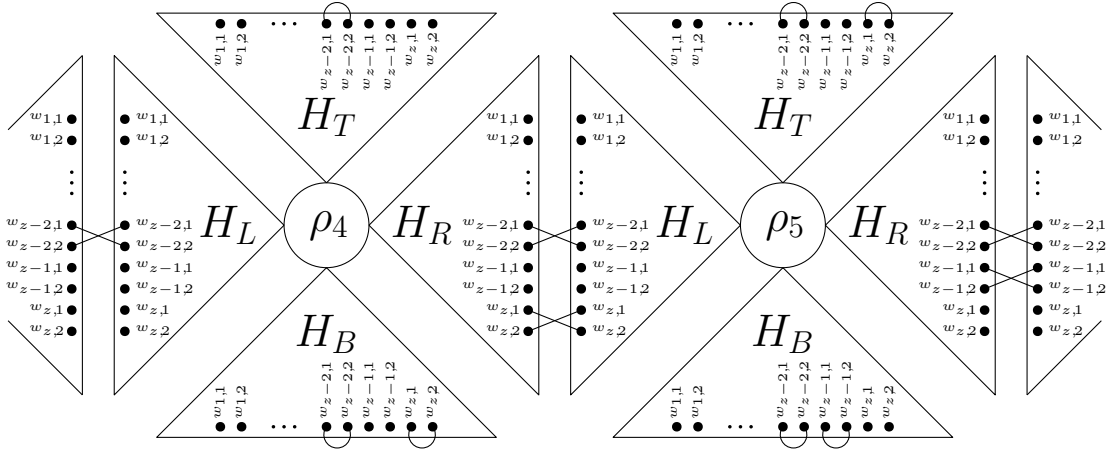}
	\caption{Example of edges added at layer-$k$ nodes within gadgets $\widehat{H_4}$ and $\widehat{H_5}$ in the construction of template graph $J$ (Part 4)}
	\label{fig:connectGadgets}
\end{figure}

For each added edge $\{u_1,u_2\}$, label the port at $u_1$ with $deg_H(u_1)$ (which is equal to $deg_{J}(u_1)-1$) and label the port at $u_2$ with $deg_H(u_2)$ (which is equal to $deg_{J}(u_2)-1$). Note that, for any fixed $q$, all port labels on all added edges are the same, since all endpoints are copies of the same node $w_q$ in the component graph $H$.

\subparagraph*{Part 5: Construct the final class of graphs by copying the template graph and swapping ports.} Consider all binary sequences with length $2^{z-1}$. For an arbitrary such sequence $Y = (y_0,\ldots,y_{2^{z-1}-1})$, construct a graph $J_Y$ by taking a copy of the template graph $J$, and, for each $i$ such that $y_i = 1$, perform the following modifications:
\begin{enumerate}[label=\arabic*.]
	\item For each $x \in \{2\mu,\ldots,3\mu-1\}$, swap ports $x$ and $x+\mu$ at node $\rho_i$\\(i.e., the ports at $H_R$ and $H_B$ of node $\rho$ of gadget $\widehat{H_i}$).
	\item For each $x \in \{0,\ldots,\mu-1\}$, swap ports $x$ and $x+\mu$ at node $\rho_{2^z-1-i}$\\(i.e., the ports at $H_L$ and $H_T$ of node $\rho$ of gadget $\widehat{H_{2^z-1-i}}$).
\end{enumerate}
Figure \ref{fig:possibleSwaps} demonstrates the three possible outcomes of performing the above to an arbitrary gadget $\widehat{H_i}$, depending on the values of $i$ and $y_i$.

\begin{figure}[h!]
	\centering
	\subcaptionbox{(a)}[0.32\linewidth]{\includegraphics[scale=0.85]{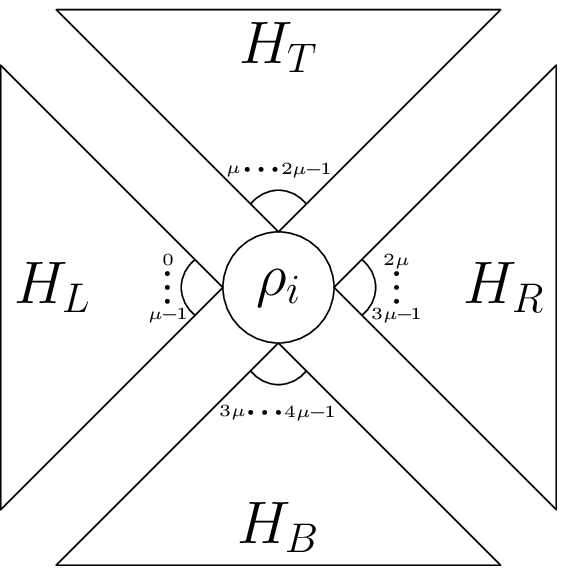}}
	\subcaptionbox{(b)}[0.32\linewidth]{\includegraphics[scale=0.85]{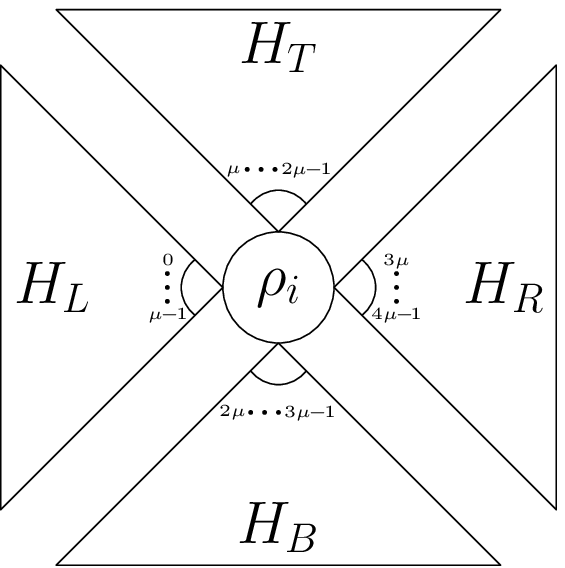}}
	\subcaptionbox{(c)}[0.32\linewidth]{\includegraphics[scale=0.85]{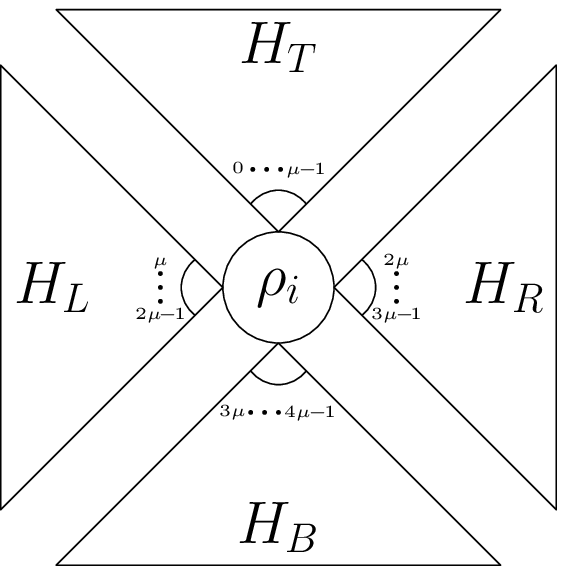}}
	\caption{For an arbitrary gadget $\widehat{H_i}$ in $J_Y$, the outcome of performing Part 5 when: (a) $y_i=0$, (b) $y_i=1$ and $i \in \{0,\ldots,2^{z-1}-1\}$, (c) $y_{2^z-1-i}=1$ and $i \in \{2^{z-1},\ldots,2^{z}-1\}$}
	\label{fig:possibleSwaps}
\end{figure}

Figure \ref{fig:JY-example-10} gives an example of a fully constructed graph $J_Y$.

\begin{figure}[h!]
	\centering
	\includegraphics[width=\linewidth]{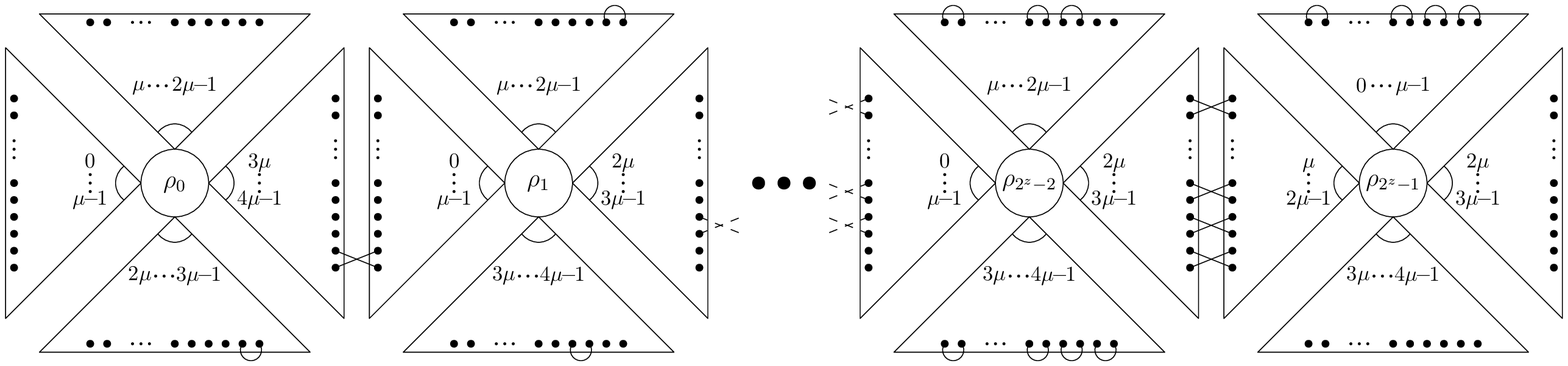}
	\caption{The graph $J_Y$ when $Y = (1,0,\ldots,0)$}
	\label{fig:JY-example-10}
\end{figure}

Taken over all possible binary sequences $Y$ of length exactly $2^{z-1}$, the graphs $J_Y$ together form the class $\mathcal{J}_{\mu,k}$. We see immediately that the number of graphs in $\mathcal{J}_{\mu,k}$ is the number of binary sequences of length $2^{z-1}$, where $z$ is the number of nodes in layer graph $L_k$ (and this value can be obtained using Fact \ref{fact:layersizes}).

\begin{fact}\label{fact:JClassSize}
	Let $z$ be the number of nodes in layer graph $L_k$. For any positive integers $\mu \geq 2$ and $k \geq 4$, we have $|\mathcal{J}_{\mu,k}| = 2^{2^{z-1}}$, where $\mu^{\lfloor k/2 \rfloor} \leq z \leq 4\mu^{\lfloor k/2 \rfloor}$.
\end{fact}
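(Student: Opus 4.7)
The plan is to split the fact into two independent components: (1) the cardinality formula $|\mathcal{J}_{\mu,k}| = 2^{2^{z-1}}$, and (2) the sandwich bound $\mu^{\lfloor k/2 \rfloor} \leq z \leq 4\mu^{\lfloor k/2 \rfloor}$.

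For part (1), I would read the count directly off Part 5 of the construction: the class $\mathcal{J}_{\mu,k}$ is defined by ranging over all binary sequences $Y$ of length exactly $2^{z-1}$, with each such $Y$ giving rise to a graph $J_Y$. Since the number of binary sequences of length $2^{z-1}$ is $2^{2^{z-1}}$, the claimed size follows. (This counts the indexed family, consistently with how Facts \ref{classCountG} and \ref{classCountU} counted the classes $\cG_{\Delta,k}$ and $\cU_{\Delta,k}$.)

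For part (2), I would apply Fact \ref{fact:layersizes} to obtain a closed form for $z$, then split into cases according to the parity of $k$. Since $k \geq 4$, set $j = \lfloor k/2 \rfloor \geq 2$. If $k = 2j$ then $z = (\mu^{j+1} + \mu^j - 2)/(\mu-1)$; if $k = 2j+1$ then $z = (2\mu^{j+1} - 2)/(\mu-1)$. In the even case, the lower bound $z \geq \mu^j$ reduces to $\mu^{j+1} + \mu^j - 2 \geq \mu^{j+1} - \mu^j$, i.e., $2\mu^j \geq 2$, which holds for $\mu \geq 2$. The upper bound reduces to checking $(\mu+1)/(\mu-1) \leq 4$, which holds for all $\mu \geq 2$ (indeed, $\leq 3$). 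In the odd case, the lower bound reduces to $2\mu^{j+1} - 2 \geq \mu^{j+1} - \mu^j$, i.e., $\mu^{j+1} + \mu^j \geq 2$, which holds trivially. The upper bound reduces to $2\mu/(\mu-1) \leq 4$, which holds for all $\mu \geq 2$ (with equality at $\mu = 2$).

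There is no real conceptual obstacle here — the proof is essentially computational. The main thing to watch is keeping the case split on the parity of $k$ organized, and checking the tightness of the upper bound at the smallest value $\mu = 2$, where $2\mu/(\mu - 1) = 4$ exactly meets the stated constant, which is why the bound is stated with coefficient $4$ rather than something smaller.
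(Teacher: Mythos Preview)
Your proposal is correct and matches the paper's approach: both read off $|\mathcal{J}_{\mu,k}| = 2^{2^{z-1}}$ from Part~5 of the construction and derive the bounds on $z$ from Fact~\ref{fact:layersizes} via elementary algebra. The only cosmetic difference is that the paper avoids your explicit parity split by noting $|L_{2j}| \leq |L_{2j+1}|$ and then bounding $|L_{2j+1}|$ above and $|L_{2j}|$ below, but the computations are essentially the same.
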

\begin{proof}
	The number of binary sequences of length $2^{z-1}$ is $2^{2^{z-1}}$. For the upper bound on $z$, note that by Fact \ref{fact:layersizes}, we have $L_{2j} \leq L_{2j+1} = \frac{2\mu^{j+1}-2}{\mu-1} \leq \frac{2\mu^{j+1}}{\mu/2} = 4\mu^j$. For the lower bound on $z$, note that by Fact \ref{fact:layersizes}, we have $L_{2j+1} \geq L_{2j} = \frac{\mu^{j+1} + \mu^j - 2}{\mu-1} \geq \frac{\mu^{j+1}}{\mu} = \mu^j$. Substituting $j = \lfloor k/2 \rfloor$ gives us the desired bounds on the number of nodes in layer graph $L_k$.
\end{proof}

The following result about each $J_Y \in \mathcal{J}_{\mu,k}$ will be instrumental in proving that the $\select$-index is always at least $k$, and it motivates many of the decisions made in the design of the graph class. In particular, within $k-1$ rounds of communication, no node can see all nodes in the $k^{th}$ layer of its component, so it cannot determine which integer is `encoded' by the added edges in the $k^{th}$ layer. This will later be used to show that no node can determine with certainty in which gadget it is located.

\begin{lemma}\label{lem:invisibleLeaf}
	For any $i \in \{0,\ldots,2^z-1\}$, any $D \in \{L,T,R,B\}$, and any node $v$ in component $H_D$ of gadget $\widehat{H_i}$, there exists an $\ell \in \{1,\ldots,z\}$ such that $w_{\ell,1}$ and $w_{\ell,2}$ in $H_D$ of $\widehat{H_i}$ are not contained in $\cB^{k-1}(v)$.
\end{lemma}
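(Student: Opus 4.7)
Fix $v$ in component $H_D$ of gadget $\widehat{H_i}$. The goal is to exhibit an index $\ell \in \{1,\ldots,z\}$ such that both specific copies $w_{\ell,1} \in L_{k,1}$ and $w_{\ell,2} \in L_{k,2}$ of $H_D$ of $\widehat{H_i}$ are at distance at least $k$ from $v$ in $J_Y$.

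My first step is to reduce distance computations to paths staying inside $H_D$, augmented by the Part 4 intra-component edges $\{w_{q,1}, w_{q,2}\}$ when $D \in \{T, B\}$. Any path from $v$ leaving the gadget $\widehat{H_i}$ must traverse a Part 4 inter-gadget edge, and all such edges are incident only to $L_k$-nodes, so a round-trip through a neighbouring gadget back to $L_k$ of $H_D$ costs at least $2k$ edges. Any path that leaves $H_D$ via $\rho_i$ has length at least $d(v,\rho_i) + d(\rho_i, w_{\ell,b})$, and since every $L_k$-node lies at distance exactly $k$ from $\rho_i$ within $H_D$ (the layered construction in Part 2 adds edges only between consecutive layers), this sum is at least $k$. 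Hence, external detours cannot produce a path shorter than $k$.

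Next, within $H_D$ alone, the layered structure yields a lower bound: if $v \in L_m$, then the distance from $v$ to any $L_k$-node is at least $k-m$. To push this to $\geq k$, I would pick $w_\ell$ antipodal in $L_k$ to the set of entry points of $v$ in $L_k$ (the nodes of $L_k$ that $v$ can reach in exactly $k-m$ edges through the layers). Since $L_k$ consists of two copies of $T^{\lfloor k/2 \rfloor}$ glued at their leaves and has diameter $k$, a tree geometry argument yields a node at $L_k$-distance at least $m$ from every entry point, giving total distance at least $k$. By the structural identity between the attachments of $L_{k,1}$ and $L_{k,2}$ to $L_{k-1}$ (Part 2 uses the same edge pattern up to port relabels), the same $\ell$ witnesses both $d(v, w_{\ell,1}) \geq k$ and $d(v, w_{\ell,2}) \geq k$ when $v \notin L_{k,1} \cup L_{k,2}$. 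The case $v \in L_{k,1}$ (symmetrically $L_{k,2}$) is handled by choosing $w_\ell$ antipodal to $v$ within $L_{k,1}$ and routing through $L_{k-1}$ to bound the distance to $w_{\ell,2}$.

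The final step is to account for the Part 4 intra-component shortcut edges $\{w_{q,1}, w_{q,2}\}$ added in $H_T$ and $H_B$ for indices $q$ determined by the binary encoding $x_i$. Each such shortcut can shave distance between $L_{k,1}$ and $L_{k,2}$ for pairs whose indices are near $q$ in the tree metric, so the chosen $\ell$ must additionally avoid neighbourhoods of all active shortcuts. Since there are at most $z$ active shortcuts and each spoils only a bounded subset of $L_k$-indices reachable within $k-1$ hops from $v$, a pigeonhole argument, exploiting $z = \Theta(\mu^{\lfloor k/2 \rfloor})$, provides a viable $\ell$. The main obstacle is carrying out this combined analysis precisely: one must simultaneously control the layered-path distance, the tree antipode choice, and the shortcut avoidance, with the most delicate subcase being $v \in L_{k,1}$ close to an active shortcut endpoint, where verifying the distance bound to $w_{\ell,2}$ requires ruling out shortcut-assisted routes as well as the direct route via $L_{k-1}$.
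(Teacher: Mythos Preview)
Your overall plan is sound and would work, but it is considerably more elaborate than the paper's argument, and the elaboration stems from a concern that turns out to be unnecessary.

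The paper argues entirely inside the component graph $H$. It proves four structural claims: (i) inter-layer edges never decrease distance between their lower endpoints; (ii) any shortest path between $u\in L_{j_1}$ and $w\in L_{j_2}$ can be normalized so that all its layer edges lie in $L_{j_1}$; (iii) from any $v^k_{b\leadsto\sigma}$ there is a \emph{unique} simple inter-layer path down to layer $L_j$, of length exactly $k-j$, ending at $v^j_{b\leadsto\sigma}$; (iv) each $L_m$ has diameter $m$, i.e., every node has an antipode in its own layer at distance $m$. With these in hand, for $v\in L_j$ one picks $w^j=v^j_{b\leadsto\sigma}$ antipodal to $v$ in $L_j$ and sets $w_\ell=v^k_{b\leadsto\sigma}$; claims (ii) and (iii) then force $d_H(v,w_{\ell,1})=d_H(v,w_{\ell,2})=(k-j)+j=k$. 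No counting or pigeonhole is involved: a single explicit $\ell$ is produced.

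Your treatment of the Part~4 intra-component shortcuts $\{w_{q,1},w_{q,2}\}$ in $H_T$ and $H_B$ is where you diverge most. These edges cannot shorten the distance to the chosen $w_{\ell,1}$ (or $w_{\ell,2}$): for $v\notin L_{k,1}\cup L_{k,2}$, the two copies $L_{k,1},L_{k,2}$ are attached to $L_{k-1}$ identically, so $d_H(v,w_{q,1})=d_H(v,w_{q,2})$, and any path $v\to w_{q,b}\to w_{q,3-b}\to\cdots\to w_{\ell,1}$ has length at least $d_H(v,w_{q,3-b})+1+\bigl(k-d_H(v,w_{q,3-b})\bigr)=k+1$. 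Hence there is nothing to ``avoid'' and the pigeonhole step in your final paragraph is superfluous. Your reduction to paths inside $H_D$ (your first step) is correct and is a point the paper leaves implicit; it is worth a sentence, but it is straightforward for the reasons you give.

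In short: your route is correct but heavier than needed. The paper's cleaner alternative is to prove the four claims above and read off the explicit antipodal $\ell$, rather than combining an antipode choice with a separate shortcut-avoidance argument.
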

\begin{proof}
	We introduce some new terminology to distinguish two types of edges. An edge whose endpoints are within the same layer graph $L_j$ for some $j \in \{0,\ldots,k\}$ will be called a \emph{layer edge}. An edge whose endpoints are in consecutive layer graphs $L_j$ and $L_{j+1}$ for some $j \in \{0,\ldots,k-1\}$ will be called an \emph{inter-layer edge}. To clarify to which layer a node $u$ belongs, we will often write the layer number as a superscript, e.g., $u^j$ belongs to layer graph $L_j$. We begin by proving some technical claims about our construction of $H$. Claim \ref{claim:distance-preserved} can be verified by a case analysis of Part 2 of the construction.

	\begin{claim}\label{claim:distance-preserved}
	Consider any $m \in \{0,\ldots,k\}$, and any two inter-layer edges $\{u^m, u^{m+1}\}$ and $\{w^m,w^{m+1}\}$ such that $u^{m+1} \neq w^{m+1}$. If $d(u^{m},w^{m}) = t$ for some non-negative integer $t$, then $d(u^{m+1},w^{m+1}) \geq t$.
	\end{claim}

	\begin{claim}\label{claim:layer-edges-in-j1}
	Consider any node $u$ in layer $L_{j_1}$ and any node $w$ in layer $L_{j_2}$ such that $0 \leq j_1 \leq j_2 \leq k$. There exists a shortest path between $u$ and $w$ such that all layer edges in the path have both endpoints in $L_{j_1}$. 
	\end{claim}
 	To prove the claim, let $w^{j_1}$ be a node in layer $L_{j_1}$ such that $w$ can be reached from $w^{j_1}$ using $j_2 - j_1$ inter-layer edges. Denote by $v_{j_1+1},\ldots,v_{j_2-1}$ the interior nodes along this path from $w^{j_1}$ to $w$, and note that each $v_i$ belongs to layer $L_{i}$. 
 	Let $P_1$ be any shortest path from $u$ to $w$. Note that $P_1$ is some sequence of layer edges and inter-layer edges. For each $i \in j_1,\ldots,j_2$, denote by $V_i$ the subsequence of vertices on the path $P_1$ that are contained in layer $L_i$. Note that each $V_i$ is non-empty, since, by construction, each edge in $H$ connects two nodes in the same layer or in consecutive layers. In view of Claim \ref{claim:distance-preserved}, we may assume, without loss of generality, that $P_1$ has the form $V_{j_1}\cdot V_{j_1+1} \cdots V_{j_2}$. Let $j'$ be the largest index in $\{j_1,\ldots,j_2\}$ such that $V_{j'}$ contains more than one node. If $j' = j_1+1$, then we are done, as this would mean that only $V_{j_1}$ contains multiple nodes, and thus all layer edges would be contained in $L_{j_1}$, as desired. So, we proceed under the assumption that $j' > j_1+1$, and demonstrate a procedure that will strictly decrease the value of $j'$. Denote by $a_{j'-1}$ the last node in $V_{j'-1}$, and denote by $b_{j'}$ the first node of $V_{j'}$. 
 		\begin{itemize}
 			\item Suppose that $b_{j'} = v_{j'}$. We create a new path $P_2$ consisting of the node sequence $V_{j_1}\cdots V_{j'-1}\cdot v_{j'} \cdot v_{j'+1} \cdots v_{j_2}$.
 			\item Suppose that $b_{j'} \neq v_{j'}$. Note that the edge $\{a_{j'-1},b_{j'}\}$ is an inter-layer edge. Consider the distance between $a_{j'-1}$ and $v_{j'-1}$, and denote this distance by $t$. By Claim \ref{claim:distance-preserved}, the distance between $b_{j'}$ and $v_{j'}$ is at least $t$. So, we create a new path $P_2$ consisting of the node sequence $V_{j_1}\cdots V'_{j'-1}\cdot v_{j'} \cdot V_{j'+1}\cdots V_{j_2}$, where $V'_{j'-1}$ is the concatenation of the sequence $V_{j'-1}$ and the shortest path from $a_{j'-1}$ and $v_{j'-1}$.
 		\end{itemize}
 	 In both cases, $P_2$ is also a shortest path between $u$ and $w$. However, observe that $P_2$ can be written as the concatenation of sequences $W_{j_1} \cdots W_{j_2}$, where each $W_i$ is the subsequence of vertices on the path $P_2$ that are contained in layer $L_i$, and, note that the largest index $j'$ in $\{j_1,\ldots,j_2\}$ such that $W_{j'}$ contains more than one node is strictly smaller than the value of $j'$ for the path $P_1$. Repeating the above procedure enough times, we will eventually reach the case where $j' = j_1+1$, which, as remarked earlier, would complete the proof of Claim \ref{claim:layer-edges-in-j1}.

	\begin{claim}\label{claim:unique-Interlayer}
	Consider any $j \in \{2,\ldots,k\}$ and any node $w^{j}$ in layer $L_j$ of $H$, where $w^j = v^j_{b \leadsto \sigma}$ for some fixed $b \in \{0,1\}$ and some integer sequence $\sigma$. There exists a unique simple path $Q$ starting at $v^k_{b \leadsto \sigma}$ consisting only of inter-layer edges such that the other endpoint of $Q$ is in layer $L_j$. Moreover, $Q$ consists of exactly $k-j$ edges, and the two endpoints of $Q$ are $v^k_{b \leadsto \sigma}$ and $v^j_{b \leadsto \sigma}$.
	\end{claim}
	To prove the claim, we proceed by induction on the value of $j$. The result is trivial when $j=k$. Assume that the statement holds for some $j \in \{3,\ldots,k\}$, and consider any $w^{j-1} = v^{j-1}_{b \leadsto \sigma}$ for some fixed $b \in \{0,1\}$ and some integer sequence $\sigma$. From Part 2 of the construction, exactly one inter-layer edge is added between layers $L_{j-1}$ and $L_{j}$ with endpoint $v^{j}_{b \leadsto \sigma}$, and this edge is $\{v^{j-1}_{b \leadsto \sigma},v^{j}_{b \leadsto \sigma}\}$. By the induction hypothesis, there exists a unique simple path $Q$ starting at $v^k_{b \leadsto \sigma}$ consisting only of inter-layer edges such that the other endpoint of $Q$ is in layer $L_j$. Moreover, $Q$ consists of exactly $k-j$ edges, and the two endpoints of $Q$ are $v^k_{b \leadsto \sigma}$ and $v^j_{b \leadsto \sigma}$. Appending the edge $\{v^{j-1}_{b \leadsto \sigma},v^{j}_{b \leadsto \sigma}\}$ to $Q$ gives the unique simple path $Q'$ with endpoint in $L_{j-1}$, the length is $k-j+1 = k-(j-1)$, and the two endpoints are $v^k_{b \leadsto \sigma}$ and $v^{j-1}_{b \leadsto \sigma}$, which completes the induction step and the proof of Claim \ref{claim:unique-Interlayer}.

	\begin{claim}\label{claim:distance-j-in-Lj}
	For any $m \in \{0,\ldots,k\}$ and any node $u^{m}$ in layer $L_m$ of $H$, there exists a node $w^m$ in layer $L_m$ of $H$ such that the distance between $u^m$ and $w^m$ in $L_m$ is exactly $m$.
	\end{claim}
	To prove the claim, note that the cases $m=0$ and $m=1$ are trivial to verify by inspection. Another simple case is when $u^m = r^m_b$ for some $b \in \{0,1\}$, since, in this case, the node $w^m = r^m_{1-b}$ is at distance $m$ from $u^m$. In all other cases, note that $u^m$ is an internal node on some path $P$ of length $m$ between $r^m_0$ and $r^m_1$, where the port label at $r^m_0$ and the port label at $r^m_1$ on this path are both equal to some $p \in \{0,\ldots,\mu-1\}$. Consider the path $Q$ of length $m$ between $r^m_0$ and $r^m_1$ such that the port label at $r^m_0$ and the port label at $r^m_1$ on this path are both equal to $p+1$ modulo $\mu$. As $P$ and $Q$ are internally disjoint, they form a cycle of length $2m$. Thus, picking the node $w^m$ that is diametrically opposite to $u^m$ on this cycle satisfies the desired conditions. This completes the proof of Claim \ref{claim:distance-j-in-Lj}.

	We now proceed to prove the lemma. Consider any $j \in \{0,\ldots,k\}$ and any node $u^j$ in layer $L_j$ of $H$. We show that there exists an $\ell \in \{1,\ldots,z\}$ such that $w_{\ell,1}$ and $w_{\ell,2}$ are not contained in $\cB^{k-1}(u^j)$. By Claim \ref{claim:distance-j-in-Lj}, there exists a node $w^j$ in layer $L_j$ such that $d(u^j,w^j) = j$. Write $w^j = v^j_{b \leadsto \sigma}$ for some fixed $b \in \{0,1\}$ and some fixed integer sequence $\sigma$. Consider the node $v^k_{b \leadsto \sigma}$ from $L_k$. By Claim \ref{claim:layer-edges-in-j1}, there exists a shortest path between $u^j$ and $v^k_{b \leadsto \sigma}$ such that all layer edges in this path have both endpoints in $L_j$. It follows that there exists a shortest path $P$ between $u^j$ and $v^k_{b \leadsto \sigma}$ that, starting at $v^k_{b \leadsto \sigma}$, consists only of inter-layer edges until a node in layer $L_j$ is reached, and then consists only of layer edges within $L_j$. By Claim \ref{claim:unique-Interlayer}, the prefix of $P$ consisting of inter-layer edges has length exactly $k-j$, and the endpoint of this prefix in $L_j$ is $v^j_{b \leadsto \sigma}$. Recalling that $v^j_{b \leadsto \sigma} = w^j$, and that $d(u^j,w^j) = j$, it follows that the remainder of $P$ consisting of layer edges has length exactly $j$. Thus, we have shown that the length of shortest path $P$ between $u^j$ and $v^k_{b \leadsto \sigma}$ has length exactly $k$. The above proof applies to either copy of $v^k_{b \leadsto \sigma}$ from $L_{k,1}$ or $L_{k,2}$, which are represented as nodes $w_{\ell,1}$ and $w_{\ell,2}$ for some fixed $\ell \in \{1,\ldots,z\}$. Thus, we have shown that $d(u^j,w_{\ell,1})$ and $d(u^j,w_{\ell,2})$ are exactly $k$, which proves that $w_{\ell,1}$ and $w_{\ell,2}$ are not contained in $\cB^{k-1}(u^j)$. 
\end{proof}

\subsection{Minimum Election Time and Advice}
For each graph in $\mathcal{J}_{\mu,k}$ (where $\mu \geq 2$ and $k \geq 4$) we first show that the $\select$-index, the $\ppe$-index, and the $\cppe$-index are all equal to $k$.

To prove that the $\select$-index is at least $k$ for any graph $J_Y \in \mathcal{J}_{\mu,k}$, the idea is that each $J_Y$ was carefully constructed such that, when considering truncated views up to distance $k-1$, each node $v$ has at least one `twin' elsewhere in the graph with the same view. In particular, using only $k-1$ rounds, an arbitrary node $u$ in some component $H_D$ of a gadget $\widehat{H_i}$ cannot see all of the nodes in the $k^{th}$ layer of $H_D$, which is where the value of $i$ is encoded (see Part 4 of the construction). Supposing some $w_{q,1}$ and $w_{q,2}$ in the $k^{th}$ layer are not seen by $u$, then effectively $u$ cannot distinguish whether it is in gadget $\widehat{H_{i}}$ or $\widehat{H_{i'}}$, where $i'$ differs from $i$ only in the $q^{th}$ bit. This is the key observation needed to find a $u'$ with the same truncated view up to distance $k-1$ as $u$, which proves that $k-1$ rounds are not enough to elect a unique leader.

To formalize this idea, we first prove that all nodes in $\{\rho_0,\ldots,\rho_{2^z-1}\}$ have the same truncated view up to any distance $h \leq k-1$. This fact means that none of the $\rho$-nodes have a unique view up to distance $k-1$ (so none of them can be selected as leader within $k-1$ rounds), but it will also be a useful ingredient in the proof that no other node in any gadget has a unique view up to distance $k-1$.

\begin{proposition}\label{prop:rhoSame}
	For any $h \in \{0,\ldots,k-1\}$ and any $J_Y \in \mathcal{J}_{\mu,k}$, we have $\cB^{h}(\rho_0) = \cdots = \cB^{h}(\rho_{2^z-1})$ in $J_Y$.
\end{proposition}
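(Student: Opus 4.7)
The plan is to establish a stronger structural claim: for every $h \leq k-1$ and every $i, i' \in \{0,\ldots,2^z-1\}$, there is a graph isomorphism from the closed ball of radius $h$ around $\rho_i$ in $J_Y$ to the closed ball of radius $h$ around $\rho_{i'}$ in $J_Y$ that sends $\rho_i$ to $\rho_{i'}$, preserves every port label, and preserves the degree label of every boundary node. Such an isomorphism immediately carries $\cB^h(\rho_i)$ to $\cB^h(\rho_{i'})$. I would execute two main steps in order.

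First, a \emph{localization step}: the ball of radius $k-1$ around $\rho_i$ in $J_Y$ is contained in $\widehat{H_i}$ and visits only the layers $L_0,\ldots,L_{k-1}$ of its four components $H_L, H_T, H_R, H_B$. This rests on two observations drawn directly from Parts 2--4 of the construction: every edge inside $H$ either stays within a single layer $L_m$ or joins two consecutive layers $L_m$ and $L_{m+1}$, so any path from $\rho_i \in L_0$ reaching $L_{k,1}$ or $L_{k,2}$ needs at least $k$ edges; and every inter-gadget edge added in Part~4 is incident to a node of $L_{k,1}$ or $L_{k,2}$. A direct corollary is that, for every node at distance at most $k-1$ from $\rho_i$, its degree in $J_Y$ equals its degree in the template graph $J$, hence is independent of $Y$ and of $i$, which handles the boundary degree labels.

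Second, a \emph{symmetry step}: by construction, each $\rho_i$ undergoes at most one Part~5 swap (either the Case~1 swap interchanging the port ranges leading to $H_R$ and $H_B$, the Case~2 swap interchanging the ranges leading to $H_L$ and $H_T$, or neither). Because $H_L, H_T, H_R, H_B$ are four identical copies of the port-labeled graph $H$, each such swap is realized by a graph automorphism of $\widehat{H}$ that transposes two of the components via their canonical identification as copies of $H$ and fixes the remaining two pointwise. This automorphism preserves all port labels: at $\rho$ the swapped labels simply move with their incident edges; at the opposite endpoint of each such edge the port is $\mu-1$ in both the source and the target (by Part~2) and hence unchanged; and inside each component the canonical identification of two copies of $H$ is port-label-preserving by definition. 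Composing the appropriate such automorphisms yields, for any pair $i,i'$, a port-label-preserving isomorphism $\widehat{H_i} \to \widehat{H_{i'}}$ sending $\rho_i$ to $\rho_{i'}$; restricting it to the layers $L_0,\ldots,L_{k-1}$ of each component and combining with the localization step delivers the required isomorphism of balls.

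The only real technical obstacle lies in the symmetry step: one must carefully verify that a Part~5 swap at $\rho$ truly extends to a global port-label-preserving automorphism of $\widehat{H}$, which comes down to confirming that the two transposed components have identical port-labeled structure and that the swap only affects the ports in the two relevant ranges at $\rho$. Once this verification is in hand, the lemma follows by combining the two steps above.
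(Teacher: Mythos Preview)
Your proposal is correct and takes essentially the same approach as the paper: both first localize the ball of radius $k-1$ around $\rho_i$ to layers $L_0,\ldots,L_{k-1}$ of the four components of $\widehat{H_i}$ (so Part~4 edges and inter-gadget structure are invisible), and then exploit that the four components are identical copies of $H$ so that a Part~5 swap between two port ranges at $\rho$ leaves the augmented truncated view unchanged. The only cosmetic difference is that you phrase the symmetry step as exhibiting an explicit port-preserving automorphism of $\widehat{H}$, whereas the paper simply observes that swapping corresponding ports between two identical components cannot alter $\cB^{k-1}(\rho_i)$; both formulations reduce to showing every $\cB^{k-1}(\rho_i)$ coincides with $\cB^{k-1}(\rho)$ in the un-swapped gadget $\widehat{H}$.
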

\begin{proof}
	It is sufficient to prove the result for $h=k-1$, as two equal views up to distance $k-1$ are also equal up to any smaller distance. By Parts 1 and 2 of the construction, every edge in the component graph $H$ is either between two nodes in the same layer graph, or in consecutive layer graphs $L_m$ and $L_{m+1}$ for some $m \in \{0,\ldots,k-1\}$. It follows that $\cB^{k-1}(r^0_0)$ in the component graph $H$ only contains nodes from the layers $L_0,\ldots,L_{k-1}$, i.e., does not contain any nodes from either $L_{k,1}$ or $L_{k,2}$. By Part 3 of the construction, the gadget graph $\widehat{H}$ consists of four copies of $H$ (named $H_L$, $H_T$, $H_R$, $H_B$) such that the four $r^0_0$ nodes have been merged at a single node named $\rho$, and then the port numbers at $\rho$ are modified to be distinct in the range $0,\ldots,4\mu-1$. It follows that $\cB^{k-1}(\rho)$ in $\widehat{H}$ only contains nodes from the layers $L_0,\ldots,L_{k-1}$ from each of $H_L$, $H_T$, $H_R$, and $H_B$. By Part 4 of the construction, the only edges added when forming the template graph $J$ are incident to nodes in layer graphs $L_{k,1}$ and $L_{k,2}$ of $H_L, H_T, H_R, H_B$ in each gadget, i.e., these edges do not affect the degrees of nodes in layer graphs $L_0,\ldots,L_{k-1}$. It follows that $\cB^{k-1}(\rho_i)$ in each $\widehat{H_i}$ for $i \in \{0,\ldots,2^z-1\}$ is the same as $\cB^{k-1}(\rho)$ in $\widehat{H}$. Finally, when creating each $J_Y$ in Part 5 of the construction, note that in each $\widehat{H_i}$ for $i \in \{0,\ldots,2^z-1\}$, either the ports at $\rho_i$ in $H_T$ and $H_L$ are swapped, or the ports at $\rho_i$ in $H_B$ and $H_R$ are swapped. In either case, since $\cB^{k-1}(\rho_i)$ within each component $H_L$, $H_T$, $H_R$ and $H_B$ is identical, it follows that swapping corresponding ports at $\rho_i$ between two components does not change $\cB^{k-1}(\rho_i)$. This proves that, in an arbitrary $J_Y \in \mathcal{J}_{\mu,k}$, the views $\cB^{k-1}(\rho_0), \ldots, \cB^{k-1}(\rho_{2^z-1})$ are all the same as $\cB^{k-1}(\rho)$ in $\widehat{H}$, which implies the desired result.
\end{proof}

Next, we introduce the following notation to help us refer to specific nodes within a graph $J_Y$. Each node of $J_Y$ belongs to some gadget, and within the gadget, belongs to some component. So each node corresponds to some $u$ in the component graph $H$, we use a subscript from $\{L,T,R,B\}$ to indicate which of the four components of its gadget it belongs to, and also a subscript $i \in \{0,\ldots,2^z-1\}$ to indicate which gadget $\widehat{H_i}$ within $J_Y$ it belongs to. For example, $v = u_{L,2}$ indicates that the node $v$ belongs to $H_L$ in gadget $\widehat{H_2}$, and corresponds to node $u$ within the component graph $H$ defined in Part 2 of the construction. For any node $u_{D,i}$ for any $D \in \{L,T,R,B\}$ and any $i \in \{0,\ldots,2^z-1\}$, we say that $u_{D,i}$ is a \emph{border} node if it is located in either layer graph $L_{k,1}$ or $L_{k,2}$ in component $H_D$ of gadget $\widehat{H_i}$. Node $u_{D,i}$ is called an \emph{internal node} if it is not equal to $\rho_i$ and is not a border node. We say that a sequence of nodes $(v_1,\ldots,v_h)$ forms an \emph{internal path} if $v_1,\ldots,v_{h-1}$ are internal nodes, and $v_h$ is either an internal node or a border node.

The following fact will help us make arguments about symmetry in $J_Y$ and demonstrate that certain nodes have the same views up to distance at most $k-1$. It makes a key observation about the construction of each $J_Y$: the differences between different gadgets in $J_Y$, and the differences between components within a gadget, can only be noticed at the $\rho$ nodes and the border nodes. In particular, any fixed labeled path $P$ that avoids the $\rho$ nodes and the border nodes actually appears in every component of every gadget. To see why, note that the construction of each $J_Y$ is identical up to Part 3, and, in Parts 4 and 5, the only modifications involve adding edges between border nodes, and swapping edges at $\rho$ nodes.

\begin{fact}\label{fact:internalSame}
	Consider any node $u \in H$ such that $u \neq \rho$ and $u \not\in L_{k,1} \cup L_{k,2}$, and consider any node $u' \in H$ such that $u' \neq \rho$. For any $D,D' \in \{L,T,R,B\}$, any $i,i' \in \{0,\ldots,2^z-1\}$, any $h \in \{0,\ldots,k-1\}$, and any sequence $\sigma$ of length $2h$, the sequence $\sigma$ appears as the port labels along an internal path of length $h$ from $u_{D,i}$ to $u'_{D,i}$ in $J_Y$ if and only if $\sigma$ appears as the port labels along an internal path of length $h$ from $u_{D',i'}$ to $u'_{D',i'}$.
\end{fact}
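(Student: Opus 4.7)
The plan is an invariance argument built on the observation explicitly highlighted by the authors just before the fact: the only modifications to the gadget $\widehat{H}$ that happen in Parts 4 and 5 of the construction are (i) adding new edges between border nodes (Part 4), and (ii) permuting port numbers at each $\rho_i$ (Part 5). Consequently, for any node $v \in H$ with $v \neq \rho$ and $v \not\in L_{k,1} \cup L_{k,2}$, and for any choice of $D \in \{L,T,R,B\}$ and $i \in \{0,\ldots,2^z-1\}$, the multiset of edges incident to the copy $v_{D,i}$ in $J_Y$, together with the port labels at $v_{D,i}$ on those edges, is identical to the set of edges incident to $v$ in $H$ together with their port labels at $v$. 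A similar preservation holds at a border node $u'$, restricted to the edges of $H$: although Part 4 attaches new cross-gadget edges at $u'_{D,i}$, these are assigned fresh port labels of the form $deg_H(u')$ which are strictly larger than any label already present, so the port numbering inherited from Parts 1--3 is left untouched; and Part 5 never touches border nodes.

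Given this, I would set up a natural correspondence between internal paths of length $h$ in $J_Y$ starting at $u_{D,i}$ and walks of length $h$ in $H$ starting at $u$ whose intermediate vertices are internal (not $\rho$, not in $L_{k,1} \cup L_{k,2}$) and whose last vertex is anything other than $\rho$. The correspondence projects each copy $v_{D,i}$ along the path to the corresponding node $v$ of $H$. By the invariance above, each step of the internal path uses an edge that matches exactly one edge of $H$, with the same pair of port labels at its two endpoints; so the correspondence is well-defined and bijective, and it preserves the full length-$2h$ port-label sequence. This argument can be made rigorous by a trivial induction on $h$, with the base case $h=0$ being the statement $u=u'$ together with the empty sequence.

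Since the corresponding condition in $H$ is defined purely in terms of $u$, $u'$, $\sigma$, and the graph $H$, without any reference to $D$ or $i$, the statement ``$\sigma$ appears as the port labels along an internal path of length $h$ from $u_{D,i}$ to $u'_{D,i}$ in $J_Y$'' is equivalent, for every choice of $D$ and $i$, to the \emph{same} statement in $H$, namely ``$\sigma$ appears as the port labels along a length-$h$ walk from $u$ to $u'$ in $H$ whose intermediate vertices avoid $\rho$ and $L_{k,1} \cup L_{k,2}$.'' Applying this equivalence once with $(D,i)$ and once with $(D',i')$ yields both directions of the biconditional at once.

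The only real obstacle is the endpoint bookkeeping: when $u'$ is a border node, one must verify that the port label at $u'_{D,i}$ on the incoming edge is the same as in $H$. As sketched above, this reduces to a direct inspection of Part 4, which only \emph{appends} port labels to border nodes using the fresh values $deg_H(\cdot)$ and never renames existing ones. Everything else is routine and does not involve any calculation beyond careful matching of ports across the construction steps.
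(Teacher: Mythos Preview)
Your proposal is correct and follows exactly the reasoning the paper intends: the paper does not give a separate proof of this fact but simply observes, in the sentence preceding it, that ``the construction of each $J_Y$ is identical up to Part 3, and, in Parts 4 and 5, the only modifications involve adding edges between border nodes, and swapping edges at $\rho$ nodes.'' Your write-up is a careful unpacking of precisely this observation, including the bookkeeping at a border endpoint (Part~4 only appends fresh port labels $deg_H(\cdot)$), so there is no substantive difference in approach.
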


We now prove that, when considering truncated views up to distance $k-1$, each node $v$ in $J_Y$ has a `twin' $v'$ that has the same view.

\begin{lemma}\label{lem:PPEtwin}
	For any $J_Y \in \mathcal{J}_{\mu,k}$ and any $i \in \{0,\ldots,2^z-1\}$, let $v$ be any node in $\widehat{H_i}$. There exists a node $v' \neq v$ in $J_Y$ such that $\cB^{k-1}(v) = \cB^{k-1}(v')$.
\end{lemma}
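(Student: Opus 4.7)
I would proceed by case analysis on the location of $v$ in $J_Y$. If $v = \rho_i$, then Proposition \ref{prop:rhoSame} gives a twin directly: take $v' = \rho_j$ for any $j \neq i$. Otherwise, $v$ lies in some specific component $H_D$ of $\widehat{H_i}$ with $v \neq \rho_i$. Applying Lemma \ref{lem:invisibleLeaf} inside this component yields an index $\ell \in \{1,\ldots,z\}$ such that neither $w_{\ell,1}$ nor $w_{\ell,2}$ in $H_D$ of $\widehat{H_i}$ appears in $\cB^{k-1}(v)$. My plan is to take $v'$ to be the corresponding copy of $v$ in $H_D$ of a carefully chosen gadget $\widehat{H_{i'}}$ with $i' \neq i$, selected so that the only structural differences between $\widehat{H_i}$ and $\widehat{H_{i'}}$ (and between their neighbouring gadgets, to the extent that $v$ can see them) lie either at $\rho$-swaps or at border nodes that are invisible from $v$.

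To verify $\cB^{k-1}(v) = \cB^{k-1}(v')$, I would classify each root-to-leaf path $\pi$ in $\cB^{k-1}(v)$ into three types and match them at $v'$ separately. Type (i) paths stay entirely internal to $H_D$ and avoid $\rho_i$ as well as every border node; these match directly via Fact \ref{fact:internalSame}. Type (ii) paths pass through $\rho_i$ into another component of $\widehat{H_i}$; since the remaining portion has length at most $k-1$ and Proposition \ref{prop:rhoSame} ensures $\cB^{k-1}(\rho_i) = \cB^{k-1}(\rho_{i'})$, the corresponding labeled path exists from $v'$ through $\rho_{i'}$. Type (iii) paths reach a border node of $H_D$; by the choice of $\ell$, such border nodes are all visible ones (i.e.\ not $w_{\ell,1}$ or $w_{\ell,2}$), and by choosing $i'$ so that $x_i$ and $x_{i'}$ agree at every coordinate except possibly $\ell$ (and analogously for the neighbouring indices $i \pm 1$ if $\pi$ crosses a border edge), the corresponding border edges in $\widehat{H_{i'}}$ match those of $\widehat{H_i}$.

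The main obstacle will be the careful selection of $i'$, because flipping a bit of $x_i$ alters edges in several components simultaneously (both $H_T$ of $\widehat{H_i}$ and $H_B$ of $\widehat{H_{i-1}}$, plus the cross-edges between $H_R$ of $\widehat{H_{i-1}}$ and $H_L$ of $\widehat{H_i}$), and the $Y$-controlled port swaps at the $\rho$-nodes interact with these border edges through paths that wrap through $\rho_i$. In particular, when the view from $v$ leaks into a neighbouring gadget via a visible border node, that neighbouring gadget must also look identical from the analogous border of $\widehat{H_{i' \pm 1}}$, which forces $i'$ to differ from $i$ only in a single bit of $x$ (not a larger modification). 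Near the extreme indices $0$ and $2^z-1$, where one side of the chain of gadgets is missing, I expect to instead choose $v'$ inside a different component of the same gadget $\widehat{H_i}$, exploiting the fourfold symmetry produced by merging four copies of $H$ at $\rho_i$ together with Proposition \ref{prop:rhoSame}, so that the argument still goes through.
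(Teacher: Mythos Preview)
Your overall strategy matches the paper's: handle $v=\rho_i$ via Proposition~\ref{prop:rhoSame}, otherwise use Lemma~\ref{lem:invisibleLeaf} to find an invisible border index $\ell$, take $i'$ to differ from $i$ only in bit $\ell$, and match root-to-leaf paths of $\cB^{k-1}(v)$ by the three-way split (internal / through $\rho_i$ / through a visible border node). The internal case via Fact~\ref{fact:internalSame} and the border case via the single-bit-flip choice of $i'$ are exactly what the paper does.

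The gap is in your handling of Type~(ii) and your insistence that $v'$ live in the \emph{same} component $H_D$ of $\widehat{H_{i'}}$. Proposition~\ref{prop:rhoSame} matches views \emph{rooted at} $\rho_i$ and $\rho_{i'}$, but it does not match the port label at $\rho$ on the edge \emph{arriving} from $v$'s side. Concretely, if $D=L$ and Part~5 swapped the $H_L$/$H_T$ ports at $\rho_i$ but not at $\rho_{i'}$, then the edge $\{u''_{L,i},\rho_i\}$ carries port $g+\mu$ at $\rho_i$ while $\{u''_{L,i'},\rho_{i'}\}$ carries port $g$ at $\rho_{i'}$, so the labeled paths do not agree. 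This mismatch is not an ``extreme index'' phenomenon: for $i,i'\in\{2^{z-1},\ldots,2^z-1\}$ the swap status at $\rho_i,\rho_{i'}$ is governed by the arbitrary bits $y_{2^z-1-i},y_{2^z-1-i'}$ of $Y$, which need not agree.

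The paper's remedy is precisely to drop the ``same component'' constraint: it takes $v'=u_{L,i'}$ when the $H_L$/$H_T$ swap statuses at $\rho_i$ and $\rho_{i'}$ agree, and $v'=u_{T,i'}$ when they disagree. Since the Part~5 swap is exactly the relabeling that interchanges the roles of $H_L$ and $H_T$ at $\rho$, this restores the edge-label match for Type~(ii); and it is harmless for Type~(iii) because the border edges at $w_{q,b}$ ($q\neq\ell$) in $H_T$ of $\widehat{H_{i'}}$ are determined by the bits of $i'$ just as in $H_L$. Your last-paragraph fallback (switching components only near $i\in\{0,2^z-1\}$) does not cover this and should be replaced by the four-case analysis on the pair of swap statuses.
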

\begin{proof}
	If $v = \rho_i$, then, by Proposition \ref{prop:rhoSame}, setting $v'$ to be any node from $\{\rho_0,\ldots,\rho_{2^z-1}\} \setminus \{\rho_i\}$ gives the desired result. So we proceed under the assumption that $v \neq \rho_i$.
	
	Consider the case where $v$ is contained in component $H_L$ of $\widehat{H_i}$, i.e., $v$ is equal to $u_{L,i}$ for some node $u \in H - \{r^0_0\}$. By Lemma \ref{lem:invisibleLeaf}, there exists an $\ell \in \{1,\ldots,z\}$ such that the nodes $w_{\ell,1}$ and $w_{\ell,2}$ in $H_L$ of $\widehat{H_i}$ are not contained in $\cB^{k-1}(u_{L,i})$, and the nodes $w_{\ell,1}$ and $w_{\ell,2}$ in $H_R$ of $\widehat{H_{i-1}}$ are not contained in $\cB^{k-1}(u_{L,i})$. Let $i'$ be the integer whose $z$-bit binary representation is the same as the $z$-bit binary representation of $i$ except with the $\ell^{th}$ bit flipped. 
	
	From the description of Part 5 of the construction of $J_Y$, there are four cases to consider based on the values of $i$, $i'$, $y_{i}$, and $y_{i'}$:
	\begin{itemize}
		\item {\bf no-swap}: [$i \in \{0,\ldots,2^{z-1}-1\}$ or ($i \in \{2^{z-1},\ldots,2^{z}-1\}$ and $y_{2^z - 1 - i} = 0$)] and [$i' \in \{0,\ldots,2^{z-1}-1\}$ or ($i' \in \{2^{z-1},\ldots,2^{z}-1\}$ and $y_{2^z - 1 - i'} = 0$)]. In this case, no port swapping occurs at $\rho_i$ or $\rho_{i'}$ involving edges in $H_L$ and $H_T$.
		\item {\bf $i$-swap}: [$i \in \{2^{z-1},\ldots,2^{z}-1\}$ and $y_{2^z - 1 - i} = 1$] and [$i' \in \{0,\ldots,2^{z-1}-1\}$ or ($i' \in \{2^{z-1},\ldots,2^{z}-1\}$ and $y_{2^z - 1 - i'} = 0$)]. In this case, port swapping occurs at $\rho_{i}$ involving edges in $H_L$ and $H_T$, i.e., for each $g \in \{0,\ldots,\mu-1\}$, port $g$ at $\rho_{i}$ is swapped with port $g+\mu$ at $\rho_{i}$. No port swapping occurs at $\rho_{i'}$ involving edges in $H_L$ and $H_T$.
		\item {\bf $i'$-swap}: [$i \in \{0,\ldots,2^{z-1}-1\}$ or ($i \in \{2^{z-1},\ldots,2^{z}-1\}$ and $y_{2^z - 1 - i} = 0$)] and [$i' \in \{2^{z-1},\ldots,2^{z}-1\}$ and $y_{2^z - 1 - i'} = 1$]. In this case, no port swapping occurs at $\rho_i$ involving edges in $H_L$ and $H_T$. Port swapping occurs at $\rho_{i'}$ involving edges in $H_L$ and $H_T$, i.e., for each $g \in \{0,\ldots,\mu-1\}$, port $g$ at $\rho_{i'}$ is swapped with port $g+\mu$ at $\rho_{i'}$.
		\item {\bf both-swap}: [$i \in \{2^{z-1},\ldots,2^{z}-1\}$ and $y_{2^z - 1 - i} = 1$] and [$i' \in \{2^{z-1},\ldots,2^{z}-1\}$ and $y_{2^z - 1 - i'} = 1$]. In this case, port swapping occurs at both $\rho_i$ and $\rho_{i'}$ involving edges in $H_L$ and $H_T$, i.e., for each $g \in \{0,\ldots,\mu-1\}$, port $g$ at $\rho_{i}$ is swapped with port $g+\mu$ at $\rho_{i}$, and, port $g$ at $\rho_{i'}$ is swapped with port $g+\mu$ at $\rho_{i'}$.
	\end{itemize}
Wherever necessary, the proof will branch out to separately provide the details for the above four cases.

First, we specify the node $v'$ for which we will prove that $\cB^{k-1}(v) = \cB^{k-1}(v')$.
\begin{itemize}
	\item For cases {\bf no-swap} and {\bf both-swap}, we set $v' = u_{L,i'}$.
	\item For cases {\bf $i$-swap} and {\bf $i'$-swap}, we set $v' = u_{T,i'}$.
\end{itemize}

Our goal is to show that $\cB^{k-1}(v) = \cB^{k-1}(v')$, and we begin by proving that $\cB^{k-1}(v) \subseteq \cB^{k-1}(v')$. To this end, we consider any labeled root-to-leaf path $P$ of length $k-1$ in $\cB^{k-1}(v)$, and prove that $P$ also appears as a labeled root-to-leaf path in $\cB^{k-1}(v')$.

There are three cases to consider based on the nature of $P$.
\begin{itemize}
	\item {\bf Case 1:} The path $P$ contains node $\rho_i$. Denote by $u''_{L,i}$ the neighbour of $\rho_i$ that appears immediately before the first occurrence of $\rho_i$ in $P$. Consider $P = P_1\cdot e \cdot P_2$, where:
	\begin{itemize}
		\item $P_1$ is the (possibly empty) path in $\cB^{k-1}(v)$ from $v=u_{L,i}$ to $u''_{L,i}$,
		\item $e$ is the edge $\{u''_{L,i},\rho_i\}$, and, 
		\item $P_2$ is the (possibly empty) remainder of the path from $\rho_i$ to the leaf of $P$. 
	\end{itemize}
	Let $\sigma_1$ be the port sequence on path $P_1$ from $u_{L,i}$ to $u''_{L,i}$, and let $\sigma_2$ be the port sequence on path $P_2$ from $\rho_i$ to the leaf of $P$.
	As the total length of $P$ is $k-1$ and $P$ contains $\rho_i$, it follows that $P$ does not contain any nodes from layer graphs $L_{k,1}$ or $L_{k,2}$. Therefore, the path from $v=u_{L,i}$ to $u''_{L,i}$ consists completely of internal nodes, so by Fact \ref{fact:internalSame}, $\sigma_1$ is also a port sequence from $u_{L,i'}$ to $u''_{L,i'}$ (needed for cases {\bf no-swap} and {\bf both-swap}), and is also a port sequence from $u_{T,i'}$ to $u''_{T,i'}$ (needed for cases {\bf $i$-swap} and {\bf $i'$-swap}).
	 
	 Next, for the edge $e$, we consider the four cases:
	 \begin{itemize}
	 	\item In case {\bf no-swap}, we argue that the edge between $u''_{L,i}$ and $\rho_i$ is labeled the same as the edge between $u''_{L,i'}$ and $\rho_{i'}$. By Part 2 of the construction, the edge between $u''$ and $r^0_0$ in $H$ is labeled with some $g \in \{0,\ldots,\mu-1\}$ at $r^0_0$, and labeled with $\mu-1$ at $u''$. These are the same port labels on the edge between $u''$ and $\rho$ in $H_L$, by Part 3 of the construction. No port labels at any $\rho$ node are modified in Part 4 of the construction.  Since no port swapping occurs involving $H_L$ at $\rho_i$ and $\rho_{i'}$, we conclude that the edge between $u''_{L,i}$ and $\rho_i$ is labeled the same way as the edge between $u''_{L,i'}$ and $\rho_{i'}$, i.e., labeled $\mu-1$ at both $u''_{L,i}$ and $u''_{L,i'}$, and labeled $g$ at both $\rho_i$ and $\rho_{i'}$, as desired.
	 	
	 	\item In case {\bf $i$-swap}, we argue that the edge between $u''_{L,i}$ and $\rho_i$ is labeled the same as the edge between $u''_{T,i'}$ and $\rho_{i'}$. By Part 2 of the construction, the edge between $u''$ and $r^0_0$ in $H$ is labeled with some $g \in \{0,\ldots,\mu-1\}$ at $r^0_0$, and labeled with $\mu-1$ at $u''$. These are the same port labels on the edge between $u''$ and $\rho$ in $H_L$, by Part 3 of the construction. No port labels at any $\rho$ node are modified in Part 4 of the construction. In Part 5 of the construction, port swapping occurs at $\rho_i$, in particular, ports $g$ and $g+\mu$ are swapped at $\rho_i$, which implies that the edge between $u''_{L,i}$ and $\rho_i$ is labeled with $\mu-1$ at $u''_{L,i}$ and labeled with $g+\mu$ at $\rho_i$. However, no port swapping occurs at $\rho_{i'}$ involving edges in $H_L$ and $H_T$ so the edge between $u''_{T,i'}$ and $\rho_{i'}$ is labeled with $\mu-1$ at $u''_{T,i'}$ and labeled with $g+\mu$ at $\rho_{i'}$, as desired.
	 	
	 	\item In case {\bf $i'$-swap}, we argue that the edge between $u''_{L,i}$ and $\rho_i$ is labeled the same as the edge between $u''_{T,i'}$ and $\rho_{i'}$. By Part 2 of the construction, the edge between $u''$ and $r^0_0$ in $H$ is labeled with some $g \in \{0,\ldots,\mu-1\}$ at $r^0_0$, and labeled with $\mu-1$ at $u''$. These are the same port labels on the edge between $u''$ and $\rho$ in $H_L$, by Part 3 of the construction. No port labels at any $\rho$ node are modified in Part 4 of the construction. In Part 5 of the construction, no port swapping occurs at $\rho_i$ involving edges in $H_L$ and $H_T$, so the edge between  $u''_{L,i}$ and $\rho_i$ is labeled with $\mu-1$ at $u''_{L,i}$ and labeled with $g$ at $\rho_i$. However, port swapping does occur at $\rho_{i'}$, in particular, ports $g$ and $g+\mu$ are swapped at $\rho_{i'}$, which implies that the edge between $u''_{T,i'}$ and $\rho_{i'}$ is labeled with $\mu-1$ at $u''_{T,i'}$ and labeled with $g$ at $\rho_{i'}$, as desired.
	 	
	 	\item In case {\bf both-swap}, we argue that the edge between $u''_{L,i}$ and $\rho_i$ is labeled the same as the edge between $u''_{L,i'}$ and $\rho_{i'}$. By Part 2 of the construction, the edge between $u''$ and $r^0_0$ in $H$ is labeled with some $g \in \{0,\ldots,\mu-1\}$ at $r^0_0$, and labeled with $\mu-1$ at $u''$. These are the same port labels on the edge between $u''$ and $\rho$ in $H_L$, by Part 3 of the construction. No port labels at any $\rho$ node are modified in Part 4 of the construction. In Part 5 of the construction, port swapping occurs at $\rho_i$, in particular, ports $g$ and $g+\mu$ are swapped at $\rho_i$, which implies that the edge between $u''_{L,i}$ and $\rho_i$ is labeled with $\mu-1$ at $u''_{L,i}$ and labeled with $g+\mu$ at $\rho_i$.  Moreover, port swapping occurs at $\rho_{i'}$, in particular, ports $g$ and $g+\mu$ are swapped at $\rho_{i'}$, which implies that the edge between $u''_{L,i'}$ and $\rho_{i'}$ is labeled with $\mu-1$ at $u''_{L,i'}$ and labeled with $g+\mu$ at $\rho_{i'}$, as desired.
	 \end{itemize}
 	
 	Finally, as $|P_2| \leq |P| - 1 < k-1$, it follows from Proposition \ref{prop:rhoSame} that $\cB^{|P_2|}(\rho_{i}) = \cB^{|P_2|}(\rho_{i'})$, so the same port sequence $\sigma_2$ exists on a root-to-leaf path in $\cB^{|P_2|}(\rho_{i'})$.
 	
 	We have shown that the three pieces of $P = P_1\cdot e \cdot P_2$ appear in the desired order starting from node $v'$, which completes the proof that the labeled root-to-leaf path $P$ also exists in $\cB^{k-1}(v')$. 
 	
	\item {\bf Case 2:} The path $P$ is an internal path. Then $P$ is a path of length $k-1$ from $v = u_{L,i}$ to some $u'_{L,i}$. Denote by $\sigma$ the port sequence from $u_{L,i}$ to $u'_{L,i}$ on path $P$. By Fact \ref{fact:internalSame}, the port sequence $\sigma$ also labels a path from $u_{L,i'}$ to $u'_{L,i'}$ (needed for cases {\bf no-swap} and {\bf both-swap}) and also labels a path from $u_{T,i'}$ to $u'_{T,i'}$ (needed for cases {\bf $i$-swap} and {\bf $i'$-swap}), which completes the proof that the labeled root-to-leaf path $P$ also exists in $\cB^{k-1}(v')$.

	\item {\bf Case 3:} The path $P$ does not contain $\rho_i$ but contains node $w_{q,b}$ for some $q \in \{1,\ldots,z\} - \{\ell\}$ and $b \in \{1,2\}$.
	We prove a stronger statement: if $P$ (with the given property) is a labeled root-to-leaf path in $\cB^{h}(u_{L,i})$ or $\cB^{h}(u_{R,i-1})$, then $P$ appears as a labeled root-to-leaf path in $\cB^{h}(u_{L,i'})$ and $\cB^{h}(u_{T,i'})$ (which covers both of our possible choices for $v'$).
	
	We proceed by induction on the length $h \geq 0$ of $P$. For base case, consider $h=0$. As $P$ contains node $w_{q,b}$ for some $q \in \{1,\ldots,z\} - \{\ell\}$ and $b \in \{1,2\}$, it follows that $u = w_{q,b}$. By Part 4 of the construction, the $q^{th}$ bit of $i$ is 1 if and only if an edge was added incident to $u_{L,i}$ and an edge was added incident to $u_{R,i-1}$. Thus, the nodes $u_{L,i}$ and $u_{R,i-1}$ have the same degree. Further, by our choice of $i'$, the $q^{th}$ bit of $i'$ is the same as the $q^{th}$ bit of $i$. So, by Part 4 of the construction, we have that each of the nodes $u_{L,i'}$ and $u_{T,i'}$ has the same degree as each of the three nodes $u_{L,i}$ and $u_{R,i-1}$. This proves that $\cB^{0}(u_{L,i}) = \cB^{0}(u_{R,i-1}) = \cB^{0}(u_{L,i'}) = \cB^{0}(u_{T,i'})$, which implies the desired result.
	
	As induction hypothesis, assume that, for some $h \in \{0,\ldots,k-2\}$, if $P'$ (with the given property) is a labeled root-to-leaf path in $\cB^{h}(u'_{L,i})$ or $\cB^{h}(u'_{R,i-1})$, then $P'$ appears as a labeled root-to-leaf path in $\cB^{h}(u'_{L,i'})$ and $\cB^{h}(u'_{T,i'})$.
	
	For the induction step, suppose that $P$ (with the given property) is a labeled root-to-leaf path in $\cB^{h+1}(u_{L,i})$ or $\cB^{h+1}(u_{R,i-1})$. Denote by $w'$ a node from $\{u_{L,i},u_{R,i-1}\}$ for which the previous sentence holds. Let $w''$ denote the first border node along the path $P$. Re-write $P = P_1\cdot e \cdot P_2$, where $P_1$ is the path from $w'$ to $w''$, edge $e$ is the outgoing edge from $w''$ along the path $P$ to the first node of $P_2$ (which we denote by $w'''$), and $P_2$ is the remainder of the path $P$. Since $P$ starts at $u_{L,i}$ or $u_{R,i-1}$, has length at most $k-1$, and contains a border node (i.e., contains a node at distance at least $k$ from $\rho_i$ and $\rho_{i-1}$), we can conclude that $P$ does not contain $\rho_i$ or $\rho_{i-1}$. It follows that: 
	\begin{itemize}
		\item $w''$ (the last node of $P_1$) is in the same component as $w'$, i.e., if $w' = u_{L,i}$, then $w''=u''_{L,i}$ for $u''=w_{q,b}$, where $q \in \{1,\ldots,z\} - \{\ell\}$ and $b \in \{1,2\}$, and, if $w' = u_{R,i-1}$, then $w''=u''_{R,i-1}$ for $u''=w_{q,b}$, where $q \in \{1,\ldots,z\} - \{\ell\}$ and $b \in \{1,2\}$.
		\item $w'''$ (the first node of $P_2$) is equal to node $u'''_{L,i}$ or $u'''_{R,i-1}$ for some node $u''' \in H - \{r^0_0\}$.
	\end{itemize}
	
	First, consider the path $P_1$ from $w'$ to $w''$. By our choice of $w''$, i.e., the first border node along path $P$, we know that the path $P_1$ from $w'$ to $w''$ is an internal path, so by Fact \ref{fact:internalSame}, $P_1$ appears as a labeled root-to-leaf path in each of $\cB^{|P_1|}(u_{L,i'})$ and $\cB^{|P_1|}(u_{T,i'})$.
	
	Next, consider the edge $e=\{w'',w'''\}$. By our choice of $i'$ and the fact that $q \neq \ell$, we know that the $q^{th}$ bit of $i$ is equal to the $q^{th}$ bit of $i'$. So, after Part 4 of the construction of $J_Y$, either all of the following pairs are edges (with all ports labeled with the same integer $deg_H(w_{q,1})$) or all are non-edges:
	\begin{itemize}
		\item $w_{q,b}$ in $H_L$ of $\widehat{H_{i}}$ and $w_{q,2-b}$ in $H_R$ of $\widehat{H_{i-1}}$
		\item $w_{q,b}$ in $H_R$ of $\widehat{H_{i-1}}$ and $w_{q,2-b}$ in $H_L$ of $\widehat{H_{i}}$
		\item $w_{q,b}$ in $H_L$ of $\widehat{H_{i'}}$ and $w_{q,2-b}$ in $H_R$ of $\widehat{H_{i'-1}}$
		\item $w_{q,b}$ in $H_R$ of $\widehat{H_{i'-1}}$ and $w_{q,2-b}$ in $H_L$ of $\widehat{H_{i'}}$
		\item $w_{q,b}$ and $w_{q,2-b}$ in $H_T$ of $\widehat{H_{i}}$
		\item $w_{q,b}$ and $w_{q,2-b}$ in $H_T$ of $\widehat{H_{i'}}$
	\end{itemize}
	Recall that $u''=w_{q,b}$ and the immediate neighbourhoods of $u''_{L,i}, u''_{R,i-1}, u''_{L,i'}, u''_{T,i'}$ are all the same at the end of Part 3 of the construction (all components in all gadgets are just copies of $H$). Together with the fact that either all of the above pairs are non-edges or they are identically-labeled edges, it follows that $\cB^{1}(u''_{L,i}) = \cB^{1}(u''_{R,i-1}) = \cB^{1}(u''_{L,i'}) = \cB^{1}(u''_{T,i'})$. This is sufficient to show that the edge $e$ (which is an edge in $\cB^{1}(u''_{L,i})$ or $\cB^{1}(u''_{R,i-1})$) also appears as an edge in $\cB^{1}(u''_{L,i'})$ and $\cB^{1}(u''_{T,i'})$.
	
	Finally, there are two possibilities for the path $P_2$. If $P_2$ is an internal path, then by Fact \ref{fact:internalSame}, $P_2$ appears as a labeled root-to-leaf path  in each of $\cB^{|P_2|}(u'''_{L,i'})$ and $\cB^{|P_2|}(u'''_{T,i'})$. The other possibility is that $P_2$ contains a border node. Note that $P_2$ has length strictly less than $P$, so we apply the induction hypothesis to $P_2$ starting at node $w'''$ (which is equal to either $u'''_{L,i}$ or $u'''_{R,i-1}$), and we conclude that $P_2$ also exists as a labeled root-to-leaf path in each of $\cB^{|P_2|}(u'''_{L,i'})$ and $\cB^{|P_2|}(u'''_{T,i'})$.
	
	This concludes the induction step: we showed that $P$ appears as a labeled root-to-leaf path in $\cB^{h+1}(u'_{L,i'})$ and $\cB^{h+1}(u'_{T,i'})$ by showing that each of the three pieces of $P = P_1\cdot e \cdot P_2$ appears in the desired order.
\end{itemize}
In all three cases, we proved that $P$ also exists in $\cB^{k-1}(v')$, i.e., we have shown that $\cB^{k-1}(v) \subseteq \cB^{k-1}(v')$. A symmetric argument proves that any labeled root-to-leaf path $P'$ of length $k-1$ in $\cB^{k-1}(v')$ also exists as a labeled root-to-leaf path in $\cB^{k-1}(v)$, which completes the proof that $\cB^{k-1}(v) = \cB^{k-1}(v')$.

The above proof considers the case where $v$ is assumed to be in component $H_L$ of $\widehat{H_i}$, i.e., $v$ is equal to $u_{L,i}$ for some node $u \in H - \{r^0_0\}$. A nearly identical proof works under the assumption that $v$ is in component $H_T$ of $\widehat{H_i}$. Then, by symmetry, one can write similar proofs for $v$ in $H_R$ of $\widehat{H_i}$ and $v$ in $H_B$ of $\widehat{H_i}$ (where $R$ takes on the role of $L$, and $B$ takes on the role of $T$).
\end{proof}

The previous result implies that no node in any $J_Y$ can have a unique truncated view up to distance $k-1$, which gives us the following lower bound on the $\select$-index of $J_Y$.
\begin{lemma}\label{lem:SgeqkJ}
	$\psi_{\select}(J_Y) \geq k$ for any graph $J_Y \in \mathcal{J}_{\mu,k}$.
\end{lemma}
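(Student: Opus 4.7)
The plan is to derive this lower bound as a direct corollary of Lemma \ref{lem:PPEtwin} combined with the general necessary condition for Selection given by Proposition \ref{uniquekview}. The key observation is that Proposition \ref{uniquekview}, when applied with parameter $k-1$ instead of $k$, says that any algorithm solving $\select$ in $k-1$ rounds on some graph $G$ requires the existence of a node $u \in V(G)$ whose augmented truncated view at depth $k-1$ is unique, i.e., $\cB^{k-1}(u) \neq \cB^{k-1}(w)$ for every $w \in V(G) \setminus \{u\}$. Our task reduces to showing that no such node exists in $J_Y$.

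First I would fix an arbitrary $J_Y \in \mathcal{J}_{\mu,k}$ and an arbitrary node $v \in V(J_Y)$. Since every node of $J_Y$ is contained in some gadget $\widehat{H_i}$ with $i \in \{0,\dots,2^z-1\}$, Lemma \ref{lem:PPEtwin} applies directly and provides a node $v' \neq v$ in $J_Y$ with $\cB^{k-1}(v) = \cB^{k-1}(v')$. Because $v$ was arbitrary, this shows that no node of $J_Y$ has a unique augmented truncated view at depth $k-1$. Invoking the contrapositive of Proposition \ref{uniquekview} (with depth $k-1$), we conclude that no deterministic algorithm can solve $\select$ on $J_Y$ in $k-1$ rounds, hence $\psi_{\select}(J_Y) \geq k$, as required.

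There is essentially no obstacle here: all of the technical work has already been done in Lemma \ref{lem:PPEtwin}, which establishes the twin property that is exactly what is needed. The only thing to be careful about is stating the application of Proposition \ref{uniquekview} at the correct depth and noting that the algorithm in question is deterministic (decisions are a function of $\cB^{k-1}$), both of which are standard.
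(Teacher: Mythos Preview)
Your proposal is correct and matches the paper's approach exactly: the paper states this lemma as an immediate consequence of Lemma~\ref{lem:PPEtwin} (every node has a twin with the same depth-$(k-1)$ view) together with Proposition~\ref{uniquekview}, and you have spelled out precisely that deduction.
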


Next, we prove that the $\cppe$-index is at most $k$ for any graph $J_Y \in \mathcal{J}_{\mu,k}$. We specify an algorithm that, when given a map of $J_Y$ as input, gets each node to output a complete port sequence on a path from itself to node $\rho_0$ of gadget $\widehat{H_0}$. The idea is that, within $k$ rounds, each node can determine in which gadget $\widehat{H_i}$ it is located: it will be able to see the entire $k^{th}$ layer of whichever component it is in (one of $H_L$, $H_T$, $H_R$ or $H_B$ of its gadget $\widehat{H_i}$) and, as it knows the procedure used in Part 4 to `encode' the value of $i$ via added edges in the $k^{th}$ layer, it can deduce the value of $i$ used in the encoding. Then, each node computes a complete port sequence on a path from itself to $\rho_i$ in its gadget $\widehat{H_i}$ (as $\rho_i$ is the unique node in its view with largest degree), then consults the given map of $J_Y$ to determine a complete port sequence on a path from $\rho_i$ to $\rho_0$.

\begin{lemma}\label{lem:CPPEleqkJ}
	$\psi_{\cppe}(J_Y) \leq k$ for any graph $J_Y \in \mathcal{J}_{\mu,k}$.
\end{lemma}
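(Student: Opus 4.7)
The plan is to design a $k$-round distributed algorithm that solves $\cppe$ on any $J_Y \in \mathcal{J}_{\mu,k}$ when the map of $J_Y$ is provided as advice, thereby establishing $\psi_{\cppe}(J_Y) \leq k$. I would designate $\rho_0$, the central node of $\widehat{H_0}$, as the leader. Every node $v$ first spends $k$ rounds computing its augmented truncated view $\cB^k(v)$; then $\rho_0$ outputs \emph{leader}, and every other node uses $\cB^k(v)$ together with the map to output the complete port sequence along a simple path from itself to $\rho_0$.

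The technical heart of the argument will be showing that $\cB^k(v)$ suffices for $v$ to determine its own location in the map. Because the central node $\rho_i$ of the containing gadget has degree $4\mu$ (strictly larger than any other node of $J_Y$), and because every node of a component $H_D$ lies within distance $k$ of its root (as $H$ has diameter $k$ from $r^0_0$), I would first observe that $v$ sees $\rho_i$ in its view and can extract a port sequence from $v$ to $\rho_i$. To recover the gadget index $i$, I would rely on the fact that, by Part 4 of the construction, each border node $w_{q,b}$ of $H_D$ carries an added incident edge precisely when the $q$-th bit of the encoding string of the gadget ($x_i$ if $H_D \in \{H_L, H_T\}$, or $x_{i+1}$ if $H_D \in \{H_R, H_B\}$) is $1$. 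Thus $v$ needs only to read the degree of at least one border node per bit-position to recover the full $z$-bit binary representation and hence $i$.

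Verifying that these degrees are visible in $\cB^k(v)$ will involve a short case analysis. For an internal $v$ in layer $L_j$ with $j < k$: any layer-$k$ node is reachable by at most $j$ layer-edge steps inside $L_j$ (the diameter of $L_j$ is $j$) followed by $k-j$ inter-layer steps, giving distance at most $k$. For a border $v$ in some $L_{k,b'}$: every border node $w_{q,b'}$ of the same $L_{k,b'}$ is at distance at most $k$ (the diameter of the layer graph $L_k$), which already suffices to read one bit per index $q$. Once $i$ is identified, $v$ can match its view against the map to pinpoint its exact position in $\widehat{H_i}$: even when the Part 5 swaps interchange the $H_L/H_T$ or $H_R/H_B$ port labels at $\rho_i$, the port labels themselves are recorded in the view, so $v$ recognizes which component of $\widehat{H_i}$ it belongs to. Using the map, $v$ then reads off any simple path from its position to $\rho_0$ (such a path exists because $J_Y$ is connected) and outputs the corresponding port sequence. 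The main obstacle will be this distance analysis, together with checking that the Part 5 symmetries preserve both the encoded bits and the ``incoming'' $L/T$ versus ``outgoing'' $R/B$ half of the gadget, so that $v$ always unambiguously deduces $i$.
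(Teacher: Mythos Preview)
Your plan matches the paper's approach closely: elect $\rho_0$, let each node compute $\cB^k(v)$, locate the unique degree-$4\mu$ node $\rho_i$ in that view, decode the gadget index $i$ from the degrees of the border nodes $w_{q,b}$ in the $k$-th layer, and then output a port sequence to $\rho_0$. The paper organises the decoding via an integer $W_{x,c}$ read off from the component reached through the port block $c\in\{0,1,2,3\}$ at $\rho_x$, followed by a short case analysis in $(c,W_{x,c})$; this is exactly the mechanism you sketch.

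One point deserves care. You propose that once $i$ is known, $v$ ``matches its view against the map to pinpoint its exact position in $\widehat{H_i}$'' and then reads a simple path to $\rho_0$ off the map. The paper does \emph{not} claim exact self-localisation (and nothing in the construction guarantees that $\cB^k$ is injective on $V(J_Y)$). Instead, after determining $x$, the paper computes a shortest path $Q_x$ from $v$ to $\rho_x$ \emph{inside the view} $\cB^k(v)$ --- this needs no map-matching, since $\rho_x$ is visible --- and concatenates its port sequence with precomputed map paths $\sigma_x,\sigma_{x-1},\ldots,\sigma_1$ from $\rho_x$ down to $\rho_0$, splicing carefully at the first intersection of $Q_x$ with $P_x$ to keep the result simple. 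This sidesteps the localisation question entirely. Your argument would go through if you replace the ``pinpoint exact position'' step by this view-based construction of the first segment; as written, the localisation claim is an unproven (and unnecessary) assertion.
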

\begin{proof}
	We present an algorithm that solves $\cppe$ within $k$ rounds when executed by the nodes of any graph $J_Y \in \mathcal{J}_{\mu,k}$ when the full map of $J_Y$ is provided to each node.
	
	First, we introduce some new notation for the purposes of our algorithm. For any fixed $i \in \{0,\ldots,2^z-1\}$, recall that the gadget $\widehat{H_i}$ has four `sub-components' (in the construction, these were called $H_L$, $H_T$, $H_R$, and $H_B$). However, in a given map of $J_Y$, or in a node's view, these sub-components are not labeled as $H_L$, $H_T$, $H_R$, and $H_B$. Instead, for each $c \in \{0,1,2,3\}$, we denote by $H_{i,c}$ the subgraph induced by nodes within distance $k$ of $\rho_i$ that can be reached using the outgoing ports $\mu c,\ldots,\mu(c+1)-1$. Further, for each $c \in \{0,1,2,3\}$, we associate an integer $W_{i,c}$ whose $z$-bit binary representation is encoded in the $k^{th}$ layer of $H_{i,c}$. In particular, the $q^{th}$ bit of the $z$-bit binary representation of $W_{i,c}$ is 1 if and only if $deg_{(J_Y)}(w_{q,1}) = deg_{H}(w_{q,1})+1$ in $H_{i,c}$. At a high level, the value of $W_{i,c}$ is obtained by `decoding' the value that was encoded using edges in Part 4 of the construction in building the template graph $J$.
	
	Our $\cppe$ algorithm proceeds as follows. First, assuming that a full map of $J_Y$ is given as input to each node, the following pre-processing occurs at each node before any communication takes place.
	\begin{enumerate}[label=\arabic*.]
		\item Use the map of $J_Y$ to find the nodes $\rho_0,\ldots,\rho_{2^z-1}$: these are the $2^z$ nodes that have the (same) largest degree in the map. By construction, these nodes have degree $4\mu$, so we deduce the value of $\mu$. Next, to determine which node is $\rho_0$ and which is $\rho_{2^z-1}$, compute the four integers $W_{x,0}$, $W_{x,1}$, $W_{x,2}$, and $W_{x,3}$ at each of the two extreme gadgets in the given map of $J_Y$. In particular, the gadget for which two of these integers are 0 and the other two are 1 is necessarily gadget $\widehat{H_0}$, and the gadget for which two of these integers are 0 and the other two are $2^z-1$ is necessarily gadget $\widehat{H_{2^z-1}}$.  Using this information, label the center node of each gadget on the map of $J_Y$ as $\rho_i$ using the correct subscript $i$. 
		\item For each $i \in \{1,\ldots,2^z-1\}$, compute a shortest path $P_i$ in the map from $\rho_i$ to $\rho_{i-1}$, and define $\sigma_i$ to be the sequence of ports along such a path. These paths and port sequences will be used later when determining the algorithm's output.
	\end{enumerate}
	
	Next, each node $v$ considers its own degree and behaves according to one of the two following cases.
	\begin{itemize}
		\item If $v$ has degree $4\mu$, then $v$ concludes that it is some $\rho_x \in \{\rho_0,\ldots,\rho_{2^z-1}\}$, and determines which one it is, as follows. First, using $k$ communication rounds, $v$ computes $\cB^k(v)$, and using this view, computes the four integers $W_{x,0}, W_{x,1}, W_{x,2}, W_{x,3}$. There are 2 possible cases: if two of these integers are 0 and the other two are $2^z-1$, then $v$ concludes that $x = 2^z-1$; otherwise, two of these integers are equal to some fixed $j \in \{0,\ldots,2^z-2\}$ and the other two are $j+1$, in which case $v$ concludes that $x = j$. 
		
		After determining the value of $x$ such that $v = \rho_x$, node $v$ produces an output, and there are two possible cases: if $x=0$, then $v$ outputs `leader', and, otherwise, $v$ outputs $\sigma_x \cdots \sigma_1$, i.e., the concatenation of port sequences that label a shortest path from $\rho_i$ to $\rho_{i-1}$ for each $i = x, \ldots, 1$.
		\item If $v$ does not have degree $4\mu$, then $v$ concludes that it is not one of the nodes $\rho_0,\ldots,\rho_{2^z-1}$. Node $v$ must belong to gadget $\widehat{H_x}$ for some $x \in \{0,\ldots,2^z-1\}$, and it determines the value of $x$, as follows. First, using $k$ communication rounds, $v$ computes $\cB^k(v)$. There is exactly one node with degree $4\mu$ in this view, and this node is $\rho_x$. Denote by $p$ the last port number on a shortest path from $v$ to $\rho_x$ (this is the port incident to $\rho_x$). This port number $p$ is in the range $\mu c,\ldots, \mu(c+1)-1$ for some $c \in \{0,1,2,3\}$, and it is straightforward for $v$ to use $p$ to determine this value of $c$. The nodes $w_{1,1},\ldots,w_{z,1},w_{1,2},\ldots,w_{z,2}$ of $H_{x,c}$ are all within $v$'s view up to distance $k$, so, $v$ uses their degrees to compute $W_{x,c}$, as described above. Next, $v$ uses the values of $c$ and $W_{x,c}$ to determine $x$ based on the following cases:
		\begin{itemize}
			\item if $W_{x,c} \leq 2^{z-1} - 1$, and,
			\begin{itemize}
				\item if $c \in \{0,1\}$, then set $x = W_{x,c}$.
				\item if $c \in \{2,3\}$, then set $x = W_{x,c}-1$.
			\end{itemize}
			\item if $W_{x,c} \geq 2^{z-1}$, and,
			\begin{itemize}
				\item if $c \in \{0,1\}$, then set $x = W_{x,c}-1$.
				\item if $c \in \{2,3\}$, then set $x = W_{x,c}$.
			\end{itemize}		
		\end{itemize}
		
		After determining the value of $x$ such that $v$ belongs to gadget $\widehat{H_x}$, node $v$ constructs its output, as follows. First, $v$ computes a shortest path $Q_x$ in $\cB^{k}(v)$ from itself to $\rho_x$ (where $\rho_x$ is the only node with degree $4\mu$ in $J_Y$ that is within $v$'s view up to distance $k$). Let $u$ be the first node on the path $Q_x$ that is contained in $P_x$. (Note that, in many cases, node $u$ is simply $\rho_x$, but $u$ might be a different node if $v$ is in $H_L$ of $\widehat{H_x}$). Define $s_x$ to be the port sequence that labels the part of $Q_x$ from $v$ to $u$, and define $t_x$ to be the port sequence that labels the part of $P_x$ from $u$ to $\rho_{x-1}$ (if $u = \rho_x$, then $t_x$ is simply $\sigma_x$). Finally, $v$ outputs the following concatenation of port sequences: $s_x\cdot t_x\cdot\sigma_{x-1}\cdots\sigma_1$.
	\end{itemize}
\end{proof}

By Fact \ref{indexHierarchy}, Lemma \ref{lem:SgeqkJ} and Lemma \ref{lem:CPPEleqkJ}, we see that $k \geq \psi_{\cppe}(J_Y) \geq \psi_{\ppe}(J_Y) \geq \psi_{\select}(J_Y) \geq k$ for all $J_Y \in \mathcal{J}_{\mu,k}$, which allows us to conclude that the $\select$-index, $\ppe$-index, and $\cppe$-index are all equal to $k$ in our constructed class $\mathcal{J}_{\mu,k}$.

\begin{lemma}\label{lem:PPECPPEindexk}
	For any integers $\mu \geq 2$ and $k \geq 4$, $\psi_{\cppe}(J_Y) = \psi_{\ppe}(J_Y) = \psi_{\select}(J_Y)  = k$ for any graph $J_Y \in \mathcal{J}_{\mu,k}$.
\end{lemma}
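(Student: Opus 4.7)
The plan is to simply chain together the results already proved in this section, since the lemma statement combines three inequalities each of which has essentially been established. First I would cite Fact \ref{indexHierarchy} to obtain the inequalities $\psi_{\cppe}(J_Y) \geq \psi_{\ppe}(J_Y) \geq \psi_{\select}(J_Y)$ for any graph $J_Y \in \mathcal{J}_{\mu,k}$, which follows from the fact that solving a stronger version of leader election immediately yields a solution to each weaker version in the same number of rounds.

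Next I would invoke Lemma \ref{lem:SgeqkJ} to conclude $\psi_{\select}(J_Y) \geq k$, and Lemma \ref{lem:CPPEleqkJ} to conclude $\psi_{\cppe}(J_Y) \leq k$. Concatenating all four inequalities in the chain
\[
k \;\geq\; \psi_{\cppe}(J_Y) \;\geq\; \psi_{\ppe}(J_Y) \;\geq\; \psi_{\select}(J_Y) \;\geq\; k
\]
forces equality throughout, so $\psi_{\cppe}(J_Y) = \psi_{\ppe}(J_Y) = \psi_{\select}(J_Y) = k$, which is the claim.

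There is no real obstacle here, since the two nontrivial inputs, namely the lower bound on $\psi_{\select}$ coming from the ``twin'' argument in Lemma \ref{lem:PPEtwin} and the upper bound on $\psi_{\cppe}$ coming from the explicit map-based algorithm in Lemma \ref{lem:CPPEleqkJ}, have already been proved. The remaining hierarchy inequality is just Fact \ref{indexHierarchy}, and no additional construction or case analysis is needed. The proof is therefore a one-line squeeze.
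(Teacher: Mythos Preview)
Your proposal is correct and matches the paper's own argument essentially verbatim: the paper states just before the lemma that by Fact~\ref{indexHierarchy}, Lemma~\ref{lem:SgeqkJ}, and Lemma~\ref{lem:CPPEleqkJ} one has $k \geq \psi_{\cppe}(J_Y) \geq \psi_{\ppe}(J_Y) \geq \psi_{\select}(J_Y) \geq k$, forcing equality throughout.
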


We now proceed to analyze the amount of advice needed to solve Port Path Election and Complete Port Path Election in the graph class $\mathcal{J}_{\mu,k}$. To show that a large amount of advice is needed, the idea is to think of each $J_Y$ as having a `left' half (gadgets $\widehat{H_0},\ldots,\widehat{H_{2^{z-1}-1}}$) and a `right' half (gadgets $\widehat{H_{2^{z-1}}},\ldots,\widehat{H_{2^{z}-1}}$) and observing that, whichever node $u$ is elected as leader, there exists a node $v$ on the opposite half that must output a sequence of ports corresponding to a very long simple path from $v$ to $u$. But, most of the ports on this path are outside of $v$'s view up to distance $k$, so $v$ has to depend on the given advice to help it determine the port sequence. The goal is to show that $v$ has to output a different port sequence for each graph in the class, and that a different piece of advice is needed for each. By the construction of $J_Y$, the $i^{th}$ binary entry of $Y$ determines whether or not some ports at $\rho_i$ are swapped, so even changing one entry of $Y$ will affect whether or not a port leads in the correct direction towards the opposite half of $J_Y$. The following result formalizes these observations, in particular, (1) a node that is on one `edge' of $J_Y$ cannot see far enough to detect any port swaps due to values in $Y$, and, (2) for any two different graphs $J_{\alpha}$ and $J_{\beta}$, no fixed port sequence can correspond to a simple path in both graphs that starts from one `edge' node on one half and ending in the opposite half.

\begin{lemma}\label{lem:valphavbeta}
	Consider any positive integers $\mu \geq 2$ and $k \geq 4$. Let $z$ be the number of nodes in layer graph $L_k$, and let $\alpha, \beta$ be distinct binary sequences of length exactly $2^{z-1}$. For each $Y \in \{\alpha,\beta\}$, let $v_Y$ be the node $w_{1,1}$ in component $H_L$ of gadget $\widehat{H_0}$ of $J_Y \in \mathcal{J}_{\mu,k}$. Then:
	\begin{enumerate}[label=(\arabic*)]
		\item $\cB^{k}(v_\alpha)$ in $J_\alpha$ is equal to $\cB^{k}(v_\beta)$ in $J_\beta$.\label{part1:valphavbeta}
		\item Suppose that $\sigma$ is a fixed sequence of ports that corresponds to a simple path $P_\alpha$ in $J_\alpha$ starting at node $v_\alpha$ such that $P_\alpha$ contains at least one node from gadget $\widehat{H_{2^{z-1}}}$ of $J_\alpha$. Then $\sigma$ corresponds to a path $P_\beta$ in $J_\beta$ starting at node $v_\beta$ such that either $P_\beta$ is not simple, or, $P_\beta$ only contains nodes from gadgets $\widehat{H_0},\ldots,\widehat{H_{2^{z-1}-1}}$ of $J_\beta$.\label{part2:valphavbeta}
	\end{enumerate}
\end{lemma}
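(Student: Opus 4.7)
The plan is to address the two parts separately, exploiting that $v_Y$ is embedded deep inside $H_L$ of $\widehat{H_0}$, which connects to the rest of the graph only through $\rho_0$, and that $J_\alpha$ and $J_\beta$ differ only by port-label swaps at certain $\rho$-nodes. For part~\ref{part1:valphavbeta}, my first step is to verify that $d(v_Y,\rho_0)=k$: Claim~\ref{claim:unique-Interlayer} from the proof of Lemma~\ref{lem:invisibleLeaf} supplies the unique chain of inter-layer edges of length $k$ from $v_Y\in L_{k,1}$ down to $\rho_0\in L_0$, while layer edges never bridge different layers, so fewer than $k$ edges do not suffice. Moreover, $H_L$ of $\widehat{H_0}$ is never an endpoint of any Part~4 edge (cases 1--2 only add edges inside $H_B$ or $H_T$, and cases 3--4 only add edges incident to $H_L$ of $\widehat{H_i}$ with $i\ge 1$). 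Consequently every node of $\cB^k(v_Y)$ other than $\rho_0$ lies in $H_L$ of $\widehat{H_0}$, and $\rho_0$ appears only as a leaf of the truncated view, recording just its degree $4\mu$ with no visible incident ports. Since Part~5 alters only port labels at $\rho$-nodes and none of those labels surface in $\cB^k(v_Y)$, the view is independent of $Y$, yielding $\cB^k(v_\alpha)=\cB^k(v_\beta)$.

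For part~\ref{part2:valphavbeta}, I would let $i^*$ be the smallest index in $\{0,\ldots,2^{z-1}-1\}$ at which $\alpha$ and $\beta$ differ. By Part~5, $J_\alpha$ and $J_\beta$ then agree on all port labels at $\rho_j$ for $j<i^*$; and since any simple path from $v_\alpha$ in $\widehat{H_0}$ to a node of $\widehat{H_{2^{z-1}}}$ must traverse the gadgets in increasing index order (Part~4 only links consecutive gadgets) and $2^{z-1}>i^*$, no right-half $\rho$-node is reached before $\rho_{i^*}$. At $\rho_{i^*}$ itself, exactly one of the two graphs applies the swap transposing the $H_R$-ports $\{2\mu,\ldots,3\mu-1\}$ with the $H_B$-ports $\{3\mu,\ldots,4\mu-1\}$. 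Next I would argue that $P_\alpha$ must pass through $\rho_{i^*}$ and, by simplicity, does so exactly once with outgoing step going into $H_R$ of $\widehat{H_{i^*}}$: $H_T$ and $H_B$ have no cross-gadget edges, while exiting into $H_L$ either returns toward $v_\alpha$'s starting component (when $i^*=0$) or forces a non-simple backward detour through $\widehat{H_{i^*-1}}$.

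Let $p$ denote the outgoing-port label used by $\sigma$ at $\rho_{i^*}$. The prefixes of $P_\alpha$ and $P_\beta$ coincide up to and including the visit to $\rho_{i^*}$, so $\rho_{i^*}$ is already placed on $P_\beta$. Because the $H_R/H_B$ swap at $\rho_{i^*}$ applies in exactly one of the two graphs, port $p$ at $\rho_{i^*}$ in $J_\beta$ leads into $H_B$ of $\widehat{H_{i^*}}$ instead of $H_R$. Now $H_B$ of $\widehat{H_{i^*}}$ has no Part~4 edges leaving the gadget (only intra-$H_B$ edges between its own $L_{k,1}$ and $L_{k,2}$ nodes), so the only way for $P_\beta$ to reach any gadget $\widehat{H_j}$ with $j>i^*$ is to re-enter $\rho_{i^*}$, destroying simplicity. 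Therefore $P_\beta$ is either non-simple or is trapped within $\widehat{H_0},\ldots,\widehat{H_{i^*}}$, and since $i^*\le 2^{z-1}-1$ the latter confines $P_\beta$ to the left-half gadgets $\widehat{H_0},\ldots,\widehat{H_{2^{z-1}-1}}$, as required.

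The main obstacle I expect is the clean justification that $P_\alpha$'s single outgoing step at $\rho_{i^*}$ must land in $H_R$, which will require careful use of simplicity together with an enumeration of Part~4 cases, and relies on the fact that the $H_L/H_T$ swap at $\rho_{2^z-1-i^*}$ is immaterial here because $\rho_{2^z-1-i^*}$ is not reached before $\rho_{i^*}$. A second delicate point is confirming that $H_B$ of $\widehat{H_{i^*}}$ really has no escape edges to other gadgets except through $\rho_{i^*}$.
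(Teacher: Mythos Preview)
Your approach matches the paper's proof closely in both parts. One point in part~\ref{part1:valphavbeta} needs a small correction: you assert that $\rho_0$ appears only as a leaf ``recording just its degree $4\mu$ with no visible incident ports,'' and hence that no Part~5 labels surface in $\cB^k(v_Y)$. This is not quite right. The augmented truncated view encodes each depth-$k$ path as $(p_1,q_1,\ldots,p_k,q_k)$, so the \emph{incoming} port $q_k$ at $\rho_0$ (some value in $\{0,\ldots,\mu-1\}$, since the edge comes from $H_L$) is visible. The paper's fix is exactly what you need: because $\rho_0$ lies in the left half, only modification~1 of Part~5 (swapping $\{2\mu,\ldots,3\mu-1\}$ with $\{3\mu,\ldots,4\mu-1\}$) can ever touch $\rho_0$, so the $H_L$-ports $\{0,\ldots,\mu-1\}$ at $\rho_0$ are invariant across all $J_Y$. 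With this correction your part~\ref{part1:valphavbeta} goes through.

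Your part~\ref{part2:valphavbeta} is essentially the paper's argument, and your explicit check that $H_B$ of $\widehat{H_{i^*}}$ has no Part~4 edges to other gadgets (so $P_\beta$ would have to revisit $\rho_{i^*}$) is a clean way to finish. The paper dispatches this with the earlier observation that a simple path reaching $\widehat{H_{2^{z-1}}}$ from $v_Y$ can only use $H_L$ and $H_R$ of each intermediate gadget; your formulation is equivalent. The concern you flag about showing the outgoing step at $\rho_{i^*}$ lands in $H_R$ is handled by exactly the simplicity-plus-cases reasoning you sketch, and the paper does not elaborate further either.
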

\begin{proof}
	To prove statement \ref{part1:valphavbeta}, first recall from Part 4 of the construction that node $w_{1,1}$ is in layer $L_k$ of its component. By Part 4 of the construction, no edges are added incident to nodes in $H_L$ of $\widehat{H_0}$. Further, by Part 2 of the construction, the distance from any node in $L_k$ to $L_0$ in $H_L$ is $k$. So, if we consider node $w_{1,1}$ in $H_L$ of gadget $\widehat{H_0}$, the previous two facts allow us to conclude that its truncated view up to distance $k$ is completely contained in $H_L$ of gadget $\widehat{H_0}$ in the template graph $J$. By Part 3 of the construction, the port numbers at $\rho$ at edges in $H_L$ are in the range $0,\ldots,\mu-1$, so, in Part 5 of the construction, the port numbers at edges in $H_L$ are never swapped with others. Altogether, we get that the node $w_{1,1}$ in $H_L$ of gadget $\widehat{H_0}$ has the same truncated view up to distance $k$ in the template graph $J$ as it does in every graph of the class $\mathcal{J}_{\mu,k}$, which is sufficient to prove statement \ref{part1:valphavbeta}.
	
	To prove statement \ref{part2:valphavbeta}, consider any $Y \in\{\alpha,\beta\}$, and consider any fixed port sequence $\sigma$ that corresponds to a path $P_Y$ in $J_Y$ starting at node $v_Y$ such that $P_Y$ has the following two properties: it is a simple path, and, it contains at least one node from gadget $\widehat{H_{2^{z-1}}}$ of $J_Y$. By the choice of $v_Y$ and Part 4 of the construction, to have both of these properties, $P_Y$ necessarily only contains nodes in $H_L$ and $H_R$ of each gadget $\widehat{H_0},\ldots,\widehat{H_{2^{z-1}-1}}$ of $J_Y$. This is because the only edges that have endpoints in two different gadgets are incident to nodes in components $H_L$ and $H_R$, and, any path between two components of a gadget must pass through node $\rho$ of the gadget (and a simple path can only use node $\rho$ once).
	
	Assume that $\sigma$ is a port sequence that corresponds to a path $P_\alpha$ in $J_\alpha$ starting at node $v_\alpha$ such that $P_\alpha$ is a simple path, and, it contains at least one node from gadget $\widehat{H_{2^{z-1}}}$ of $J_\alpha$. Let $m \leq 2^{z-1}-1$ be the smallest index where the sequences $\alpha$ and $\beta$ differ. Without loss of generality, assume that $\alpha_m = 0$ and $\beta_m=1$. By Part 5 of the construction, the gadgets $\widehat{H_0},\ldots,\widehat{H_{m-1}}$ in $J_\alpha$ and $J_\beta$ are identical, so, following the port sequence $\sigma$ will trace out the same path up to node $\rho_m$ (in gadget $\widehat{H_{m}}$) in both $J_\alpha$ and $J_\beta$. Since $\alpha_m=0$, gadget $\widehat{H_m}$ of $J_\alpha$ is the same as in $J$, which means that the port in $\sigma$ that is used to leave node $\rho_m$ in $\widehat{H_m}$ of $J_\alpha$ to enter component $H_R$ is in the range $2\mu,\ldots,3\mu-1$. In $J_\beta$, the ports $2\mu,\ldots,3\mu-1$ at node $\rho_m$ of gadget $\widehat{H_m}$ have been swapped with the ports $3\mu,\ldots,4\mu-1$. But, in the template graph, the ports $3\mu,\ldots,4\mu-1$ lead to nodes in $H_B$. So, following the port sequence $\sigma$ in $J_\beta$ will result in a path $P_\beta$ that contains a node in component $H_B$ of gadget $\widehat{H_m}$. As remarked in the previous paragraph, this means that $P_\beta$ does not have both specified properties, i.e., either $P_\beta$ is not simple, or, $P_\beta$ does not contain at least one node from gadget $\widehat{H_{2^{z-1}}}$ of $J_\beta$, which concludes the proof of statement \ref{part2:valphavbeta}.
\end{proof}

We prove that a large amount of advice is needed by any algorithm that solves $\ppe$ in time $k$ for all graphs in $\mathcal{J}_{\mu,k}$. Without loss of generality, we can assume that the algorithm elects a leader in the `right' half of $J_Y$ for at least half of the graphs $J_Y \in \mathcal{J}_{\mu,k}$, and we restrict our attention to these graphs. We consider a node that is on the extreme `left' side of each such $J_Y$, and we use the previous result to conclude that this node must output a different port sequence for each such $J_Y$. Again, by the previous result, this extreme node cannot see far enough to detect any swapped ports, so it must rely entirely on the advice it is given, i.e., producing two different outputs requires two different pieces of advice. This shows that the oracle gives a different piece of advice for each graph, which leads to our lower bound on the length of the advice strings in the worst case.
\begin{theorem}
	Consider any algorithm $\cA$ that solves $\ppe$ in $\psi_{\ppe}(G)$ rounds for every graph $G$. For all integers $\Delta \geq 16, k\ge 6$, there exists a
	graph $G$ with maximum degree $O(\Delta)$ and with $\psi_{\ppe}(G)=k$ for which algorithm $\cA$ requires
	advice of size $\Omega(2^{\Delta^{k/6}})$.
\end{theorem}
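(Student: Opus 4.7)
The plan is to instantiate the class $\mathcal{J}_{\mu,k}$ with $\mu = \Delta$. This gives graphs of maximum degree $4\mu = O(\Delta)$ (attained at the $\rho$-nodes), and by Lemma \ref{lem:PPECPPEindexk} each $J_Y \in \mathcal{J}_{\Delta,k}$ satisfies $\psi_{\ppe}(J_Y) = k$. I argue by contradiction: suppose $\cA$ solves $\ppe$ in $k$ rounds on every graph of $\mathcal{J}_{\Delta,k}$ using advice of length at most $\ell$, where $\ell$ is small enough that $2^{\ell+1} \le 2^{2^{z-1}-1}$, with $z$ denoting the number of nodes in the layer $L_k$.

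The first step is a two-level pigeonhole argument. For each $J_Y$, $\cA$ elects a unique leader, which lies in some gadget $\widehat{H_i}$. I split $\mathcal{J}_{\Delta,k}$ into a left-leader subclass ($i < 2^{z-1}$) and a right-leader subclass ($i \geq 2^{z-1}$). By Fact \ref{fact:JClassSize}, $|\mathcal{J}_{\Delta,k}| = 2^{2^{z-1}}$, so one of the two subclasses has at least $2^{2^{z-1}-1}$ graphs. The Part 5 port-swap rules treat the two halves symmetrically, with the roles of $H_L, H_T$ in left-half gadgets $\widehat{H_i}$ corresponding to the roles of $H_R, H_B$ in the mirror gadgets $\widehat{H_{2^z-1-i}}$, so without loss of generality I assume this is the right-leader subclass, deferring the opposite case to a mirror analogue of Lemma \ref{lem:valphavbeta}. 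Since the total number of binary strings of length at most $\ell$ is less than $2^{2^{z-1}-1}$, by pigeonhole two distinct graphs $J_\alpha, J_\beta$ in the right-leader subclass must receive the same advice string from the oracle.

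Now I use Lemma \ref{lem:valphavbeta} to derive a contradiction. Let $v_Y$ denote the node $w_{1,1}$ in $H_L$ of $\widehat{H_0}$ of $J_Y$, for $Y \in \{\alpha,\beta\}$. Part (1) of the lemma gives $\cB^k(v_\alpha) = \cB^k(v_\beta)$, and combined with identical advice, determinism of $\cA$ forces $v_\alpha$ and $v_\beta$ to produce the same output. Since the leader lies in the right half of both graphs while $v_Y$ lies in the leftmost gadget $\widehat{H_0}$, $v_Y$ is not the leader, so the common output is a port sequence $\sigma$. The only inter-gadget edges added in Part 4 connect border nodes of $H_R$ in $\widehat{H_{i-1}}$ to border nodes of $H_L$ in $\widehat{H_i}$, so the gadgets form a chain, and every simple path from $v_\alpha$ to the right-half leader of $J_\alpha$ necessarily traverses $\widehat{H_{2^{z-1}}}$. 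Hence $\sigma$ satisfies the hypothesis of Lemma \ref{lem:valphavbeta}(2), and that lemma implies that in $J_\beta$ the sequence $\sigma$ traces either a non-simple path or a simple path entirely within $\widehat{H_0},\ldots,\widehat{H_{2^{z-1}-1}}$. Neither alternative is a valid $\ppe$ output at $v_\beta$ in $J_\beta$, whose leader lies in the right half, contradicting correctness of $\cA$.

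This forces $\ell \geq 2^{z-1} - O(1)$, and Fact \ref{fact:JClassSize} gives $z \geq \mu^{\lfloor k/2 \rfloor} = \Delta^{\lfloor k/2 \rfloor} \geq \Delta^{k/6}$ for $k \geq 6$, yielding the claimed $\ell = \Omega(2^{\Delta^{k/6}})$. The main technical obstacle I anticipate is making the ``without loss of generality'' reduction fully rigorous: the cleanest approach is to write down the mirror analogue of Lemma \ref{lem:valphavbeta} centred at the node $w_{1,1}$ of $H_R$ inside $\widehat{H_{2^z-1}}$ (targeting gadget $\widehat{H_{2^{z-1}-1}}$), whose proof follows the same scheme as Lemma \ref{lem:valphavbeta} using the left--right symmetry of the Part 5 swap rules, so that the left-leader subclass also yields a contradiction with no substantially new ideas.
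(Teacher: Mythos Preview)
Your proposal is correct and follows essentially the same approach as the paper's own proof: split $\mathcal{J}_{\mu,k}$ into left-leader and right-leader subclasses, apply pigeonhole on the larger one to find two graphs $J_\alpha,J_\beta$ receiving identical advice, and then invoke Lemma~\ref{lem:valphavbeta} at the node $w_{1,1}$ in $H_L$ of $\widehat{H_0}$ to derive a contradiction. The only cosmetic differences are that the paper sets $\mu=\lceil\Delta/4\rceil$ rather than $\mu=\Delta$ (both yield maximum degree $O(\Delta)$), and the paper handles the left-leader case with a bare ``without loss of generality'' whereas you explicitly flag the need for a mirror version of Lemma~\ref{lem:valphavbeta}---your treatment is, if anything, slightly more careful on that point.
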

\begin{proof}
	Let $\mu = \lceil \Delta/4 \rceil$ and note that $\mu \geq 4$. To obtain a contradiction, assume that there exists an algorithm $\cA$ that solves $\ppe$ in $k$ rounds for all graphs in the class $\mathcal{J}_{\mu,k}$ with the help of an oracle that provides advice of size $2^{(4\mu)^{k/6}}$. By construction, every graph $G$ in $\mathcal{J}_{\mu,k}$ has maximum degree $4\mu \in O(\Delta)$, and, by Lemma \ref{lem:PPECPPEindexk}, has $\psi_{\ppe}(G)=k$. Let $\mathcal{J}^{right}_{\mu,k}$ be the subset of $\mathcal{J}_{\mu,k}$ consisting of graphs such that algorithm $\cA$ elects as leader a node contained in some $\widehat{H_i}$ with $i \in \{2^{z-1},\ldots,2^z-1\}$. Let $\mathcal{J}^{left}_{\mu,k} = \mathcal{J}_{\mu,k} \setminus \mathcal{J}^{right}_{\mu,k}$, i.e., the subset of $\mathcal{J}_{\mu,k}$ consisting of graphs such that algorithm $\cA$ elects as leader a node contained in some $\widehat{H_i}$ with $i \in \{0,\ldots,2^{z-1}-1\}$. The proof proceeds by considering the subset $\mathcal{J}^{left}_{\mu,k}$ or $\mathcal{J}^{right}_{\mu,k}$ that contains at least half of the graphs from $\mathcal{J}_{\mu,k}$. Without loss of generality, we assume that $|\mathcal{J}^{right}_{\mu,k}| \geq |\mathcal{J}_{\mu,k}|/2$
	
	There are at most $2^{1+2^{(4\mu)^{k/6}}}$ binary advice strings whose length is at most $2^{(4\mu)^{k/6}}$. The number of graphs in $\mathcal{J}^{right}_{\mu,k}$ is at least $|\mathcal{J}_{\mu,k}|/2$, which, by Fact \ref{fact:JClassSize}, is at least $2^{2^{z-1}-1}$. We now set out to show that the number of graphs in $\mathcal{J}^{right}_{\mu,k}$ is strictly larger than the number of possible binary advice strings. As $\mu \geq 4$ and $k \geq 6$, it follows that $(4\mu)^{k/6} \leq \mu^{k/3}$ and $\mu^{k/3} < \mu^{\lfloor k/2 \rfloor} - 2$. By Fact \ref{fact:JClassSize}, $\mu^{\lfloor k/2 \rfloor} - 2 \leq z - 2$. Thus, we have shown that $(4\mu)^{k/6} < z-2$, from which it follows that $2^{(4\mu)^{k/6}} < 2^{z-2}$.  As $\mu \geq 4$ and $k \geq 6$, from Fact \ref{fact:JClassSize} we get that $z \geq 64$, so $2^{z-2} < 2^{z-1}-2$. Therefore, $2^{(4\mu)^{k/6}} < 2^{z-1} -2$, from which it follows that $1+2^{(4\mu)^{k/6}} < 2^{z-1} -1$, so $2^{1+2^{(4\mu)^{k/6}}} < 2^{2^{z-1} -1}$, as desired.

	By the Pigeonhole Principle, the oracle provides the same advice for at least two different graphs $J_\alpha$ and $J_\beta$ from $\mathcal{J}^{right}_{\mu,k}$.
	For each $Y \in \{\alpha,\beta\}$, let $v_Y$ be the node $w_{1,1}$ in component $H_L$ of gadget $\widehat{H_0}$ of $J_Y$. 
	By statement \ref{part1:valphavbeta} of Lemma \ref{lem:valphavbeta}, we know that $\cB^{k}(v_\alpha)$ in $J_\alpha$ is equal to $\cB^{k}(v_\beta)$ in $J_\beta$, so together with the fact that the two nodes get the same advice, it follows that $v_\alpha$ and $v_\beta$ output the same port sequence $\sigma$ when $\cA$ terminates. By the definition of $\mathcal{J}^{right}_{\mu,k}$ and the assumed correctness of $\cA$, port sequence $\sigma$ corresponds to a simple path in $J_\alpha$ starting at $v_\alpha$ that terminates at the leader node in some $\widehat{H_i}$ with $i \in \{2^{z-1},\ldots,2^z-1\}$, and, also corresponds to a simple path in $J_\beta$ starting at $v_\beta$ that terminates at the leader node in some $\widehat{H_{i'}}$ with $i' \in \{2^{z-1},\ldots,2^z-1\}$. This contradicts statement \ref{part2:valphavbeta} of Lemma \ref{lem:valphavbeta}.
\end{proof}

The exact same proof applies to the amount of advice needed by any algorithm that solves the $\cppe$ task in time $k$.
\begin{theorem}
	Consider any algorithm $\cA$ that solves $\cppe$ in $\psi_{\cppe}(G)$ rounds for every graph $G$. For all integers $\Delta \geq 16, k\ge 6$, there exists a
	graph $G$ with maximum degree $O(\Delta)$ and with $\psi_{\cppe}(G)=k$ for which algorithm $\cA$ requires
	advice of size $\Omega(2^{\Delta^{k/6}})$.
\end{theorem}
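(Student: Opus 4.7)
The plan is to reuse the construction $\mathcal{J}_{\mu,k}$ and follow the same template as the previous theorem for $\ppe$, noting that a $\cppe$ output also encodes a simple path to the leader (just with both endpoint port labels per edge instead of only the outgoing ones). Specifically, I would set $\mu = \lceil \Delta/4 \rceil$ (so $\mu \geq 4$), note that each $G \in \mathcal{J}_{\mu,k}$ has maximum degree $4\mu \in O(\Delta)$ and, by Lemma \ref{lem:PPECPPEindexk}, has $\psi_{\cppe}(G) = k$. For contradiction, assume there is an algorithm $\cA$ solving $\cppe$ in $k$ rounds on $\mathcal{J}_{\mu,k}$ using advice of size at most $2^{(4\mu)^{k/6}}$.

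Partition $\mathcal{J}_{\mu,k}$ into $\mathcal{J}^{left}_{\mu,k}$ and $\mathcal{J}^{right}_{\mu,k}$ according to whether the leader elected by $\cA$ lies in a gadget $\widehat{H_i}$ with $i < 2^{z-1}$ or $i \geq 2^{z-1}$. Without loss of generality assume $|\mathcal{J}^{right}_{\mu,k}| \geq |\mathcal{J}_{\mu,k}|/2$. Exactly the same counting argument as in the $\ppe$ theorem — using Fact \ref{fact:JClassSize} together with the inequalities $(4\mu)^{k/6} \leq \mu^{k/3} < \mu^{\lfloor k/2 \rfloor} - 2 \leq z - 2$ and $z \geq 64$ (from $\mu \geq 4$, $k \geq 6$) — shows that $2^{1+2^{(4\mu)^{k/6}}} < 2^{2^{z-1}-1} \leq |\mathcal{J}^{right}_{\mu,k}|$. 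By the Pigeonhole Principle, two distinct graphs $J_\alpha, J_\beta \in \mathcal{J}^{right}_{\mu,k}$ receive the same advice from the oracle.

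For each $Y \in \{\alpha,\beta\}$ let $v_Y$ denote the node $w_{1,1}$ in component $H_L$ of gadget $\widehat{H_0}$ of $J_Y$. By part \ref{part1:valphavbeta} of Lemma \ref{lem:valphavbeta}, $\cB^{k}(v_\alpha)$ in $J_\alpha$ equals $\cB^{k}(v_\beta)$ in $J_\beta$; since the advice supplied to both nodes is identical, $v_\alpha$ and $v_\beta$ must produce the same output sequence $P(v_Y) = (p_1,q_1,\dots,p_t,q_t)$ in the two executions. The key point specific to $\cppe$ is that, by definition, this sequence determines a unique simple path $P^{*}(v_Y)$ in $J_Y$ starting at $v_Y$ and ending at the leader; in particular the outgoing-port subsequence $\sigma = (p_1,p_2,\dots,p_t)$ is a fixed port sequence that traces a simple path from $v_Y$ to the leader in each of $J_\alpha$ and $J_\beta$. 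Since both graphs lie in $\mathcal{J}^{right}_{\mu,k}$, both of these simple paths contain a node of some gadget with index $\geq 2^{z-1}$, and hence in particular contain a node of gadget $\widehat{H_{2^{z-1}}}$ (any simple path from $v_\alpha$ reaching a gadget on the right half must traverse $\widehat{H_{2^{z-1}}}$, by the chain structure from Part 4 of the construction).

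This directly contradicts part \ref{part2:valphavbeta} of Lemma \ref{lem:valphavbeta}, which guarantees that no single port sequence $\sigma$ can trace a simple path reaching $\widehat{H_{2^{z-1}}}$ in both $J_\alpha$ and $J_\beta$. The main obstacle, if any, is to justify cleanly the step that the outgoing-port subsequence of a $\cppe$ output is precisely the kind of port sequence to which Lemma \ref{lem:valphavbeta} applies; this is immediate from the formal definition of $\cppe$, but worth stating explicitly so that the reduction to the previous theorem's argument is transparent. The rest of the argument is a verbatim adaptation of the proof for $\ppe$, as already anticipated by the remark preceding the statement.
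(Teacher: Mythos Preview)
Your proposal is correct and follows exactly the paper's approach: the paper itself simply states that ``the exact same proof applies'' to $\cppe$ and gives no separate argument. Your added remark that the outgoing-port subsequence $(p_1,\dots,p_t)$ of the $\cppe$ output is precisely the port sequence to which Lemma~\ref{lem:valphavbeta} applies is the one clarification needed to make the reduction transparent, and it is immediate from the definition of $\cppe$.
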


\section{Conclusion}

We showed that the size of advice required to accomplish the weakest version of leader election in minimum time is exponentially smaller than that needed for any of the strong versions. A natural open question is whether, for all strong versions,
the sizes of advice required to accomplish these tasks in minimum time differ only polynomially, or if there are also exponential gaps between some of them. Another open problem is whether the same relations between the four studied ``shades of leader election'' hold if we allocate to these tasks some other amount of time that is larger than the strict minimum.

\bibliographystyle{plain}


\end{document}